\documentclass[11pt]{article}
\usepackage{fullpage}
\usepackage{amsmath,amsfonts,amsthm,mathrsfs,mathpazo,xspace,hyperref,graphicx,bm}
\usepackage{needspace}
\usepackage{endnotes}
\usepackage{mathrsfs}

\newtheorem{theorem}{Theorem}[section]

\newtheorem{proposition}[theorem]{Proposition}
\newtheorem{lemma}[theorem]{Lemma}

\newtheorem{claim}[theorem]{Claim}
\newtheorem{fact}[theorem]{Fact}

\newtheorem{definition}[theorem]{Definition}

\theoremstyle{remark}
\newtheorem{example}[theorem]{Example}

\newenvironment{topbotframe} 
   {\needspace{2\baselineskip}\noindent\hrulefill\vspace{-0.1cm}}
   {\vskip -0.3cm\noindent\hrulefill\medskip\needspace{-2\baselineskip}}

\newcommand{\beq}{\begin{equation}}
\newcommand{\eeq}{\end{equation}}

\newcommand{\C}{\ensuremath{\mathbb{C}}}

\newcommand{\K}{\ensuremath{\mathbb{K}}}

\newcommand{\R}{\ensuremath{\mathbb{R}}}

\renewcommand{\Re}{\mathrm{Re}}
\renewcommand{\Im}{\mathrm{Im}}

\newcommand{\ket}[1]{|#1\rangle}
\newcommand{\bra}[1]{\langle#1|}


\newcommand{\Tr}{\mbox{\rm Tr}}
\newcommand{\Id}{\ensuremath{\mathrm{Id}}} 
\newcommand{\linspan}{{\rm span}}

\newcommand{\HH}{\mathcal{H}}

\newcommand{\HA}{\mathcal{H}_A}
\newcommand{\HC}{\mathcal{H}_C}

\newcommand{\HB}{\mathcal{H}_B}

\newcommand{\eps}{\varepsilon}
\DeclareMathOperator{\poly}{poly}

\newcommand{\CHSH}{\textsc{CHSH}}
\newcommand{\HI}{\textsc{H}}

\newcommand{\setft}[1]{\mathrm{#1}}
\newcommand{\lin}[1]{\setft{L}\left(#1\right)}
\newcommand{\herm}[1]{\setft{H}\left(#1\right)}

\newcommand{\vecm}[2]{\overrightarrow{\setft{Mat}}_{#1}\left(#2\right)}

\newcommand{\matv}[2]{\overrightarrow{\setft{Mat}}_{#1}\left(#2\right)}

\newcommand{\obs}[1]{\setft{Obs}\left(#1\right)}

\newcommand{\nr}[1]{\left\|#1\right\|_\Psi}
\newcommand{\nrb}[1]{\big\|#1\big\|_\Psi}

\newcommand{\bias}{\omega}

\newcommand{\biasos}{\bias^{os}}
\newcommand{\biasnc}{\bias^{nc}}
\newcommand{\biasme}{\bias^{me}}
\newcommand{\biascom}{\bias^{sdp}}

\newcommand{\biasc}{\bias^{\C}}
\newcommand{\biasrk}{\bias^{rk1}}


\newcommand{\me}{\Psi^{me}}
\newcommand{\emb}{\Gamma}


\newif\ifnotes\notesfalse

\ifnotes
\usepackage{color}
\definecolor{mygrey}{gray}{0.50}
\newcommand{\notename}[2]{{\textcolor{mygrey}{\footnotesize{\bf (#1:} {#2}{\bf ) }}}}
\newcommand{\noteswarning}{{\begin{center} {\Large WARNING: NOTES ON}\endnote{Warning: notes on}\end{center}}}
\newcommand{\notesendofpaper}{{\theendnotes}}
\newcommand{\pnote}[1]{{\endnote{#1}}}

\else

\newcommand{\notename}[2]{{}}
\newcommand{\noteswarning}{{}}
\newcommand{\notesendofpaper}{}
\newcommand{\pnote}[1]{}

\fi

\newcommand{\tnote}[1]{{\notename{Thomas}{#1}}}
\newcommand{\onote}[1]{{\notename{Oded}{#1}}}

\begin{document}

\title{\bf Quantum XOR Games}
\author{Oded Regev \footnote{CNRS, D{\'e}partement d'Informatique, {\'E}cole normale sup{\'e}rieure, Paris, and
Blavatnik School of Computer Science, Tel Aviv University.
Supported by a European Research Council (ERC) Starting Grant.
   } \and Thomas Vidick \footnote{Computer Science and Artificial Intelligence Laboratory, Massachusetts Institute of Technology. Supported by the National Science Foundation under Grant No. 0844626.}}
\date{}
\maketitle

\noteswarning

\vspace*{-0.5em}
\begin{abstract}
We introduce quantum XOR games, a model of two-player one-round games that extends the model of XOR games by allowing
the referee's questions to the players to be quantum states. 
We give examples showing that quantum XOR games exhibit a wide range of behaviors that are known not to exist for standard XOR games,
such as cases in which the use of entanglement leads to an arbitrarily large advantage over the use of no entanglement. By invoking two deep extensions of Grothendieck's inequality, we present an efficient algorithm that gives a constant-factor approximation to the best performance players can obtain in a given game, both in case they have no shared entanglement and in case they share unlimited entanglement. As a byproduct of the algorithm we prove some additional interesting properties of quantum XOR games, such as the fact that sharing a maximally entangled state of arbitrary dimension gives only a small advantage over having no entanglement at all.

\end{abstract}

\onote{to check before submission: using ``bias" consistently and not ``value"; also ``rank-one" not ``rank one"}

\onote{remove vspace before abstract when not needed}

\tnote{Also: check naming of games, and that we never use ``partial transpose'' anymore}

\onote{check bibliography; BrietVidick will be published at some point; so is Buhrman et al.}

\onote{check for ``example" environments whose bottom line is orphaned}

\section{Introduction}\label{sec:intro}

Two-player games play a central role in both computational complexity and quantum information theory. In the former, they crucially appear in major developments such as interactive proof systems~\cite{BenGolKilWig88STOC}, efficient proof verification~\cite{BabForLun91CC}, the PCP theorem~\cite{AroLunMotSudSze98JACM,AroSaf98JACM}, and hardness of approximation~\cite{FGLSS96}. In the latter, they are a powerful tool to quantify the power of entanglement~\cite{Bell:64a} and suggest experiments that demonstrate its nonlocal properties. 

In a two-player one-round game, a referee interacts with two players who cooperate in order to win the game. The referee chooses a pair of questions $(s,t)$ according to a publicly known distribution $\pi$ and sends one question to each player. The players are each requested to provide answers $a,b$ respectively. The players win or lose the game based on a public predicate $V(a,b|s,t)\in\{0,1\}$. Crucially, the players are not allowed to communicate between themselves.

The no-communication condition is traditionally interpreted as saying that the players can be modeled by a pair of functions $A:s \mapsto a$ and $B:t \mapsto b$.\footnote{One can extend this by allowing players to behave randomly, or even allow them access to a shared random string. However, a simple convexity argument shows this buys them no power, and optimal classical strategies are without loss of generality deterministic.} This implicit assumption, however, is challenged by quantum information theory. Indeed, quantum mechanics allows for an additional resource to be shared between the players: quantum entanglement. While shared entanglement does not allow for communication between the players, it has been known since the work of Bell~\cite{Bell:64a} that it can improve the players' success probability in such games. 
Examples of games in which entanglement allows for winning probabilities that are greater than what can be achieved by unentangled players are known in the quantum information literature as ``Bell inequality violations". Apart from their inherent theoretical interest, the existence of such games enables an experimental demonstration of the nonlocal nature of entanglement. 

\paragraph{XOR games.}
This paper is concerned with arguably the simplest type of two-player games, called XOR games. Those are two-player one-round games in which the referee's behavior is restricted: each player only provides him with a one-bit answer, and he is constrained to make his accept/reject decision based on the parity of the two bits alone. 

In the context of XOR games it is customary to quantify the players' success through their \emph{bias}, defined as twice  the difference between the players' success probability and their success probability if they answered all questions randomly. Optimizing the bias over players restricted to sharing a certain type of nonlocal resource leads to different quantities of interest. The \emph{unentangled bias} $\bias(G)$ corresponds to the largest possible bias achievable by players restricted to not using any entanglement at all. The \emph{entangled bias} $\bias^*(G)$ corresponds to players who may share an arbitrary entangled state. In addition, it will be interesting to consider the \emph{maximally entangled bias}, $\biasme(G)$, which corresponds to players restricted to sharing a maximally entangled state of arbitrary dimension. 

XOR games were first introduced in quantum information theory by Cleve et al.~\cite{CHTW04}, although they already appeared implicitly in the works of Bell~\cite{Bell:64a} and Clauser et al.~\cite{ClauserHSH:1969}. The first systematic study of such games from a mathematical point of view was undertaken by Tsirelson in the 80s. Tsirelson's key observation~\cite{Tsirelson:80a} is that the entangled bias can be exactly reformulated as a simple optimization problem over inner products of vectors in a space of bounded dimension (depending only on the size of the game). As observed in~\cite{CHTW04}, an immediate corollary of this characterization is that the entangled bias $\bias^*(G)$ can be computed efficiently using semidefinite programming techniques. An additional consequence, also due to Tsirelson, is that one can always achieve the optimum bias using a maximally entangled state, i.e., $\biasme(G) = \bias^*(G)$ for all XOR games $G$, and moreover, a maximally entangled state of relatively small dimension suffices. 

Using this reformulation, Tsirelson established a deep connection between XOR games and \emph{Grothendieck's inequality}~\cite{Gro53}, a fundamental inequality in Banach space theory. As a consequence, he showed that players using entanglement in an XOR game could only achieve a constant factor advantage over unentangled players --- the constant being \emph{Grothendieck's constant}. In other words, $\bias^*(G)$ is always at most a constant factor larger than $\bias(G)$. This established one of the first systematic limitations on the strength of quantum entanglement, and bounds on the entangled bias are now known as \emph{Tsirelson inequalities}. 

XOR games have also been studied extensively in theoretical computer science. In contrast to the entangled setting, H{\aa}stad~\cite{Hastad01} showed that it is NP-hard to approximate, within a small constant, the unentangled bias $\bias(G)$ of an XOR game $G$. From that result he deduced the NP-hardness of approximating the MAXCUT problem to within a constant factor (among others). 
The connection between XOR games and Grothendieck's inequality discovered by Tsirelson has also found applications in this context. For example, Alon and Naor~\cite{AN06} use the inequality, together with the observation mentioned above that one side of it can be efficiently computed, to obtain a constant-factor approximation algorithm for the problem of computing the cut-norm of a matrix. 

\subsection{Our results}\label{sec:ourresults}

As described above, XOR games are quite well understood, an understanding to a large extent due to their elegant connection to semidefinite programming and Grothendieck's inequality. Unfortunately, their simple structure also means that the kind of behaviors they can exhibit are somewhat limited. For instance, as mentioned above, entanglement can only provide a relatively modest advantage over unentangled strategies. Moreover, one can always achieve the optimum winning probability using
a maximally entangled state of relatively small dimension.
  
A considerable amount of work in recent years has tried to identify games that exhibit a richer behavior (e.g.,~\cite{LTW08,KempeRT08,junge&palazuelos:largeviolation,PerezGarcia09arxiv,PalV10,BuhrmanRSW11,Regev11}). However, with the exception of the more recent~\cite{cooney}, which we discuss in more detail below (and perhaps~\cite{KempeRT08}), most of these papers focus on the analysis of \emph{specific} games, for which results such as large quantum-classical gaps are reported. Indeed, at this point we only have few tools (e.g.,~\cite{DLTW08}) that cut across large families of games and would enable one to prove general structural results.

\paragraph{Quantum XOR games.} In this paper we extend the framework of XOR games by allowing the referee's questions to the players to be quantum states. In a \emph{quantum} XOR game --- so named to differentiate them from the \emph{classical} XOR games discussed above --- the referee first chooses an index $i$ according to a public distribution $\pi$, based on which he prepares a bipartite state $\rho_i$ (whose description is also public). He sends one half of $\rho_i$ to the first player, and the other half to the other player. As in a classical XOR game, the players are required to reply with a single classical bit each, and the referee is restricted to base his accept/reject decision solely on the parity of the two bits he receives as answers. 

By considering the states $\rho_{s,t} = \ket{s}\bra{s}\otimes \ket{t}\bra{t}$, one immediately sees that quantum XOR games contain classical XOR games as a special case. Our results demonstrate that quantum XOR games are a fruitful generalization in two complementary ways. First, by providing examples of games exhibiting properties of entanglement that could not be observed in the context of classical XOR games, we show that quantum XOR games are a \emph{richer} model. Second, we show that, in spite of this greater generality, quantum XOR games remain a \emph{tractable} model. In particular, we give an efficient approximation algorithm for the players' maximum success probability, something that is known to exist only in a handful of other settings~\cite{CHTW04,KempeRT08} --- none of which is known to be as rich as that of quantum XOR games. These two aspects, we believe, make quantum XOR games a very attractive class of games to study.

\paragraph{A first example: the family $(T_n)$.} For any $n\geq 1$, let $T_n$ be the quantum XOR game in which the players are sent either of the two states
$$ \ket{\psi_0} = \frac{1}{\sqrt{2}}\ket{0}\ket{0}+ \frac{1}{\sqrt{2n}}\sum_{i=1}^n \ket{i}\ket{i}\qquad\text{and}\qquad \ket{\psi_1} = \frac{1}{\sqrt{2}}\ket{0}\ket{0}- \frac{1}{\sqrt{2n}}\sum_{i=1}^n \ket{i}\ket{i},$$
each chosen with probability $1/2$ by the referee, and are asked to produce answers with even parity in case the state is $\ket{\psi_0}$, and odd parity in case it is $\ket{\psi_1}$.
Even though the two states are orthogonal, it is not a priori clear how well the players can perform in this game: can $\ket{\psi_0}$ and $\ket{\psi_1}$ be \emph{locally} distinguished? Interestingly, the answer to this question crucially depends on the resources allowed for the players. The maximum bias achievable by players who do not share any entanglement is exactly $\bias(T_n) = 1/\sqrt{n}$. In fact, even players allowed to share an arbitrary supply of EPR pairs cannot do better: $\biasme(T_n)=1/\sqrt{n}$. Surprisingly, in case the players have access to an unrestricted amount of entanglement, we have $\bias^*(T_n)=1$: an unbounded advantage over the unentangled case. 
This is in stark contrast with the setting of classical XOR games, for which, as we already saw earlier, entangled players can only achieve a bounded advantage over unentangled players. Finally, we will also show that the optimal bias $\bias^*(T_n)=1$ can only be achieved in the limit of infinite entanglement, whereas for any classical XOR game there is an optimal strategy using an entangled state of bounded dimension depending only on the size of the game.

\paragraph{Algorithms.} The example of the games $(T_n)$ as well as further examples below demonstrate the richness of the model of quantum XOR games. Remarkably, despite encompassing such a wide variety of behaviors, quantum XOR games remain a tractable model, as is demonstrated by our main theorem.

\begin{theorem}\label{thm:algorithm}
There exists a polynomial-time algorithm which, given as input an explicit description of a quantum XOR game $G$, outputs two numbers $\biasnc(G)$ and $\biasos(G)$ such that
\begin{equation}\label{eq:main-bounds}
\bias(G) \leq \biasme(G) \leq \biasnc(G) \leq 2\sqrt{2} \,\bias(G) \qquad\text{and}\qquad  \bias^*(G) \leq \biasos(G) \leq 2 \,\bias^*(G).
\end{equation}
\end{theorem}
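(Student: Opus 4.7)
The plan is to exhibit two semidefinite relaxations $\biasnc(G)$ and $\biasos(G)$ that can be computed in polynomial time, and then prove that each satisfies the sandwich inequalities in \eqref{eq:main-bounds}. The first step is to rewrite each bias as an operator optimization problem. The unentangled bias takes the form $\bias(G) = \max \Tr[G(A\otimes B)]$ over self-adjoint contractions $A \in \obs{\HA}$, $B \in \obs{\HB}$. For the entangled bias, after purification and the standard identification of $\ket\psi \in \HAp\otimes\HBp$ with an operator $X$ via the vec map, one can absorb $X$ into the observables and obtain a bilinear optimization of the shape $\bias^*(G) = \sup \Tr[G\,\Psi(\tilde A,\tilde B)]$ subject to completely bounded norm constraints on the operator variables; $\biasme(G)$ then arises as the same expression restricted to the case $X \propto I$.

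The second step is to define the two SDP relaxations. I would let $\biasos(G)$ be the natural ``operator-space'' relaxation of the entangled formulation, in which the operator-norm constraints on $\tilde A\tilde A^*$ and $\tilde A^*\tilde A$ (and the analogues for $\tilde B$) are replaced by positive semidefinite liftings. The resulting problem is a polynomial-size SDP, and $\bias^*(G) \le \biasos(G)$ holds by construction. In parallel, $\biasnc(G)$ will be a strictly tighter SDP obtained by adding scalar (trace-type) constraints that encode the restriction $X \propto I$; this yields $\biasme(G) \le \biasnc(G)$, and a fortiori $\bias(G) \le \biasnc(G)$.

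The heart of the proof is the rounding. For the entangled side I would apply the operator-space Grothendieck inequality of Pisier--Shlyakhtenko, in the sharpened form of Haagerup--Musat giving constant $2$: from an optimal primal solution of $\biasos$ one extracts a pure state $\ket\psi$ and Hermitian contractions $\tilde A,\tilde B$ on extended spaces $\HA\otimes\HAp, \HB\otimes\HBp$ whose bias is at least $\biasos(G)/2$, showing $\biasos(G) \le 2\bias^*(G)$. For the unentangled side I would invoke Haagerup's noncommutative Grothendieck inequality: the SDP vectors from $\biasnc$ yield a factorization through a common Hilbert space, and a Gaussian rounding applied to the Hermitian part of the factorization produces honest contractions $A \in \obs{\HA}, B \in \obs{\HB}$ attaining bias at least $\biasnc(G)/(2\sqrt 2)$.

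The main obstacle is this last step: each of the two Grothendieck-type extensions must be carefully adapted to the quantum game setting. For $\biasos$ one must re-interpret the completely bounded norm that appears in Pisier--Shlyakhtenko as precisely the entangled bias of $G$, so that the inequality delivers a concrete state-plus-observables strategy rather than an abstract existence statement. For $\biasnc$ the noncommutative factorization a priori lives in an abstract Hilbert space of unbounded dimension, and the subtle point is showing that the Gaussian rounding produces \emph{Hermitian} contractions on the original spaces $\HA, \HB$ (not on an enlargement) without losing more than the factor $2\sqrt 2$. Once these two rounding arguments are in place, the polynomial-time algorithm simply consists of solving either SDP by standard interior-point methods and then running the explicit rounding procedure.
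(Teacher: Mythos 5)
Your proposal follows essentially the same route as the paper: define $\biasnc$ and $\biasos$ as polynomial-size SDP relaxations in which observables are replaced by vector-valued matrices subject to operator-norm constraints, obtain the lower bounds $\biasme\le\biasnc$ and $\bias^*\le\biasos$ by weighting the blocks of the players' observables with the Schmidt coefficients of the shared state (uniform weights, i.e.\ $X\propto I$, being exactly what makes all four constraints of $\biasnc$ hold simultaneously), and obtain the upper bounds from Haagerup's non-commutative Grothendieck inequality and the Pisier--Shlyakhtenko/Haagerup--Musat operator space Grothendieck inequality after reinterpreting the (jointly) completely bounded norm as $\bias^*(G)$. One correction on the $2\sqrt{2}$ factor: the non-commutative inequality only yields $\biasnc\le 2\,\bias^\C$, where $\bias^\C$ ranges over arbitrary (non-Hermitian) complex contractions on the original message spaces, and the remaining $\sqrt{2}$ is not obtained by passing to Hermitian parts of the rounded contractions --- that step can cost a full factor $2$, as the optimal complex CHSH strategy shows --- but by diagonalizing the optimal unitaries and applying Krivine's theorem that the two-dimensional real Grothendieck constant equals $\sqrt{2}$ to the eigenvalue phases, as in Claim~\ref{claim:obs_value}.
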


Thus, despite the fact that the entangled bias and the unentangled bias can differ greatly, both have nontrivial efficient approximations. We are not aware of any other model with this property. Moreover, both $\biasnc$ and $\biasos$ are expressible as the optimum of a polynomial-sized semidefinite program.
This property might aid in finding games that exhibit large separations between entangled and unentangled biases, say for the purposes of experimental demonstrations: given a candidate game $G$, run the algorithm above to approximate the gap between the two biases. 
We note that in addition to outputting the numerical values, the algorithm can also output descriptions of strategies satisfying the last inequality in each chain of inequalities. For details, see the formal statement of the main theorem in Theorems~\ref{thm:unentangled-bias} and~\ref{thm:entangled-bias}.

We emphasize that as is often the case, the existence of an efficient algorithm for a non-trivial problem allows one to derive surprising non-algorithmic conclusions. For instance, as an immediate corollary of the first sequence of inequalities stated in the theorem we obtain that, for any quantum XOR game $G$, $\biasme(G)\leq 2\sqrt{2} \bias(G)$: maximally entangled states only provide a bounded advantage over no entanglement at all. This in contrast with the general entangled case: the family $(T_n)$ shows that in general $\bias^*(G)$ can be arbitrarily larger than $\bias(G)$. Hence in this setting maximally entangled states can be arbitrarily far from an optimal resource. Such a behavior was known before for specific games \cite{junge&palazuelos:largeviolation,Regev11} but not for such a wide family of games. Another easy corollary (using the formal statement in Theorem~\ref{thm:unentangled-bias} and the remark after Definition~\ref{def:bias-me}) is that there is always an entangled strategy using just one EPR pair that achieves bias $\biasme(G)/2$, which is slightly better than the bias $\biasme(G)/(2\sqrt{2})$ we know can be achieved using no entanglement at all. Regarding the second sequence of inequalities in~\eqref{eq:main-bounds}, as a by-product of their proof we obtain that for any fixed $\eps$, there is an entangled strategy achieving bias that is at least $1/(2+\eps)$ of the optimum, and using only $O(\log n)$ qubits of entanglement, where $n$ is the dimension of each player's question. (See Theorem~\ref{thm:entangled-bias} for details.) 
We do not know how to prove any of the facts mentioned above without going through the a priori unrelated quantities $\biasnc$ and $\biasos$.

\paragraph{Techniques: Grothendieck inequalities.}  
Our main result, Theorem~\ref{thm:algorithm}, is proved by establishing a strong connection between quantum XOR games and two deep extensions of Grothendieck's inequality.
The first extension, which is used to prove the first sequence of inequalities surrounding $\biasnc$, is known as the \emph{non-commutative Grothendieck inequality}. The inequality, already conjectured by Grothendieck~\cite{Gro53}, was proved by Pisier~\cite{Pisier78NCGT} and then in a more general form by Haagerup~\cite{Haagerup85NCGT}. The second one, which is used to prove the second sequence of inequalities surrounding $\biasos$, is known as the \emph{operator space Grothendieck inequality} and was proved by Pisier and Shlyakhtenko~\cite{PS02OSGT} and by Haagerup and Musat~\cite{HM08}.\footnote{See also~\cite{RegevV12b} for a recent alternative proof inspired by quantum information theory, and more specifically the role played by the embezzlement state in the analysis of the entangled bias of the family $(T_n)$.}

Most of the effort in establishing our main theorem goes into interpreting these Grothendieck inequalities as statements relating biases (unentangled bias in the former case, and entangled bias in the latter) to semidefinite programs ($\biasnc$ in the former case, and $\biasos$ in the latter case), for which efficient algorithms are known. Our results give the first application of these inequalities to quantum information theory.\footnote{See however ``related work'' below for a discussion of concurrent work by Cooney et al.~\cite{cooney}, who independently found another application of the operator space Grothendieck inequality.} 
Much of the mathematical literature in this area can be intimidating at first (a good starting point is a recent survey by Pisier~\cite{PisierGT}), and we hope that our self-contained presentation will contribute to promoting those inequalities as powerful tools in complexity theory and quantum information theory, and will lead to further applications.

\pnote{To any quantum XOR game $G$ we may associate a bilinear form $\varphi_G : M_n(\C)\times M_n(\C)\to \C$ which takes as input a pair $(A,B)$ of observables (or, more generally, complex matrices) on the players' message spaces, and outputs the bias that (unentangled) players applying $A,B$ would obtain in $G$. With this definition the norm of $\phi_G$, by definition the supremum of $\varphi_G(A,B)$ over all choices of $A,B$ with operator norm at most $1$, coincides with the complex bias of $G$: $\|\varphi_G\| = \bias^\C(G)$. This simple equation is interesting because it relates a fundamental operational quantity associated to the game $G$ with a natural mathematical quantity. Other norms associated to a bilinear form $\varphi$ are of mathematical interest. In particular, its \emph{completely bounded} norm $\|\varphi\|_{cb}$ is defined as the supremum over all integers $d$ of the norm of the \emph{amplification} $\varphi\otimes \Id_d : M_{nd}(\C) \times M_{nd}(\C) \to \C$. One can verify that this norm exactly correspond to the entangled bias $\bias^*(G)$: the connection between the game $G$ and the bilinear form $\varphi_G$ extends to the setting of entangled players. In fact, a third norm on $\varphi$, its \emph{tracially bounded norm} $\|\varphi_G\|_{tb}$, also arises naturally in some contexts, and turns out to equal the maximally entangled bias $\biasnc(G)$.

Abstractly, different variants of Grothendieck's inequality relate either of the norms introduced above to other quantities associated to a given bilinear form $\varphi:E\times F \to \C$. Defining these quantities here would take us too far, and we refer to the paper for more details. The key point for us is that, while in general $\|\varphi_G\|$ and $\|\varphi_G\|_{cb}$ may be quantities that are hard to compute, it turns out that the quantity to which Grothendieck's inequality relates each of them is efficiently computable. It is this observation which is key to Theorem~\ref{thm:algorithm}: in our language, the two variants of Grothendieck's inequality we consider~\cite{Haagerup85NCGT,HM08} state that either of the norms above is within a factor $2$ of an efficiently computable quantity.}

\paragraph{Families of quantum XOR games.}
Our second main contribution consists in introducing and analyzing in detail two specific families of quantum XOR games. These families simultaneously demonstrate the flexibility of the model of quantum XOR games and illustrate the inequalities in Theorem~\ref{thm:algorithm}.

The first family is the family of games $(T_n)$ already mentioned above.\footnote{This family of games was suggested to us by David P\'erez-Garc\'ia~\cite{david:transpose}, and can be seen to correspond to a certain natural map in operator space theory, namely the identity map $id:\,R_n\to C_n$, where $R_n$ is the ``row'' operator space and $C_n$ the ``column'' operator space. Some of its properties mentioned below are related to the fact that it has norm $1$ but ``completely bounded'' norm $\sqrt{n}$.} This family demonstrates the possibility of obtaining an unbounded gap between the entangled and unentangled settings, implying in particular that the two sequences of inequalities in~\eqref{eq:main-bounds} cannot be merged into a single one. 
The next theorem summarizes the properties of this family. (The definition of $\biasc$ will be given later.)

\begin{theorem}\label{thm:ptgamemain}
Let $n\geq 1$. Then
$$\bias(T_n) \,=\, \biasc(T_n) \,=\, \biasme(T_n)\,=\, \biasnc(T_n) \,=\,\frac{1}{\sqrt{n}},$$
and
$$ \bias^*(T_n) \,=\, \biasos(T_n) \,=\, 1.$$
Moreover, the perfect winning probability $\bias^*(T_n)=1$ is only achieved in the limit of infinite entanglement: for any fixed $n \ge 2$, players sharing an arbitrary finite-dimensional state cannot win the game $T_n$ with certainty. 
\end{theorem}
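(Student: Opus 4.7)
The plan is to prove the theorem in three stages: (i)~the four equalities at value $1/\sqrt{n}$, (ii)~the two equalities at value $1$, and (iii)~the impossibility of attaining $\bias^*(T_n)=1$ with finite-dimensional entanglement. For stage (i), I would first compute
\[
  \ket{\psi_0}\bra{\psi_0}-\ket{\psi_1}\bra{\psi_1}
  =\frac{1}{\sqrt{n}}\sum_{i=1}^n\bigl(\ket{00}\bra{ii}+\ket{ii}\bra{00}\bigr),
\]
so that the bias of an unentangled strategy with Hermitian contractions $A,B$ on $\C^{n+1}$ equals $\frac{1}{\sqrt{n}}\sum_{i=1}^n\Re(A_{0,i}B_{0,i})$. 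Since the $0$-th row of any Hermitian contraction satisfies $\sum_{i=0}^n|A_{0,i}|^2=(A^2)_{00}\le 1$, Cauchy--Schwarz immediately gives $\bias(T_n)\le 1/\sqrt{n}$; this is matched by $A=B=\frac{1}{\sqrt{n}}\sum_{i=1}^n(\ket{0}\bra{i}+\ket{i}\bra{0})$, padded with $\pm 1$ eigenvalues on the orthogonal subspace to produce a genuine $\pm 1$-observable. The same computation for complex contractions yields $\biasc(T_n)\le 1/\sqrt{n}$. For $\biasme$, plugging in the maximally entangled state $\ket{\phi_d}$ as ancilla and using $\bra{\phi_d}(X\otimes Y)\ket{\phi_d}=\frac{1}{d}\Tr(XY^\top)$, the bias rewrites as $\frac{1}{\sqrt{n}\,d}\sum_{i=1}^n\Re\Tr([A]_{i,0}[B]_{i,0}^\top)$ for the $d\times d$ message-index blocks $[A]_{i,0}$; two applications of Cauchy--Schwarz (inside each trace via Hilbert--Schmidt norms, then across $i$) together with $\sum_{i=1}^n\|[A]_{i,0}\|_2^2\le d$ again give $1/\sqrt{n}$. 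The analogous bound at the level of the SDP defining $\biasnc$ yields $\biasnc(T_n)\le 1/\sqrt{n}$, and combined with the chain $\bias\le\biasme\le\biasnc$ from Theorem~\ref{thm:algorithm} this collapses all four quantities to $1/\sqrt{n}$.

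For stage (ii), note that any strategy attaining bias $1$ must, since $A\otimes B$ is a Hermitian contraction, have each $\ket{\psi_j}\ket{\Gamma}$ as a $(-1)^j$-eigenvector; equivalently, $A\otimes B$ exchanges $\ket{00}\ket{\Gamma}$ and $\ket{\phi_n'}\ket{\Gamma}$, where $\ket{\phi_n'}=\frac{1}{\sqrt{n}}\sum_{i=1}^n\ket{ii}$. To realize this exchange approximately, I would pre-share a van Dam--Hayden embezzling state $\ket{\mu_D}$ as ancilla and let Alice and Bob each apply the swap between their message register and a designated $(n{+}1)$-dimensional subsystem of the ancilla that has been pre-loaded via embezzlement with $\ket{0}$. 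Since the embezzler is restored up to error $O(1/\log D)$ by the local unitaries, the resulting $\pm 1$-observable strategy attains bias $1-O(1/\log D)\to 1$ as $D\to\infty$, giving $\bias^*(T_n)=1$; the matching upper bound is trivial. The equality $\biasos(T_n)=1$ then combines $\bias^*\le\biasos$ (giving $\ge 1$) with a direct computation exhibiting a dual-feasible solution of the SDP defining $\biasos$ at value $1$, made explicit by the highly symmetric structure of $T_n$.

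Stage (iii) is the hardest. Suppose for contradiction that $(A,B,\ket{\Gamma})$ attains bias $1$ with $\ket{\Gamma}$ finite-dimensional. The reduced state of $\ket{00}\ket{\Gamma}$ on Alice equals $\ket{0}\bra{0}\otimes\rho_{A_a}$, of rank $r:=\mathrm{rank}(\rho_{A_a})\ge 1$, while that of $\ket{\phi_n'}\ket{\Gamma}$ equals $\frac{1}{n}\sum_{i=1}^n\ket{i}\bra{i}\otimes\rho_{A_a}$, of rank $nr$. The subtle point is that $A,B$ are only contractions, not unitaries; however, the saturation condition $\bra{\Psi}(A\otimes B)\ket{\Psi}=1$ forces $A^2$ and $B^2$ to act as the identity on the supports of the corresponding reduced states, so $A$ and $B$ restricted to those supports are isometries. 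Applying such isometries locally to $\ket{00}\ket{\Gamma}$ cannot increase its Schmidt rank across the Alice/Bob cut, so the Schmidt rank of $\ket{\phi_n'}\ket{\Gamma}$ is at most $r$; this forces $nr\le r$, hence $n\le 1$, contradicting $n\ge 2$. The main conceptual obstacle is identifying this rank obstruction and verifying that it holds even when the local operators are only contractions; the creative step elsewhere is the embezzlement construction of stage~(ii).
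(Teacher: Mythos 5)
Your stages (i) and (ii) follow essentially the same route as the paper: the explicit observable $A=B=\ket{0}\bra{u}+\ket{u}\bra{0}$ with $\ket{u}=n^{-1/2}\sum_i\ket{i}$ is exactly the paper's strategy, and the Cauchy--Schwarz bound carried out at the level of the SDP constraints gives $\biasnc(T_n)\le 1/\sqrt{n}$, which collapses the first chain via $\bias\le\bias^\C\le\biasnc$ and $\biasme\le\biasnc$. For $\bias^*(T_n)=1$ the embezzlement idea is also the paper's, but your description of the protocol is imprecise: a bare swap of the message register with an ancilla subsystem pre-loaded with $\ket{0}\ket{0}$ moves the unknown state into the ancilla without making the two hypotheses locally distinguishable. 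What is actually needed is a local unitary (the paper uses a cyclic shift controlled on the message not being $\ket{0}$) that maps $\ket{\Psi_n^{me}}\otimes\ket{\emb_d}$ to approximately $\ket{n{+}1}\ket{n{+}1}\otimes\ket{\emb_d}$ while fixing $\ket{00}\otimes\ket{\emb_d}$, thereby reducing $T_n$ to $T_1$. For $\biasos(T_n)\le 1$ you gesture at an unexhibited dual solution; the paper instead invokes the generic bound $\biasos(M)\le\|M\|_1\le 1$, which you could cite directly.

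Your stage (iii) is a genuinely different argument from the paper's, and it is correct for the statement as written. Saturation of the bias forces $(A\otimes B)\ket{00}\ket{\Gamma}=\pm\ket{\Psi_n^{me}}\ket{\Gamma}$, and Schmidt rank across the Alice/Bob cut is non-increasing under \emph{any} local linear maps (write the input in its Schmidt decomposition with $R$ terms; the image is a sum of $R$ product vectors), so your detour through "contractions restricted to supports are isometries" is unnecessary, though harmless. Since $\ket{00}\ket{\Gamma}$ has Schmidt rank $r$ and $\ket{\Psi_n^{me}}\ket{\Gamma}$ has Schmidt rank $nr$, you get $nr\le r$ and hence $n\le 1$. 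What this elementary route buys is a clean, self-contained proof of exact impossibility; what it loses relative to the paper's Claim~\ref{claim:pt-infinite} is the quantitative statement that bias $1-\eps$ already forces dimension $n^{\Omega(1/\sqrt{\eps})}$, which the paper obtains via the Fuchs--van de Graaf and Fannes inequalities following~\cite{LTW08}. The theorem as stated only requires the exact case, so your argument suffices.
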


The two sequences of equalities are proven by direct calculation. As a result, we determine optimal strategies both in the unentangled and entangled cases. These strategies are relatively simple and low-dimensional, and for moderate values of $n$ the game $T_n$  may provide a good candidate for the experimental demonstration of the nonlocality of entanglement.
The moreover part of the theorem is proved by observing that the games $T_n$ are closely related to the ``coherent state exchange'' game introduced in~\cite{LTW08}.

In analogy with the classical setting, and in light of the above theorem, one may expect that the inequalities $\biasme(G)\leq \biasnc(G)$ and $\bias^*(G)\leq \biasos(G)$ in~\eqref{eq:main-bounds} should, in fact, be equalities.\footnote{Since quantum XOR games generalize classical XOR games, we know that the inequality $\bias(G)\leq \biasnc(G)$ is \emph{not} always an equality.} Nevertheless, our second family of games, the games $(\HI_n)$, whose properties are summarized in the following theorem, shows that equalities do \emph{not} hold in general: there are games for which $\biasos$ is strictly greater than $\bias^*$, and also $\biasnc$ is strictly greater than $\biasme$. 

\begin{theorem}\label{thm:higamesummary}
There exists a family of games $(\HI_n)$ for which the following hold: 
$$\frac{2}{5}\,=\,\bias(\HI_1)\,=\,\bias^\C(\HI_1) \,<\, \biasme(\HI_1)\,\le\, \bias^*(\HI_1)  \,<\, \biasnc(\HI_1)\,=\,\biasos(\HI_1)\,=\,\frac{3}{5}$$
and for all $n\geq 1$
$$\bias(\HI_n)\,= \,\bias^\C(\HI_n)\,=\,\frac{n+1}{2n+1}\biasnc(\HI_n) .$$
\end{theorem}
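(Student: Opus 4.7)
The plan for Theorem~\ref{thm:higamesummary} is organized around computing each bias in turn and then establishing the two strict inequalities in the $n=1$ chain. I would first fix an explicit matrix representation of the game's bilinear form $\varphi_{\HI_n}$. The quantities $\biasnc(\HI_n)$ and $\biasos(\HI_n)$ are by definition optima of polynomial-sized semidefinite programs (Theorem~\ref{thm:algorithm}); the idea is to identify a symmetry group $G$ acting on the two question registers of $\HI_n$ and preserving the predicate. Averaging any optimal SDP solution over $G$ reduces both programs to the $G$-invariant subspace, which should be low-dimensional. In that reduced subspace the two SDPs should collapse to the same scalar optimization; a matching primal--dual pair then yields $\biasnc(\HI_n)=\biasos(\HI_n)$ and an explicit formula for the common value (equal to $3/5$ for $n=1$).

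The equality $\bias(\HI_n)=\bias^{\C}(\HI_n)$ should follow from the fact that the game is invariant under complex conjugation of the messages, so any complex observable can be replaced by its real/imaginary parts without reducing the bias. To obtain the precise ratio $\frac{n+1}{2n+1}\biasnc(\HI_n)$ I would combine (i) an explicit pair of sign-valued observables witnessing the lower bound with (ii) a matching upper bound. For the upper bound, I would restrict the optimization defining $\bias(\HI_n)$ to $\pm 1$-valued observables, use the symmetries of $\HI_n$ to reduce it to a small combinatorial problem, and enumerate its extreme points. The clean closed form of the ratio suggests that this combinatorial problem admits a tight linear-programming relaxation whose optimum factors as $\frac{n+1}{2n+1}$ times the SDP value for $\biasnc$.

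For the $n=1$ chain of strict inequalities, the bound $\bias^{\C}(\HI_1)<\biasme(\HI_1)$ is the more tractable one: it suffices to exhibit a single entangled strategy using a maximally entangled state of small dimension that achieves bias strictly greater than $2/5$. A natural candidate is obtained by quantizing the SDP solution witnessing $\biasnc(\HI_1)=3/5$, and then one verifies directly that the resulting strategy beats $2/5$. The main obstacle will be the strict inequality $\bias^{*}(\HI_1)<3/5$: since $\biasos(\HI_1)=3/5$, the SDP upper bound is not tight, so a sharper bound on the entangled bias must exploit additional structure beyond the relaxation. My plan here is to use Tsirelson's characterization of $\bias^{*}$ as an optimization over observables acting on two halves of a maximally entangled state, combined with a sum-of-squares argument: any observables $A,B$ achieving bias close to $3/5$ in $\HI_1$ must approximately satisfy certain algebraic relations, and one then shows that these relations cannot all be exactly satisfied, extracting a quantitative strict gap. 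Alternatively, one could attempt a finite-dimensional reduction analogous to the treatment of $(T_n)$ and then bound the entangled bias using a direct rigidity analysis of low-dimensional strategies; in either case this is where I expect the bulk of the technical work to lie.
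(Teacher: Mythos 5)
There is one genuine error and one under-appreciated difficulty in your plan, both concentrated in the hardest step, $\bias^*(\HI_1)<3/5$. First, you propose to "use Tsirelson's characterization of $\bias^*$ as an optimization over observables acting on two halves of a maximally entangled state." That reduction is only valid for \emph{classical} XOR games; for quantum XOR games $\biasme$ and $\bias^*$ can differ arbitrarily (this is exactly what the family $(T_n)$ demonstrates), so you cannot assume the shared state is maximally entangled. The paper instead symmetrizes the strategy (Claim~\ref{claim:hi-symmetry}: WLOG $A=B$, $\ket{\Psi}$ permutation-invariant with full-support reduced density $\rho$) and runs the whole argument with the state-dependent semi-norm $\nr{W}=\bra{\Psi}WW^\dagger\ket{\Psi}^{1/2}$. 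Your sum-of-squares/approximate-relations idea is otherwise the right one and matches the paper's Lemma~\ref{lem:hi-sdp}: near-optimality forces $A_{ii}\approx 0$, $A_{ij}\approx -A_{ij}^\dagger$, $\sum_j A_{ij}A_{ij}^\dagger\approx\Id$, plus an approximate commutation $A_{ij}\otimes\Id\approx\Id\otimes A_{ij}^\dagger$ coming from tightness of a Cauchy--Schwarz step, and these are incompatible with $(AA^\dagger)^2\le\Id$. But be careful to distinguish "no strategy achieves exactly $3/5$" (the easy exact-relations argument, Claim~\ref{claim:hi-sdp-exact}) from "$\sup<3/5$": the latter requires carrying quantitative $O(\delta^{c})$ error bounds through every step, and that stability analysis is where essentially all the work lies; your plan acknowledges it only in passing.

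On the remaining pieces your route diverges from the paper's and is in places not viable as stated. The equality $\bias(\HI_n)=\bias^\C(\HI_n)$ does \emph{not} follow from conjugation invariance of the game --- in general $\bias^\C$ can exceed $\bias$ by a factor $\sqrt2$ even for highly symmetric games (CHSH). The correct mechanism is that the Cauchy--Schwarz upper bound on $\bias^\C$ (using that $(A-A^T)/2$ has rank $\le 2$ and operator norm $\le 1$, hence Frobenius norm squared $\le 2$) is already matched by a Hermitian strategy such as $A=iC_1$, $B=-iC_1$. For $\biasnc(\HI_1)=\biasos(\HI_1)=3/5$ the paper does not need any symmetry reduction of the SDPs: it exhibits the explicit feasible point $\vec X=\vec Y=\tfrac1{\sqrt2}\sum_i e_i\otimes C_i$ and proves the matching upper bound on $\biasos$ by a short Cauchy--Schwarz computation; your group-averaging plan could work but is entirely unsubstantiated ("should collapse") and is more machinery than needed. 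For the general-$n$ ratio $\tfrac{n+1}{2n+1}$ the paper simply imports the computation from Haagerup--Itoh; your LP-relaxation sketch gives no indication of where that ratio comes from. Finally, your witness for $\bias^\C(\HI_1)<\biasme(\HI_1)$ is fine in spirit; the paper's explicit strategy (projecting onto the antisymmetric eigenspace of $M$ with a $3$-dimensional maximally entangled state) gives $5/9>2/5$.
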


This family of games is related to the CAR algebra. Up to an unimportant scaling, the game $\HI_1$ which plays a particularly important role above can be described concretely as follows. 
The referee first picks two distinct integers $j<k\in \{1,2,3\}$ uniformly at random. He then sends one of the two states $\frac{1}{\sqrt{2}}\big( \ket{j}\pm i\ket{k}\big)$, again uniformly at random, to each player (so each of the four possible combinations arises with probability $1/4$). The referee accepts the players' answers $a,b\in\{0,1\}$ if and only if $a\oplus b = 1$ in case they were both sent ``$+$'', or both ``$-$'', states, and $a\oplus b=0$ otherwise. 
We note that this family already appears in the literature, albeit in the language of operator spaces.  
It first appeared in~\cite{Blecher88}, and was later investigated in depth by Haagerup and Itoh~\cite{HI95}, who already
proved many of the statements in the above theorem. Our main contribution here is the bound $\bias^*(\HI_1)< 3/5$, which improves on the weaker bound $\biasme(\HI_1)  < 3/5$ already appearing in~\cite{HI95}. Moreover, their proof is based on the use of ultrafilters, and as such is non-explicit and relies on the axiom of choice; in contrast, our proof is more direct and quantitative. 

Finally, we briefly note that another interesting family of games, the games $(C_n)$, was introduced in~\cite[Section~5]{cooney} to show that the entangled bias of rank-one quantum games (see Section~\ref{sec:related} below and Section~\ref{sec:rank-one} for more details) does not obey a strong parallel repetition theorem. Translated to a quantum XOR game, $C_n$ essentially corresponds to the following game. The referee chooses a random integer $k\in\{1,\ldots,n\}$, and sends one of the two states $(\ket{0,k}\pm\ket{k,0})/\sqrt{2}$, each chosen with probability $1/2$, to the players. They should produce answers with even parity in case they were sent a ``$+$'' state, and odd in case it was a ``$-$'' state. Although we will not prove them here, the following equalities either follow from the results of~\cite[Section~5]{cooney}, or can be given a direct proof:
$$ \bias(C_n) \,=\,  \biasos(C_n) \,=\, \frac{1}{n} \qquad\text{and}\qquad \bias(C_n\otimes C_n) \,\geq\, \frac{1}{2n}.$$
The values of $\biasme(C_n),\biasnc(C_n)$ and $\bias^*(C_n)$ can be deduced automatically from the equalities above using the inequalities in Theorem~\ref{thm:algorithm}. 
This family of examples (for $n>2$) shows that none of the quantities we introduce for quantum XOR games satisfies a perfect parallel repetition theorem. This is perhaps somewhat surprising since classical XOR games do satisfy perfect parallel repetition~\cite{CSUU08}.
\pnote{An example from~\cite[Theorem 1.2]{cooney} shows that parallel repetition does not hold for either the nc or os biases. For the nc bias this is a bit surprising, since the SDP has the form required to apply the Mittal-Szegedy results. However, the product SDP will not correspond to the tensor-product XOR game. It will have additional constraints mounding the ``mixed row-column'' norm of $X,Y$ (it will have four constraints on each of $X,Y$ instead of just two), and these additional constraints are not satisfied by the strategy in the example below. 

Here is a rough sketch. The game has matrix $M = 1/(2n) \sum_k E_{1k} \otimes E_{k1} + E_{k1} \otimes E_{1k}$, which is Hermitian. First we upper bound $\biasos(M)$. Let $(X_i),(Y_i),(t_i)$ be an SDP solution. Write
\begin{align*}
\sum_i \Tr( M X_i\otimes Y_i) &= \frac{1}{2n}\sum_{i,k} \big(\bra{k}X_i\ket{1}\bra{1}Y_i\ket{k} + \bra{1}t_iX_i\ket{k}\bra{k}t_i^{-1}Y_i\ket{1}\\
&\leq \frac{1}{2n}\Big( \sum_{i,k} |\bra{k}X_i\ket{1}|^2\Big)^{1/2}\Big( \sum_{i,k} |\bra{1}Y_i\ket{k}|^2\Big)^{1/2}+\frac{1}{2n}\Big( \sum_{i,k} |\bra{1}t_iX_i\ket{k}|^2\Big)^{1/2}\Big( \sum_{i,k} \bra{k}t_i^{-1}Y_i\ket{1}\Big)^{1/2} \\
&= \frac{1}{2n}\Big(  \bra{1} \sum_{i} X_i^\dagger X_i\ket{1}\Big)^{1/2}\Big( \sum_{i} \bra{1}Y_iY_i^\dagger \ket{1}\Big)^{1/2}+\frac{1}{2n}\Big( \sum_{i} \bra{1} t_i^2 X_iX_i^\dagger\ket{1}\Big)^{1/2}\Big( \sum_{i} \bra{1}t_i^{-2}Y_i^\dagger Y_i\ket{1}\Big)^{1/2} \\
&\leq  \frac{1}{2n}\Big(\big\| \vec{X}^\dagger \vec{X} \big\|\big\| \vec{Y} \vec{Y}^\dagger \big\| + \big\|  \vec{tX}\vec{tX}^\dagger \big\|\big\|\vec{t^{-1}Y}^\dagger  \vec{t^{-1}Y} \big\|\Big)\\
&\leq \frac{1}{n}.
\end{align*}

Next we lower bound the unentangled bias of the repeated game, with matrix $M\otimes M$. Let 
$$X = Y= \Big(\sum_{k} E_{k1} \otimes E_{1k} + E_{1k} \otimes E_{k1}\Big)$$
$X$ and $Y$ are Hermitian and have operator norm at most $1$\onote{actually, for this we need to consider the modified game described below}. Their value in the game is
\begin{align*}
\Tr( (M\otimes M) (X\otimes Y)) &= \frac{1}{(2n)^2} \sum_k \Tr\big( (\ket{11ii}\bra{ii11} + \ket{ii11}\bra{11ii} + \ket{1ii1}\bra{i11i} + \ket{i11i}\bra{1ii1} )\\
&\qquad\qquad\qquad \cdot ( \ket{i1i1}\bra{1i1i} + \ket{1i1i}\bra{i1i1i} + \ket{i11i}\bra{1ii1} + \ket{1ii1}\bra{i11i})\big)\\
&= \frac{1}{2n},
\end{align*} 
where for the first equality we re-ordered the terms corresponding to $M$ so that the order of the tensors matched that of $X\otimes Y$. Hence $\bias(M\otimes M \geq 1/(4n)$.
Operationally, this can be described as follows. Each player, given his two inputs, performs an incomplete measurement containing the two-dimensional subspaces $span(\ket{1k},\ket{k1})$ for all $k \neq 1$. If the measurement does not fall in one of these subspaces, he outputs a random answer. Otherwise (and this happens with probability $1/2n$), we are left with the state $\ket{1kk1}\pm \ket{k11k}$ which after a local unitary transformation the users can convert to $T_1$ and win perfectly. 

Running this example through the SDP solver, we find the following. First, the solver returns $\biasos(M) = 1/n$ for $n$ up till $n=5$. It also returns $\biasos(M\otimes M) = 1/(2n)$ for $n=2,3$ (couldn't go further) and $\biasnc(M)=1/(2n)$ for $n$ up to to $n=5$.

We can also consider the same game, but without $k=1$, i.e. the game $C_n$ with matrix $M(C_n) = 1/(2n) \sum_{k=1}^n E_{0k} \otimes E_{k0} + E_{k0} \otimes E_{0k}$ described in the intro. For that game, we find $\biasnc(C_n) = \biasos(C_n) = 1/n$ for $n$ up till $n=7$, and $\biasnc(C_n\otimes C_n) = 1/(2n)(1+1/n)$ for $n=2,3,4$ ($\biasos$ gives the same value for $n=2$, but it won't run for larger $n$)
}

\subsection{Related work}\label{sec:related}

The model of two-player games in general, and its quantum information aspects in particular, have been widely studied in the past and we will not attempt to give a comprehensive survey, instead only focusing on the results most closely related to ours. Recently Buscemi~\cite{Buscemi11} considered a model he calls ``semi-quantum games", in which the players are sent arbitrary quantum states as questions, and their answers are arbitrary classical strings (hence semi-quantum games contain quantum XOR games as a subclass). He establishes an interesting connection between such games and the task of transforming one state into another using local operations and shared randomness (LOSR): such a task is possible if and only if players sharing the former state can always obtain an expected payoff that is at least as high as players sharing the latter, in any semi-quantum game. 

Quantum XOR games can also be interpreted as a particular formalization of a \emph{local} distinguishing task: indeed, any quantum XOR game can be thought of as a game in which the players are given one of two density matrices, and are asked to produce bits with even or odd parity depending on which state they were given. With the notable exception of~\cite{Buscemi11}, much of the literature in this area is concerned with LOCC (local operations and classical communication) distinguishability (see, e.g.,~\cite{BDFM99,Wat05,ChildsLMO12}), and thus does not seem directly related to our results. 

Recently, Cooney et al.~\cite{cooney} introduced another model of two-player games which they call \emph{rank-one quantum games}, in which, informally speaking, the players are sent parts of a pure state prepared by the referee, and are supposed to convert it to another pure state.
One of their main interests is in approximating the maximum success probability of arbitrary entangled players in their model, which they do using the operator space Grothendieck inequality, just as we do for our model. In fact, this is not a coincidence, since as we describe in Section~\ref{sec:rank-one}, there is an explicit connection between the two models. See also that section for more details on their model. The rest of their paper focuses on other questions not considered by us, such as that of parallel repetition. Most of our work was mainly done independently and concurrently to theirs, although we did benefit from communicating with them about their work, and we thank them for sharing it with us at early stages.

\subsection{Directions for future work}

Our work leaves many questions open; we list just a few that we think are interesting and would deserve further exploration. 

\paragraph{Gaps between the biases.} Among the bounds that we proved between the different biases associated to a quantum XOR game, there are two that we do not know to be tight. First, we showed that $\biasnc$ is at most a $2\sqrt{2}$ factor larger than the unentangled bias $\bias$, but we only know of a factor $2$ separation, which follows from the family of games $(\HI_n)$. Second, we showed that $\biasos$ is at most a factor $2$ greater than the entangled bias $\bias^*$, but the best separation we can prove between the two is the one in Theorem~\ref{thm:higamesummary}, which is of a constant factor very close to $1$. Can that separation be improved? 

A related question is to study the gaps between the quantities $\biasnc$ and $\biasos$ that we introduce and the corresponding biases $\bias$ or $\biasme$ and $\bias^*$ in the regime where their value is close to $1$. In the case of classical XOR games it is known~\cite{CHTW04} that if the entangled bias is at least $1-\eps$ then the unentangled bias is at least $1-O(\sqrt{\eps})$. Could a similar result be shown between $\biasos$ and $\bias^*$, or between $\biasnc$ and $\bias$ or $\biasme$? (The example of the game $T_n$ shows that this \emph{does not} hold of the entangled and unentangled, or even maximally entangled, biases.)

Finally, it would be interesting to determine the maximum ratio achievable between, say, the entangled and unentangled biases of a given game, as a function of the size of the game or of the dimension of the entangled state used by the players in the entangled strategy. Such bounds are already known for three-player classical XOR games~\cite{perezgarcia:2008,BV12} and two-player games with arbitrary answer size~\cite{junge&palazuelos:largeviolation}.

\paragraph{Hardness results.} Results of H{\aa}stad~\cite{Hastad01} on classical XOR games imply that their unentangled bias, and by extension the unentangled bias of quantum XOR games, is NP-hard to approximate within small constant factors. What about the entangled bias? For classical XOR games it follows from Tsirelson's results that it can be computed efficiently. For quantum XOR games, the quantity $\biasos$ gives a factor $2$ approximation. Is there a better efficiently computable approximation, or can one perhaps show that the entangled bias is hard to approximate (possibly assuming the Unique Games conjecture~\cite{Khot02})?

\paragraph{Combinatorial applications.} The commutative Grothendieck inequality has been successfully used to devise constant-factor approximation algorithms for combinatorial problems such as computing the cut-norm of a matrix~\cite{AN06}. Could the non-commutative generalizations lead to new approximation algorithms for combinatorial problems, possibly by interpreting them as quantum XOR games? See~\cite{RegevV12c} for some recent work in this direction.

\paragraph{Acknowledgments.} We are grateful to David P\'erez-Garc\'ia for suggesting the family of games $(T_n)$. We also thank him and Carlos Palazuelos for many useful discussions.

\section{Preliminaries}

\paragraph{Notation.} For an integer $n$, we use the notation $[n]$ to denote the set $\{1,\ldots,n\}$. For $x\in \R$ we let $\text{sign}(x) = x/|x|$ if $x\neq 0$, and $\text{sign}(0)=1$. If $x=(x_i)\in\R^n$ or $\C^n$, we let $\|x\|_\infty := \max_{i\in[n]} |x_i|$. For vectors $x,y\in\C^n$ we define their inner product $\langle x,y\rangle = \sum_i \overline{x_i}\ y_i$ and the norm $\|x\|=\langle x,x\rangle^{1/2}$. 

\paragraph{Matrices and norms.} A calligraphic letter $\HA$, $\HB$ will always denote a finite dimensional Hilbert space. $\lin{\HA,\HB}$ is the set of linear operators from $\HA$ to $\HB$, and $\lin{\HA}=\lin{\HA,\HA}$. $H(\HA)$ is the set of Hermitian operators on $\HA$, and $\obs{\HA}$ is the set of observables, i.e., Hermitian matrices whose eigenvalues are in $\{-1,1\}$. We use $M_n(\K)$ to denote $n\times n$ matrices over a field $\K$, and $M_n = M_n(\C)$. For $A\in \lin{\HA}$ we let $\|A\|_{\infty}$ be its operator norm (i.e., largest singular value) and $\|A\|_1 := \Tr\sqrt{A^\dagger A}$ its Schatten $1$-norm. 

\section{A review of classical XOR games}\label{sec:classical-xor}

In this section we review some definitions and results on two-player classical XOR games. Although most of them already appear in the paper by Cleve et al.~\cite{CHTW04}, our presentation is slightly different and is meant to ease the comparison with the case of quantum XOR games.

A \emph{classical XOR game} $G$ of size $n$ is specified by $n^2$ real coefficients $R = (R_{s,t})_{s,t\in[n]}$ satisfying the normalization condition $\sum_{s,t=1}^n |R_{s,t}| = 1$. The game is played as follows. The referee picks a pair of integers $(s,t)\in [n]^2$ according to the distribution $\{\pi(s,t) = |R_{s,t}|\}$, and sends $s$ to the first player, Alice and $t$ to the second player, Bob. Upon receiving their respective questions, the players each answer with a single bit $a,b\in\{0,1\}$. The referee accepts the players' answers if and only if $(-1)^{a\oplus b} = \text{sign}(R_{s,t})$. Notice that players sending random answers will be accepted with probability $1/2$ in $G$. The \emph{bias} $\bias(G)$ of $G$, defined as twice the difference between the maximum success probability of any players and the success probability of the random strategy (which is $1/2$ in this case), can then be formally expressed as
\begin{equation}\label{eq:class-bias}
\bias(G)\,=\,\bias(R(G))\,:=\, \max_{x_s,y_t\in\{\pm1\}}\,  \Big|\sum_{s,t} \,R_{s,t}\,x_s y_t\Big| .
\end{equation}
Note that the maximum on the right-hand side may be equivalently taken over all $x,y\in \R^n$ such that $\|x\|_\infty,\|y\|_\infty \leq 1$ (instead of over all $x,y \in \{-1,1\}^n$): the maximum will always be attained at an extreme point. 

\begin{topbotframe}
\begin{example}[The CHSH game]
The \emph{CHSH game} is a simple XOR game derived from the famous Bell inequality originally introduced by Clause, Horne, Shimony and Holt~\cite{ClauserHSH:1969}. It is a game of size $2$ with coefficients 
$$R_{11}=\frac{1}{4}, \quad R_{12}=\frac{1}{4}, \quad R_{21}=\frac{1}{4}, \quad R_{22}=-\frac{1}{4}.$$
It is not hard to verify that for this game the bias is $\bias({\CHSH}) = 1/2$. 
\end{example}
\end{topbotframe}

For any (possibly complex) $R$, we also consider the \emph{complex bias}, a quantity we will denote $\bias^\C(R)$ and define as  
\begin{equation} \label{eq:class-complex-bias}
\bias^\C(R)\,:=\, \max_{x_s,y_t\in \C, \, |x_s|,|y_t|\leq 1}\,  \Big|\sum_{s,t} \,R_{s,t}\,x_s y_t\Big|.
\end{equation}
Informally speaking, this can be thought of as allowing the players to respond not just with bits in $\{-1,1\}$ but rather with any complex number on the unit circle.
The complex bias can sometimes be larger than the bias, even for real coefficients $R$. The following example shows that it can be a factor $\sqrt{2}$ larger, and in Claim~\ref{claim:obs_value} in the next section we will show that a result of Krivine~\cite{Krivine:79a} implies that the inequality $\bias^\C(R)\leq \sqrt{2}\bias(R)$ holds for any real $R$.

\medskip 

\begin{topbotframe}
\begin{example}[CHSH, complex bias]\label{ex:chsh-complex} The $\CHSH$ game satisfies $\bias^\C(\CHSH) = \sqrt{2}/2$. To show that the complex bias is at least $\sqrt{2}/2$, it suffices to use the modulus-$1$ complex numbers $x_1 = (1+i)/\sqrt{2}$, $x_2 = (1-i)/\sqrt{2}$, $y_1 = 1$ and $y_2 = -i$ in the right-hand side of~\eqref{eq:class-complex-bias}. The fact that $\bias^\C(\CHSH)\leq \sqrt{2}/2$ will follow from the bound on $\biascom(\CHSH)$ derived in Example~\ref{ex:chsh-sdp} below.
\end{example}
\end{topbotframe}

The maximization on the right-hand side of~\eqref{eq:class-bias} is a quadratic optimization problem. Given an XOR game $G$ the problem of computing, or even approximating within a small constant factor, the quantity $\bias(G)$ was shown NP-hard by H{\aa}stad~\cite{Hastad01}. However, $\bias(G)$ may be bounded from above by the following natural relaxation of~\eqref{eq:class-bias}: 
\begin{equation}\label{eq:class-bias-sdp}
 \bias(G)\,\leq\,\bias^\C(R(G))\,\leq\,\biascom(R(G)) \,:=\, \sup_{d,\,x_s,y_t \in \C^d}\,\Big|\sum_{s,t} R_{s,t}\, \langle \overline{x}_s , y_t \rangle \Big|,
\end{equation}
where now the supremum is taken over all dimensions $d$ and vectors $x_s,y_t \in\C^d$ with Euclidean norm at most $1$.\footnote{It is not hard to see that in the case of real coefficients $R$, the supremum in~\eqref{eq:class-bias-sdp} can equivalently be taken over real vectors $x_s,y_t\in\R^d$.} The complex conjugation of $x_s$ in~\eqref{eq:class-bias-sdp} is somewhat unusual, but we introduce it for convenience and consistency with the rest of the paper; it clearly does not affect the optimization problem. 

The fact that this is a relaxation (i.e., the second inequality above) follows since a number of modulus at most $1$ is also a one-dimensional vector of norm at most $1$.
Moreover, it is easy to verify that the supremum above is a semidefinite program, and as such can be computed up to precision $\eps$ in time $\poly(n,\log 1/\eps)$.  In more detail, multiplying all $x_s$ by a complex phase if necessary, the absolute values on the right-hand side of~\eqref{eq:class-bias-sdp} can be replaced by the real part without changing the supremum. The resulting expression can be written as the maximization of a linear function of the inner products $\langle \overline{x}_s , y_t \rangle$, under constraints bearing on the inner products $\langle x_s , x_s \rangle$ and $\langle y_t , y_t \rangle$. Such an optimization problem can then be formulated as a real semidefinite program using standard techniques (see Section~4.6.2  in~\cite{BoydV04} for generalities on semidefinite programs, and Exercise~4.42 in particular for dealing with complex vectors). 

\begin{topbotframe}
\begin{example}[CHSH, bias of semidefinite relaxation]\label{ex:chsh-sdp} For the $\CHSH$ game we have $\biascom(\CHSH) = \sqrt{2}/2$. Indeed, a lower bound of $\sqrt{2}/2$ follows from the lower bound on the complex bias proved in Example~\ref{ex:chsh-complex}. A matching upper bound can be shown as follows: for any choice of unit vectors $x_1,x_2,y_1$ and $y_2$,
\begin{align*}
\frac{1}{4}\big|\langle \overline{x}_1, y_1\rangle + \langle \overline{x}_2, y_1\rangle + \langle \overline{x}_1, y_2\rangle - \langle \overline{x}_2, y_2\rangle \big| &=\frac{1}{4}\big| \langle \overline{x_1+ x_2}, y_1\rangle + \langle \overline{x_1 - x_2}, y_2 \rangle \big|\\
&\leq \frac{1}{4}\big(\|x_1+ x_2\| + \|x_1 - x_2\|\big) \\
&\leq\frac{\sqrt{2}}{4}\big( \|x_1+ x_2\|^2 + \|x_1 - x_2\|^2\big)^{1/2}\,\leq\,\frac{\sqrt{2}}{2}.
\end{align*}
\end{example}
\end{topbotframe}

How good is the approximation of $\bias(G)$ by $\biascom(G)$? Example~\ref{ex:chsh-sdp} above shows that $\biascom(G)$ can be at least a factor $\sqrt{2}$ larger than $\bias(G)$. As it turns out, this is not far from the worst that can happen: the relaxation~\eqref{eq:class-bias-sdp} is always at most a small constant factor larger than $\bias(G)$. This is essentially the essence of Grothendieck's inequality~\cite{Gro53}. We will discuss that inequality further in Section~\ref{sec:grothendieck}; in the present context, it directly implies the following.

\begin{theorem}\label{thm:class-bias} Let $n$ be any integer and $R=(R_{s,t})_{s,t\in[n]}$ real coefficients. Then
$$ \bias(R) \,\leq \,\biascom(R) \,\leq \, K_G^\R\,\bias(R),$$
where $K_G^\R$ is the so-called real Grothendieck constant which is known to satisfy $K_G^\R\leq 1.782\ldots$~\cite{Kri73,BMMN11}. Moreover, for any $R$ with complex coefficients,
$$ \bias^\C(R) \,\leq \,\biascom(R) \,\leq \, K_G^\C\,\bias^\C(R),$$
where $K_G^\C \leq 1.405\ldots$~\cite{Haa87} is the complex Grothendieck constant.
\end{theorem}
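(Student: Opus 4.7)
First, the lower bounds $\bias(R)\leq\biascom(R)$ and $\bias^\C(R)\leq\biascom(R)$ are immediate: any real (respectively complex) scalar of modulus at most $1$ is a one-dimensional real (respectively complex) vector of norm at most $1$, so the feasible sets of~\eqref{eq:class-bias} and~\eqref{eq:class-complex-bias} embed into the one defining~\eqref{eq:class-bias-sdp} in dimension $d=1$.

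For the upper bounds, I would directly invoke Grothendieck's inequality. In its classical real formulation, it states that for any real matrix $R=(R_{s,t})_{s,t\in[n]}$, any integer $d$, and any real vectors $u_s,v_t\in\R^d$ with $\|u_s\|,\|v_t\|\leq 1$,
$$ \Big|\sum_{s,t} R_{s,t}\,\langle u_s,v_t\rangle\Big| \,\leq\, K_G^\R\max_{\epsilon,\delta\in\{\pm 1\}^n} \Big|\sum_{s,t} R_{s,t}\,\epsilon_s\delta_t\Big|, $$
with $K_G^\R\leq 1.782\ldots$~\cite{Kri73,BMMN11}. To translate this to our setting, I would use the footnote observation that when $R$ is real the supremum in~\eqref{eq:class-bias-sdp} may be restricted to real vectors: after multiplying all $x_s$ by a global phase so that the sum is real, the substitution $\tilde x_s = (\Re x_s,-\Im x_s)$, $\tilde y_t = (\Re y_t,\Im y_t)\in\R^{2d}$ satisfies $\langle\tilde x_s,\tilde y_t\rangle = \Re\sum_k x_s^{(k)} y_t^{(k)}$ and preserves norms. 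With this reduction the complex conjugation in~\eqref{eq:class-bias-sdp} becomes vacuous and $\biascom(R)$ equals the left-hand side of Grothendieck's inequality, so the bound follows.

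For the complex case, I would invoke the complex analogue of Grothendieck's inequality due to Haagerup~\cite{Haa87}, which gives the improved constant $K_G^\C\leq 1.405\ldots$. Here the bilinear form $(x_s,y_t)\mapsto\langle\overline{x}_s,y_t\rangle = \sum_k x_s^{(k)}y_t^{(k)}$ is genuinely bilinear (not sesquilinear) in its complex arguments, so the complex Grothendieck inequality applies directly to $\biascom(R)$, with the optimal scalar right-hand side equal to $\bias^\C(R)$.

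The main obstacle is of course Grothendieck's inequality itself, a deep classical result; since this section is expository the proof should simply cite it. One could optionally sketch Krivine's argument via Gaussian random rounding --- transform the $u_s,v_t$ into new vectors $\tilde u_s,\tilde v_t$ via a suitable series expansion so that $\langle\tilde u_s,\tilde v_t\rangle$ equals a judicious constant times $\tfrac{2}{\pi}\arcsin\langle u_s,v_t\rangle$, then round by taking signs of Gaussian projections $\mathrm{sign}\langle g,\tilde u_s\rangle,\mathrm{sign}\langle g,\tilde v_t\rangle$ --- but this level of detail is not needed for the paper's subsequent arguments and the statement may be taken as a black box.
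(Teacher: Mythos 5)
Your proposal is correct and follows the same route as the paper: both treat the first inequality in each chain as the trivial $d=1$ embedding and obtain the second by citing the real and complex Grothendieck inequalities as black boxes, after identifying $\biascom(R)$ with the vector-valued side and $\bias(R)$ (resp.\ $\bias^\C(R)$) with the scalar side. Your explicit phase-rotation/real-imaginary-splitting argument is exactly the justification the paper leaves to a footnote for restricting to real vectors when $R$ is real.
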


Next, we consider the case in which players are allowed to share an arbitrary state $\ket{\Psi}$, leading to the definition of the \emph{entangled bias}, 
\begin{equation}\label{eq:class-bias-ent}
 \bias^*(G) \,=\, \bias^*(R(G))\,:=\, \sup_{d,A_s,B_t,\ket{\Psi}}\,\Big|\sum_{s,t} \,R_{s,t}\, \bra{\Psi} A_s\otimes B_t \ket{\Psi}\Big|,
\end{equation}
where here the supremum is taken over all dimensions $d$, sequences of matrices $A_s,B_t\in\herm{\C^d}$ of operator norm at most $1$, and states $\ket{\Psi} \in \C^d\otimes \C^d$. By linearity, the supremum could equivalently be taken over all $A_s,B_t \in \obs{\C^d}$ without changing its value.

\begin{topbotframe}
\begin{example}[CHSH, entangled bias]\label{ex:chsh-entangledbias} 
The entangled bias for the CHSH game satisfies $\bias^*({\CHSH}) = \sqrt{2}/2$. Indeed, one can first verify that  the following strategy for the players achieves a bias of $\sqrt{2}/2$: Alice and Bob share a single EPR pair $\ket{\Psi} = (\ket{00}+\ket{11})/\sqrt{2}$. Upon receiving her question $s$, Alice measures either in the computational $(s=1)$ or the Hadamard $(s=2)$ basis. Bob measures in the computational basis rotated by either $\pi/8$ $(t=1)$ or $3\pi/8$ $(t=2)$. Moreover, the bound $\biascom(\CHSH)\leq \sqrt{2}/2$ given in Example~\ref{ex:chsh-sdp}, together with Lemma~\ref{lem:bias-relaxation-classical} below, which shows that $\biascom$ is always an upper bound on $\bias^*$, imply that this is best possible.
\end{example}
\end{topbotframe}

While a priori bounds on the entangled bias (i.e., Tsirelson inequalities) may not be easy to obtain (indeed, the supremum on the right-hand side of~\eqref{eq:class-bias-ent} extends to spaces of arbitrary dimension), Tsirelson showed that, somewhat surprisingly, the relaxation~\eqref{eq:class-bias-sdp} is \emph{also} a relaxation of the entangled bias. We include the short proof, as we will later extend it to the setting of quantum XOR games.

\begin{lemma}[Tsirelson~\cite{Tsirelson:85b}]\label{lem:bias-relaxation-classical} For any real $R$, 
$$\bias^*(R)\,\leq\,\biascom(R).$$
\end{lemma}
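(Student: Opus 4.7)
The plan is to use Tsirelson's ``vectorization'' trick: from any entangled strategy we produce a feasible solution to the semidefinite relaxation defining $\biascom(R)$, achieving the same objective value. The bound then follows by passing to the supremum.

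Concretely, fix an entangled strategy consisting of a dimension $d$, Hermitian contractions $A_s,B_t \in \lin{\C^d}$ (with $\|A_s\|_\infty,\|B_t\|_\infty\leq 1$), and a unit vector $\ket{\Psi} \in \C^d\otimes\C^d$. Viewing $\ket{\Psi}$ as a vector of $\C^{d^2}$, define
$$ x_s \,:=\, \overline{(A_s\otimes \Id)\ket{\Psi}} \,\in\,\C^{d^2} \qquad\text{and}\qquad y_t \,:=\, (\Id\otimes B_t)\ket{\Psi} \,\in\,\C^{d^2}.$$
The complex conjugation on the Alice side is inserted only to match the convention of~\eqref{eq:class-bias-sdp}; it has no bearing on the relevant norms or inner products.

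First I would check feasibility: since $A_s$ is a contraction and $\ket{\Psi}$ is a unit vector, $\|x_s\| = \|(A_s\otimes \Id)\ket{\Psi}\| \leq \|A_s\|_\infty \leq 1$, and similarly $\|y_t\|\leq 1$. Next I would verify that the inner products reproduce the entangled correlations: using that $A_s$ and $B_t$ are Hermitian, and that $(A_s\otimes \Id)$ and $(\Id\otimes B_t)$ commute,
$$ \langle \overline{x_s},y_t\rangle \,=\, \bra{\Psi}(A_s\otimes \Id)(\Id\otimes B_t)\ket{\Psi} \,=\, \bra{\Psi}A_s\otimes B_t\ket{\Psi}.$$
Plugging this identity into the SDP objective defined by $R$,
$$\Big|\sum_{s,t} R_{s,t}\,\langle \overline{x_s},y_t\rangle\Big| \,=\, \Big|\sum_{s,t} R_{s,t}\,\bra{\Psi}A_s\otimes B_t\ket{\Psi}\Big|,$$
shows that the chosen $(x_s,y_t)$ form a feasible SDP solution whose value matches exactly the bias achieved by the entangled strategy. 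Taking the supremum over all entangled strategies on the left-hand side, and bounding it by the supremum over all feasible SDP solutions (of any dimension) on the right-hand side, yields $\bias^*(R)\leq \biascom(R)$.

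There is no serious obstacle here; the only point to watch is to keep track of the Hermiticity of $A_s,B_t$ and of the complex-conjugation convention in the definition of $\biascom$, so that the inner product identity comes out correctly. This argument will generalize straightforwardly to the quantum XOR setting later by replacing the scalars $R_{s,t}$ with appropriate operators acting on the referee's question spaces.
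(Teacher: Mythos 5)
Your proof is correct and is essentially the paper's argument: both produce feasible unit vectors for the relaxation~\eqref{eq:class-bias-sdp} whose inner products reproduce the entangled correlations $\bra{\Psi}A_s\otimes B_t\ket{\Psi}$. Indeed, writing your basis-free vectors $\overline{(A_s\otimes\Id)\ket{\Psi}}$ and $(\Id\otimes B_t)\ket{\Psi}$ out in the Schmidt basis of $\ket{\Psi}$ recovers exactly the paper's Schmidt-coefficient--weighted matrix entries (up to the paper's arbitrary choice of row versus column weighting), so the two constructions coincide.
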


\begin{proof}
Let $(X_s,Y_t,\ket{\Psi})$, where $X_s,Y_t\in\obs{\C^d}$ and $\ket{\Psi}\in \C^d\otimes \C^d$ is a unit vector, be an arbitrary strategy for the players. Up to a local rotation of Alice's and Bob's private spaces we may write the Schmidt decomposition $\ket{\Psi} = \sum_{i=1}^d \lambda_i \ket{i} \ket{i}$, so that the bias achieved by this strategy is
\begin{align*}
 \sum_{s,t}  R_{s,t}\, \sum_{i,j}\lambda_i \lambda_j \bra{i}X_s\ket{j} \bra{i}Y_t \ket{j} .
\end{align*}
For any $s,t$ we have that $\sum_{i,j}\lambda_i \lambda_j \bra{i}X_s\ket{j} \bra{i}Y_t \ket{j} = \langle \overline{x}_s, y_t \rangle $ where $x_s$ and $y_t$ are the $d^2$-dimensional vectors given by
$$ x_s \,:=\, \big( \lambda_i \bra{i}X_s\ket{j} \big)_{i,j}\qquad\text{and} \qquad y_t\,:=\,\big(\lambda_j \bra{i}Y_t\ket{j}\big)_{i,j}.$$
The vector $x_s$ consists of the $d^2$ entries of $X_s$ after weighing row $i$ by $\lambda_i$; similarly $y_t$ consists of the entries of $Y_t$, where this time we weigh the column $j$ by $\lambda_j$.
Note that the particular weighing scheme we chose is arbitrary, and we could also have decided to weigh the columns of $X_s$ and the rows of $Y_t$. The important point is that, since an observable has all of its rows and columns of norm $1$, both $x_s$ and $y_t$ have norm $1$. 
 Hence the collection $\{{x}_s,{y}_t\}$ constitutes a feasible solution to the right-hand side of~\eqref{eq:class-bias-sdp}, proving $\biascom(R)\geq\bias^*(R)$. 
\end{proof}

In fact, Tsirelson showed more: for any XOR game $G$, with coefficients $R$, the quantities $\bias^*(R)$ and $\biascom(R)$ are equal! That is, the relaxation of the unentangled bias that we introduced in~\eqref{eq:class-bias-sdp} exactly corresponds to the maximum bias achievable using arbitrary entangled strategies. 
Moreover, Tsirelson showed that the optimum bias is always achievable using a particular state, the maximally entangled state 
$$\ket{\Psi_d^{me}} \,:=\, \frac{1}{\sqrt{d}} \sum_{i=1}^d\, \ket{i}\ket{i}.$$
Denoting $\biasme(R)$ the largest bias achievable by players who are restricted to using entanglement of the form $\ket{\Psi^{me}}$ (we will leave the dimension subscript $d$ implicit whenever it is unrestricted), we have the following. 

\begin{proposition}[Tsirelson]\label{prop:tsirelson} For any real $R$, the following inequalities hold
$$ \bias(R)\,\leq\, \biasme(R)\,=\,\bias^*(R)\,=\,\biascom(R)\,\leq K_G^\R \,\bias(R). $$
\end{proposition}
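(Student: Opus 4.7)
The chain of inequalities and equalities decomposes into five individual claims: $\bias(R)\leq\biasme(R)$, $\biasme(R)\leq\bias^*(R)$, $\bias^*(R)\leq\biascom(R)$, the closing inequality $\biascom(R)\leq\biasme(R)$, and $\biascom(R)\leq K_G^\R\bias(R)$. The first two are immediate from definitions (a deterministic unentangled strategy is a MES-based strategy that ignores its shared state; a MES-based strategy is an entangled strategy with a particular choice of shared state). The third is exactly the already-established Lemma~\ref{lem:bias-relaxation-classical}, and the last is a restatement of Theorem~\ref{thm:class-bias}. So the entire content of the proposition reduces to showing the single nontrivial inequality $\biascom(R)\leq\biasme(R)$, which closes the cycle and forces the two intermediate inequalities to be equalities.

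For this step my plan is to adapt Tsirelson's construction of anti-commuting observables. Because $R$ is real, the supremum in~\eqref{eq:class-bias-sdp} is attained on real vectors (cf.\ the footnote after that display), so I may fix an $\eps$-optimal collection of unit vectors $x_s,y_t\in\R^d$. I then select $d$ real symmetric matrices $M_1,\ldots,M_d$ acting on $\C^{d'}$, where $d'=2^{\lceil d/2\rceil}$, that satisfy $M_k^2=I$ and $M_kM_l=-M_lM_k$ for $k\neq l$. Setting $X_s:=\sum_k (x_s)_k M_k$ and $Y_t:=\sum_k (y_t)_k M_k$, a direct expansion combined with $\|x_s\|=\|y_t\|=1$ gives $X_s^2=Y_t^2=I$: the diagonal contributions sum to $\|x_s\|^2 I=I$, and the off-diagonal cross terms pair up and cancel by anti-commutation. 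Hence $X_s,Y_t\in\obs{\C^{d'}}$, so they are admissible in the entangled strategy.

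To compute the resulting bias I would use the standard identity $\bra{\Psi^{me}_{d'}}A\otimes B\ket{\Psi^{me}_{d'}}=\tfrac{1}{d'}\Tr(A^T B)$. Because the $M_k$ are real symmetric, $M_k^T=M_k$, so
\[
\bra{\Psi^{me}_{d'}}M_k\otimes M_l\ket{\Psi^{me}_{d'}}\,=\,\tfrac{1}{d'}\Tr(M_kM_l)\,=\,\delta_{kl},
\]
where the $k=l$ case is $\tfrac{1}{d'}\Tr(I)=1$ and the $k\neq l$ case uses anti-commutation together with cyclicity of the trace, which gives $\Tr(M_kM_l)=-\Tr(M_kM_l)=0$. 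By bilinearity $\bra{\Psi^{me}_{d'}}X_s\otimes Y_t\ket{\Psi^{me}_{d'}}=\langle x_s,y_t\rangle$, so this MES strategy achieves bias $\bigl|\sum_{s,t}R_{s,t}\langle x_s,y_t\rangle\bigr|$, which is at least $\biascom(R)-\eps$. Sending $\eps\to 0$ yields the desired inequality.

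The one substantive ingredient, and the main obstacle, is the existence of such an anti-commuting family on $\C^{d'}$ using only real symmetric matrices. This is a classical consequence of the representation theory of the real Clifford algebra: on $\C^2$ the Pauli matrices $\sigma_x$ and $\sigma_z$ are themselves real symmetric involutions that anti-commute, and an iterated Jordan--Wigner-style tensor product produces $2k$ real symmetric anti-commuting involutions on $(\C^2)^{\otimes k}$, which provides the required $d$ matrices on $\C^{2^{\lceil d/2\rceil}}$. Once this construction is granted, every remaining step is a short linear-algebra calculation, and the proposition follows.
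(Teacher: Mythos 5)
Your decomposition is the right one and mirrors how the paper itself handles this proposition: the paper only proves the inequality $\bias^*(R)\le\biascom(R)$ (Lemma~\ref{lem:bias-relaxation-classical}) and defers the rest to Tsirelson without proof, so the substantive content you must supply is exactly $\biascom(R)\le\biasme(R)$. Your plan for that step --- pass to real $\eps$-optimal unit vectors, build observables from an anti-commuting family, and evaluate on a maximally entangled state via $\bra{\Psi^{me}_{d'}}A\otimes B\ket{\Psi^{me}_{d'}}=\Tr(A^TB)/d'$ --- is precisely Tsirelson's argument, and the algebra you carry out with it is correct.

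The one step that fails as written is the existence claim you rely on: iterated Jordan--Wigner does \emph{not} produce $2k$ \emph{real symmetric} anti-commuting involutions on $(\C^2)^{\otimes k}$; half of the standard generators carry a $\sigma_y$ factor and are genuinely complex. Indeed, no four real symmetric anti-commuting involutions exist in dimension $4$: such a family would yield a nonzero, hence faithful, real representation of the simple algebra $Cl_{4,0}(\R)\cong M_2(\mathbb{H})$, whose smallest real representation has dimension $8$. So the dimension $d'=2^{\lceil d/2\rceil}$ with real symmetric matrices is already unattainable for $d=4$. The gap is routine to repair, and neither fix affects the proposition (which asserts no dimension bound). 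Either keep the full complex Jordan--Wigner family $M_1,\dots,M_d$ of Hermitian anti-commuting involutions and set $Y_t:=\sum_k (y_t)_k \overline{M_k}$; the entrywise conjugates are again Hermitian anti-commuting involutions, and since $M_k^T=\overline{M_k}$ one gets $\bra{\Psi^{me}_{d'}}M_k\otimes\overline{M_l}\ket{\Psi^{me}_{d'}}=\overline{\Tr(M_kM_l)}/d'=\delta_{kl}$, after which your computation goes through verbatim because the $x_s,y_t$ are real. Or realify: replace each Hermitian $M_k=A_k+iB_k$ by the real symmetric matrix $\left(\begin{smallmatrix}A_k&-B_k\\ B_k&A_k\end{smallmatrix}\right)$ of twice the size; this map is a real algebra homomorphism, so it preserves the involution and anti-commutation relations and restores your original setup at the cost of one extra qubit. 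With either correction the proof is complete.
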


In addition, Tsirelson~\cite[Lemma~3.1]{Tsirelson:85b} showed that the optimal bias $\bias^*(R)$ could be achieved using a maximally entangled state of dimension at most $2^{O(\sqrt{n})}$, where $n$ is the size of the game. We refer the reader to~\cite{Slofstra11} for additional results on the amount of entanglement required to play XOR games (near-)optimally.

\section{Quantum XOR games}\label{sec:quantum-xor}

In this section we formally introduce quantum XOR games and prove our main theorem, Theorem~\ref{thm:algorithm}, together with the extensions that were discussed in the introduction. We start by defining quantum XOR games in Section~\ref{sec:quantum-defs}, and state several equivalent operational interpretations of the definition. 
In Section~\ref{sec:unentangled-bias} we introduce the unentangled bias $\bias$, the complex bias $\bias^\C$, and the relaxation $\biasnc$ and prove inequalities relating them (see Theorem~\ref{thm:unentangled-bias}). In Section~\ref{sec:entangled-bias} we introduce the entangled bias $\bias^*$, the maximally entangled bias $\biasme$, and the relaxation $\biasos$, and prove inequalities relating them (see Lemma~\ref{lem:bias-relaxation-quantum} and Theorem~\ref{thm:entangled-bias}). 

\subsection{Definitions}\label{sec:quantum-defs}

We first give the mathematical definition of quantum XOR games and of strategies that we will be working with throughout the paper. After stating the definition we discuss different possible operational interpretations of quantum XOR games, all of which are captured by our definition. 

\begin{definition}\label{def:qxor}
A quantum XOR game $G$ of size $n$ is specified by a Hermitian matrix $M=M(G)\in \herm{\C^n\otimes \C^n}$ such that $\|M\|_1\leq 1$. A strategy for the players in $G$ is given by a pair of observables $A\in \obs{\C^n\otimes \HA}$, $B\in \obs{\C^n\otimes \HB}$, where $\HA,\HB$ are finite-dimensional Hilbert spaces, and a state $\ket{\Psi} \in \HA\otimes \HB$. The bias achieved by the strategy $(A,B,\ket{\Psi})$ in $G$ is 
\begin{align}
\bias(A,B,\ket{\Psi};G)\,&:=\,  \bra{\Psi} \Tr_{\C^n\otimes \C^n}\big((A\otimes B) \, (M\otimes \Id_{\HA\otimes\HB}) \big) \ket{\Psi} \notag \\[2mm]
&\phantom{:}= \, \Tr\big( (A\otimes B)\, (M\otimes \ket{\Psi}\bra{\Psi} )\big). \label{eq:def-biasm}
\end{align}
\end{definition}

We first observe that with this definition quantum XOR games are clearly a generalization of classical XOR games: if $G$ is a classical XOR game of size $n$ with coefficients $(R_{s,t})$, then one can obtain an equivalent quantum XOR game $G'$ by introducing the $n^2$-dimensional diagonal matrix $M = \sum_{s,t} R_{s,t} \ket{s}\bra{s}\otimes \ket{t}\bra{t}$, which satisfies $\|M\|_1 = \sum_{s,t} |R_{s,t}|=1$. Moreover, it is not hard to check that, given any strategy $((A_s),(B_t),\ket{\Psi})$ for the players in $G$, its bias equals $\bias(A,B,\ket{\Psi};G')$, where $A$ (resp.\ $B$) is the block-diagonal matrix with blocks the $A_s$ (resp.\ $B_t$). Conversely, any strategy $(A,B,\ket{\Psi})$ in $G'$ can be mapped to a strategy for the players in $G$ achieving the same bias by letting $A_s$ (resp. $B_t$) be the diagonal blocks of $A$ (resp. $B$), which are Hermitian of norm at most~$1$. 

\paragraph{Operational interpretations.}
Consider the actions of an arbitrary referee. First, he initializes the message registers and his private register, described by some Hilbert space $\mathcal{V}$, in an arbitrary state, which we can assume without loss of generality to be a pure state $\ket{\Phi_{init}} \in \C^n \otimes \C^n \otimes \mathcal{V}$. He then sends each message register to the corresponding player. 
The players apply arbitrary observables $A=A^0-A^1$, $B=B^0-B^1$ on their message and their own private spaces, initialized in an arbitrary state $\ket{\Psi}$. They return the outcomes $a,b$ of their measurements to the referee, who then measures his private register using either the binary measurement $\{\Pi_{0}^{acc},\Id-\Pi_{0}^{acc}\}$ or $\{\Pi_{1}^{acc},\Id-\Pi_{1}^{acc}\}$, depending on the parity $a\oplus b$. If he obtains the outcome ``$acc$'' he accepts, and otherwise he rejects. The success probability of the strategy $(A,B,\ket{\Psi})$ is
\begin{align}
&\bra{\Psi}\bra{\Phi_{init}} ((A^0\otimes B^{0}+ A^{1} \otimes B^{1}) \otimes \Pi_0^{acc} +
(A^0\otimes B^{1}+ A^{1} \otimes B^{0}) \otimes \Pi_1^{acc})\ket{\Psi} \ket{\Phi_{init}}  \notag \\
&\qquad = \frac{1}{2} 
\bra{\Phi_{init}} \Id_{\C^n \otimes \C^n}\otimes(\Pi_0^{acc} + \Pi_1^{acc}) \ket{\Phi_{init}} +
  \frac{1}{2} 
 \bra{\Psi} \bra{\Phi_{init}} (A \otimes B) \otimes (\Pi_0^{acc} - \Pi_1^{acc})  \ket{\Psi}\ket{\Phi_{init}} \notag  \\
&\qquad = \frac{1}{2} 
\bra{\Phi_{init}} \Id_{\C^n \otimes \C^n}\otimes(\Pi_0^{acc} + \Pi_1^{acc}) \ket{\Phi_{init}} +
\frac{1}{2} \Tr\big( (A\otimes B)\, (M\otimes \ket{\Psi}\bra{\Psi} )\big),\label{eq:operationaldef}
\end{align}
where we define
$$ M \,:=\, \Tr_{\mathcal{V}}\big( (\Id_{\C^n\otimes \C^n} \otimes  (\Pi_0^{acc}-\Pi_1^{acc})) \ket{\Phi_{init}} \bra{\Phi_{init}}\big).$$
Notice that if the players output random uniform bits, then their success probability is given by the first term in~\eqref{eq:operationaldef}, and therefore, the bias 
$\bias(A,B,\ket{\Psi};G)$ as defined in~\eqref{eq:def-biasm} corresponds exactly to twice the advantage of players using the strategy $(A,B,\ket{\Psi})$ in $G$ over players applying the random strategy. 

\begin{topbotframe}
\begin{example}[Matrix associated to the family of games $(T_n)$]\label{ex:pt-m} The matrix $M$ associated to the game $T_n$, defined in Section~\ref{sec:ourresults}, is 
\begin{align*}
M({T_n}) \,&=\, \frac{1}{2}\big(\ket{\psi_0}\bra{\psi_0} - \ket{\psi_1}\bra{\psi_1}\big) \\
\,&=\,  \frac{1}{2\sqrt{n}} \Big(\sum_{i=1}^n \ket{00}\bra{ii} + \ket{ii}\bra{00}\Big) .
\end{align*}
\end{example}
\end{topbotframe}

Conversely, we show that to any Hermitian $M$ satisfying $\|M\|_1\leq 1$ may be associated a quantum XOR game in which the players' bias is given by~\eqref{eq:def-biasm}. Indeed, for any such $M$ we may write the spectral decomposition $M = \sum_i (-1)^{c_i} p_i \ket{\Phi_i}\bra{\Phi_i}$, where the $p_i$ are non-negative and sum to $\|M\|_1$. It is then easy to check that $M$ is associated to the following game by the transformation described above. The referee first selects an $i\in [n^2]$ with probability $p_i$, and rejects outright with probability $1-\sum_i p_i$. Provided this last option did not happen, he prepares the $n^2$-dimensional state $\ket{\Phi_i}\in \C^n\otimes \C^n$ corresponding to the index $i$ he obtained, and sends one register of $\ket{\Phi_i}$ to each player. The referee accepts answers $(a,b)$ if and only if $a\oplus b = c_i$. Note that the states sent by the referee in this game are all orthogonal, hence can be perfectly distinguished globally. Alternatively, we could also decompose $M$ as $M = p_0\rho_0 - p_1 \rho_1$ with $p_0+p_1 = \|M\|_1$, and then have the referee send one of two possible density matrices to the players.

Up to a multiplicative scaling of the bias, one may even turn any $M$ into a quantum XOR game $G$ in which the referee always sends (not necessarily orthogonal) \emph{product} states to the players. To see this, let $\{H_i\}\in\herm{\C^n}$ be a basis of the space of $n$-dimensional Hermitian matrices normalized to have $\|H_i\|_1=1$ for each $i$. Decompose $M$ in the tensor product basis $\{H_i\otimes H_j\}$ as $M = \sum_{i,j} m_{i,j}\, H_i\otimes H_j$ for some $m_{i,j} \in \R$. By applying an appropriate scaling (which will affect the bias correspondingly) we may reduce to the case in which $\|M\|_1\leq \sum_{i,j} |m_{i,j}|=1$. The resulting game can be described as follows. The referee first selects a pair of indices $(i,j)$ according to the distribution $\{|m_{i,j}|\}$, and then plays the quantum XOR game described by $\text{sign}(m_{i,j}) H_i\otimes H_j$, whose eigenvectors are all product states.  

\subsection{The unentangled bias}\label{sec:unentangled-bias}

In this section we introduce the \emph{unentangled bias} $\bias(G):=\bias(M(G))$ of a quantum XOR game $G$ of size $n$, which is the maximum bias achievable by players who do not have any shared entanglement. Formally, by Definition~\ref{def:qxor} specialized to the case of empty private spaces $\HA,\HB$, we obtain the following.

\begin{definition}\label{def:bias-unentangled} Let $n$ be an integer and $M\in\herm{\C^n\otimes \C^n}$. The \emph{unentangled bias} of $M$, denoted $\bias(M)$, is defined as
\beq\label{eq:omegau-def}
\bias(M) \,:=\,\sup_{ \substack{A\in \herm{\C^n},\,B\in \herm{\C^n}\\ \|A\|_\infty, \|B\|_\infty\leq 1}}\, \big|\Tr\big((A\otimes B) \, M \big)\big|.
\eeq
\end{definition}

The supremum in~\eqref{eq:omegau-def} is taken over all Hermitian operators $A,B$ acting directly on the players' respective message spaces. We note that by linearity, the supremum will always be achieved by $A,B$ which have all their eigenvalues in $\{\pm 1\}$, i.e., observables.

One might argue that the above definition is too strict, and we should allow the players to have their own private auxiliary space, initialized in the state $\ket{0}$. The following claim shows that this does not affect the definition of the unentangled bias. 

\begin{claim}\label{claim:obs_value-herm} Let $M \in \herm{\C^n\otimes \C^n}$. Then 
\beq\label{eq:omegau-herm}
\bias(M) \,=\, \sup_{\substack{\HA,\HB,\, A\in \obs{\C^n\otimes\HA},\\ B\in \obs{\C^n\otimes\HB}}} \,\big|\bra{0}_{\HA} \bra{0}_{\HB} \Tr_{\C^n\otimes \C^n}\big((A\otimes B) \,( M \otimes \Id_{\HA\otimes\HB})\big) \ket{0}_{\HA}\ket{0}_{\HB}\big|,
\eeq
where the supremum is taken over all finite-dimensional Hilbert spaces $\HA,\HB$. 
\end{claim}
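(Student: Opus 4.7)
The plan is to prove the equality by showing both inequalities. Both directions hinge on a compression/dilation correspondence between Hermitian contractions on $\C^n$ and observables on $\C^n\otimes\C^2$, together with the elementary identity
\[
\bra{0}_\HA\bra{0}_\HB \Tr_{\C^n\otimes\C^n}\big((A'\otimes B')(M\otimes \Id_{\HA\otimes\HB})\big)\ket{0}_\HA\ket{0}_\HB \;=\; \Tr\big((\bra{0}A'\ket{0}\otimes \bra{0}B'\ket{0})\, M\big),
\]
which I would verify by cyclicity of the trace after rewriting the sandwich as $\Tr\bigl((A'\otimes B')(M\otimes |0\rangle\langle 0|_\HA\otimes |0\rangle\langle 0|_\HB)\bigr)$ and expanding $A',B'$ in the auxiliary basis so that the outer $|0\rangle$'s pick out the $(0,0)$-block on each side.

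For the easy direction (RHS $\leq$ LHS), I would take any $A'\in\obs{\C^n\otimes\HA}$, $B'\in\obs{\C^n\otimes\HB}$ and set $A:=\bra{0}_\HA A'\ket{0}_\HA$, $B:=\bra{0}_\HB B'\ket{0}_\HB$. These are Hermitian on $\C^n$ by construction, and for any unit $\ket{v}\in\C^n$ we have $|\langle v|A|v\rangle|=|\langle v\otimes 0|A'|v\otimes 0\rangle|\leq\|A'\|_\infty=1$, so $\|A\|_\infty\leq 1$ and similarly $\|B\|_\infty\leq 1$. The identity above then shows that the RHS expression equals $\Tr((A\otimes B)M)$, which is bounded by $\bias(M)$ from~\eqref{eq:omegau-def}.

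For the reverse direction (LHS $\leq$ RHS), the main step is to dilate an arbitrary Hermitian contraction to an observable on an enlarged space. Given $A\in\herm{\C^n}$ with $\|A\|_\infty\leq 1$, the operator $\Id-A^2$ is positive semidefinite, so its square root is well-defined, and I would set $\HA=\C^2$ and
\[
A' \;:=\; \begin{pmatrix} A & \sqrt{\Id-A^2} \\ \sqrt{\Id-A^2} & -A \end{pmatrix} \in \herm{\C^n\otimes\C^2}.
\]
Since $A$ and $\sqrt{\Id-A^2}$ are both functions of $A$, they commute, so block multiplication gives $(A')^2=\Id$; together with Hermiticity this makes $A'$ an observable. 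By construction $\bra{0}_{\C^2} A'\ket{0}_{\C^2}=A$. Performing the analogous dilation $B'$ of $B$ and invoking the identity above converts $\Tr((A\otimes B)M)$ into the RHS sandwich expression of~\eqref{eq:omegau-herm}.

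The main conceptual step is the dilation in the second direction; once that is in place, the identity relating the sandwich expression to $\Tr((A\otimes B)M)$ is routine bookkeeping of the tensor factors. I do not anticipate a genuine obstacle beyond being careful with the ordering of the four tensor components ($\C^n_A,\C^n_B,\HA,\HB$) when writing out the cyclicity manipulation.
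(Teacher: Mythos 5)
Your proof is correct. The direction $\mathrm{RHS}\le\bias(M)$ is handled exactly as in the paper: compress each observable to its $(0,0)$-block $(\Id\otimes\bra{0})A'(\Id\otimes\ket{0})$, which is a Hermitian contraction achieving the same bias. Where you differ is the direction $\bias(M)\le\mathrm{RHS}$: the paper dismisses it as ``clear,'' implicitly because the supremum defining $\bias(M)$ over Hermitian contractions is already attained at observables (the extreme points, as noted in the remark following Definition~\ref{def:bias-unentangled}), so one simply takes trivial one-dimensional ancillas $\HA=\HB=\C$. You instead dilate an \emph{arbitrary} Hermitian contraction $A$ to the observable
\[
A'=\begin{pmatrix} A & \sqrt{\Id-A^2}\\ \sqrt{\Id-A^2} & -A\end{pmatrix}
\]
on $\C^n\otimes\C^2$ with $(0,0)$-block equal to $A$; the verification that $(A')^2=\Id$ via commutativity of $A$ and $\sqrt{\Id-A^2}$ is right. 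This buys you a self-contained argument that does not invoke the extreme-point/linearity remark, at the cost of one ancilla qubit per player; the paper's route is shorter given that remark is already in place. Both are standard and both are fine.
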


\begin{proof} The $\le$ direction is clear. For the other direction, consider for any $A\in \obs{\C^n\otimes\HA}$ the matrix $A' = (\Id \otimes \bra{0}_{\HA}) A (\Id \otimes \ket{0}_{\HA})$ and similarly for $B$. Then $A'$ and $B'$ are Hermitian with norm at most $1$, and achieve the same bias. 
\end{proof}

\begin{topbotframe}
\begin{example}[Unentangled bias of the games $(T_n)$ (1)]\label{ex:pt-unentangled-1} The maximum bias achievable in the game $T_n$ satisfies $\bias(T_n) \geq 1/\sqrt{n}$ (and in particular $\bias(T_1)=1$), as is demonstrated by the following strategy for the players. (In fact, the bias is exactly $1/\sqrt{n}$: we will prove a matching upper bound $\bias(T_n)\leq 1/\sqrt{n}$ in Example~\ref{ex:pt-unentangled-2} below.) The players each measure their respective message register in a basis containing the two orthogonal vectors 
$$\ket{\pi_0} = \frac{1}{\sqrt{2}}\ket{0} +\frac{1}{\sqrt{2n}} \sum_i\ket{i}\quad\text{and}\quad \ket{\pi_1} = \frac{1}{\sqrt{2}}\ket{0} -\frac{1}{\sqrt{2n}} \sum_i\ket{i}.$$
If they obtain the outcome $\ket{\pi_0}$ (resp.\ $\ket{\pi_1}$) then they answer $0$ (resp.\ $1$); otherwise they output a random bit. Let $Q = \ket{\pi_0}\bra{\pi_0}-\ket{\pi_1}\bra{\pi_1}$. The bias achieved by this strategy is
\begin{align*}
 \frac{1}{2} \left( \bra{\psi_0} Q\otimes Q \ket{\psi_0}  - \bra{\psi_1} Q \otimes Q \ket{\psi_1} \right)
&=  \frac{1}{\sqrt{n}} {\rm Re}\left(\sum_{i=1}^n \bra{00} Q \otimes Q \ket{ii}\right)\\
    &= \frac{1}{\sqrt{n}} {\rm Re}\left(\sum_{i=1}^n \bra{0}Q \ket{i}^2  \right)\\
    & =\frac{1}{\sqrt{n}} \left(\sum_{i=1}^n \left(\frac{1}{\sqrt{n}}\right)^2  \right)\, =\, \frac{1}{\sqrt{n}}.
\end{align*}
\end{example}
\end{topbotframe}

It will sometimes be convenient to relax the condition that the operators $A,B$ in~\eqref{eq:omegau-herm} are Hermitian, and allow them to be arbitrary norm-$1$ operators $A\in \lin{\C^n}, B\in \lin{\C^n}$. This is analogous to the relaxation of the bias into the complex bias that we already introduced in the case of classical XOR games in the previous section. Formally, we define the complex bias of any $M\in \lin{\C^n\otimes \C^n}$ as follows.\footnote{Although quantum XOR games only give rise to Hermitian matrices $M$, the quantities $\bias^\C$, as well as $\biasnc$ and $\biasos$ defined later, are meaningful for all $M$ and so we give their definitions in the general case.}

\begin{definition}\label{def:bias-complex} Let $n$ be an integer and $M\in\lin{C^n\otimes \C^n}$. The \emph{complex bias} of $M$, $\biasc(M)$, is defined as
\beq\label{eq:omegac-def}
 \biasc(M)\,:=\,\sup_{A,B\in\lin{\C^n},\,  \|A\|_{\infty}\leq 1,\,\|B\|_{\infty}\leq 1} \,\big| \Tr\big((A\otimes B) \, M \big)\big|.
\eeq
\end{definition}

The following claim shows that $\biasc(M)$ is never more than a factor $\sqrt{2}$ larger than the unentangled bias $\bias(M)$; the fact that such a gap can be achieved already follows from Example~\ref{ex:chsh-complex}. 

\begin{claim}\label{claim:obs_value} Let $M\in\herm{\C^n\otimes \C^n}$. Then it holds that
$$ 
\bias(M) \, \leq\, \biasc(M)\,\leq \sqrt{2}\,\bias(M).
$$
\end{claim}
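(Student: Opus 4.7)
The inequality $\bias(M)\le\biasc(M)$ is immediate: the sup defining $\biasc(M)$ is taken over a strictly larger feasible set (all linear operators of operator norm at most $1$), so any Hermitian pair feasible for $\bias(M)$ is also feasible for $\biasc(M)$. The content of the claim is the upper bound $\biasc(M)\le\sqrt{2}\,\bias(M)$, and my plan is to reduce it to the analogous classical inequality $\bias^\C(R)\le\sqrt{2}\,\bias(R)$ for real coefficients $R$, already attributed to Krivine~\cite{Krivine:79a} in the introduction, via a spectral diagonalization of the optimal strategy.

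First I will reduce to the case of \emph{unitary} $A,B$. For fixed $B$ the map $A\mapsto\Tr((A\otimes B)M)$ is linear, so its modulus is convex; the unit ball of the operator norm on $\lin{\C^n}$ is compact and convex with extreme points precisely the unitaries, so the supremum in~\eqref{eq:omegac-def} is attained at unitary $A$, and by symmetry at unitary $B$ as well. Spectrally decompose $A=\sum_{k=1}^n e^{i\lambda_k}\ket{e_k}\bra{e_k}$ and $B=\sum_{l=1}^n e^{i\mu_l}\ket{f_l}\bra{f_l}$ in orthonormal bases of $\C^n$. Then
$$ \Tr\big((A\otimes B)M\big)\,=\,\sum_{k,l}\, e^{i\lambda_k}\,e^{i\mu_l}\,R_{kl}\,,\qquad\text{where}\qquad R_{kl}\,:=\,\bra{e_k,f_l}M\ket{e_k,f_l}\,\in\,\R, $$
the reality of $R_{kl}$ following from the Hermiticity of $M$ (and of $\ket{e_k,f_l}\bra{e_k,f_l}$). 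Setting $x_k=e^{i\lambda_k},y_l=e^{i\mu_l}$ gives unit-modulus complex numbers, so $|\Tr((A\otimes B)M)|\le\bias^\C(R)$ where $R=(R_{kl})$ is viewed as a classical XOR-game coefficient matrix.

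Next I apply Krivine's inequality to conclude $\bias^\C(R)\le\sqrt{2}\,\bias(R)$. It remains to bound $\bias(R)\le\bias(M)$. Fix any signs $x_k,y_l\in\{\pm 1\}$ and set
$$ A'\,:=\,\sum_{k}x_k\ket{e_k}\bra{e_k}\,,\qquad B'\,:=\,\sum_l y_l\ket{f_l}\bra{f_l}\,. $$
These are Hermitian with eigenvalues in $\{\pm 1\}$, hence observables of operator norm $1$; and since $\{\ket{e_k}\bra{e_k}\}$ and $\{\ket{f_l}\bra{f_l}\}$ are complete orthogonal families, a direct computation gives $\sum_{k,l}x_k y_l R_{kl}=\Tr((A'\otimes B')M)$, which is bounded in modulus by $\bias(M)$. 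Taking the supremum over sign assignments yields $\bias(R)\le\bias(M)$, and chaining the three inequalities proves $|\Tr((A\otimes B)M)|\le\sqrt{2}\,\bias(M)$ for arbitrary unitary $A,B$, as desired.

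I do not anticipate a serious obstacle: Krivine's theorem does the analytic heavy lifting, and the remainder is a routine passage between operators and their spectral data. The only step that needs a moment's care is the reduction to unitary $A,B$, which I will justify via the Kadison-type characterization of the extreme points of the unit ball of the operator norm; the rest is bookkeeping that exploits both the Hermiticity of $M$ (to ensure the $R_{kl}$ are real) and the orthogonality of the spectral projectors (to convert sign patterns back into observables).
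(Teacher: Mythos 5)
Your proof is correct and follows essentially the same route as the paper's: pass to unitary extreme points of the unit ball, diagonalize, use Hermiticity of $M$ to get real coefficients $R_{kl}$, invoke Krivine's $K_G^{\R}(2)=\sqrt{2}$ to round the unit-modulus phases to signs, and reassemble the signs into observables. The only cosmetic difference is that you package the Krivine step as the classical inequality $\bias^\C(R)\le\sqrt{2}\,\bias(R)$ --- which the paper actually defers to this very claim and establishes inline via the two-dimensional real-vector trick --- rather than applying the rank-two Grothendieck inequality directly to the phase vectors.
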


\begin{proof}
The first inequality is clear. 
To prove the second, let $A,B\in\lin{\C^n}$ achieve the supremum in~\eqref{eq:omegac-def}. 
By convexity, we can assume without loss of generality that $A,B$ are extreme points of the set of all operators of norm at most $1$; hence all their singular values must be $1$, i.e., they are unitary. We may thus decompose
$$A = \sum_i \lambda_i \ket{u_i}\bra{u_i}\quad\text{and}\quad B = \sum_i \mu_i \ket{v_i}\bra{v_i},$$
where the $\lambda_i$ and $\mu_i$ are complex with modulus $1$. The complex bias is then
$$ \bias^\C(M)\,=\, \Big|\sum_{i,j} \big((\bra{u_i}\otimes\bra{v_j}) M (\ket{u_i}\otimes \ket{v_j})\big) \,\lambda_i \mu_j\Big|.$$
By multiplying all $\lambda_i$ by a complex phase we can assume that the expression inside the absolute value is a non-negative real.
For each $i$ define the two-dimensional real unit vectors $\vec{\lambda}_i = (\Re(\lambda_i),\Im(\lambda_i))^T$ and $\vec{\mu}_i = (\Re(\mu_i),-\Im(\mu_i))^T$,
and notice that $\langle \vec{\lambda}_i, \vec{\mu}_j \rangle = \Re(\lambda_i \mu_j)$. 
Since $M$ is Hermitian, for every $i,j$ the coefficient $(\bra{u_i}\otimes\bra{v_j}) M (\ket{u_i}\otimes \ket{v_j})$ is real, and so we have
$$ \bias^\C(M)\,=\, \sum_{i,j} \big((\bra{u_i}\otimes\bra{v_j}) M (\ket{u_i}\otimes \ket{v_j})\big) \,\langle \vec{\lambda_i}, \vec{ \mu_j} \rangle.$$
Using Krivine's~\cite{Krivine:79a} result that the two-dimensional Grothendieck constant $K_G^\R(2)$ is $\sqrt{2}$, 
we obtain that there exist numbers $x_i,y_j \in \{-1,1\}$ such that 
$$ \sum_{i,j} \big((\bra{u_i}\otimes\bra{v_j}) M (\ket{u_i}\otimes \ket{v_j})\big) \, x_i y_j \ge \bias^\C(M) / \sqrt{2}.$$
We can therefore complete the proof by using in~\eqref{eq:omegau-def} the observables 
\[ A' \,:=\, \sum_i x_i \ket{u_i}\bra{u_i}\quad\text{and}\quad B' \,:=\, \sum_i y_i \ket{v_i}\bra{v_i}.\qedhere \]
\end{proof}

Next, we introduce the relaxation $\biasnc$. We start with some notation. 
Given a Hilbert space $\HA$ and an integer $d$ we denote by $\vecm{d}{\HA}$ the complex vector space of all sequences of $d$ matrices $A_1,\ldots,A_d\in \lin{\HA}$; given such a sequence we will use the notation $\vec{A} := (A_1,\ldots,A_d)$ to represent it. This notation emphasizes the fact that $\vec{A}$ can be thought of both as a sequence of matrices, or as the vector-valued matrix whose $(i,j)$th entry is the vector $((A_1)_{i,j},\ldots,(A_d)_{i,j})\in\C^d$. 
Given two vector-valued matrices $\vec{A} = (A_1,\ldots,A_d) \in\vecm{d}{\HA}$ and $\vec{B} = (B_1,\ldots,B_d)\in\vecm{d}{\HB}$, we define their tensor product ``$\odot$'' as the complex-valued matrix
\begin{align*}
\vec{A} \odot \vec{B} \,&:=\, \sum_{r=1}^d A_r \otimes B_r \\
&\phantom{:}=\, \big( \big\langle \overline{\vec{A}_{i,k}} , \vec{B}_{j,l} \big\rangle \big)_{(i,j),(k,l)}
     \,\in \lin{\HA\otimes \HB}.
\end{align*}
In other words, the tensor product of two vector-valued matrices is defined as that of scalar-valued matrices, except we take inner products of entries instead of scalar products.
We also define the product of two vector-valued matrices $\vec{A}\in\vecm{d}{\HA,\HB}$ and $\vec{B}\in\vecm{d}{\HB,\HC}$ as the complex-valued matrix
\begin{align*}
 (\vec{A} \vec{B})_{i,j} \,&:=\, \sum_{r=1}^d A_i B_i \\
 &\phantom{:}= \, \sum_k \big\langle \overline{\vec{A}_{i,k}} , \vec{B}_{k,j}\big\rangle \,\in \lin{\HA,\HC}.
\end{align*}
Note that $\vec{A}\vec{B}$ is obtained in the same way as the usual matrix product, except that we are taking the inner product, rather than the product, of corresponding entries. 
Finally, given $\vec{A} = (A_1,\ldots,A_d)\in\vecm{d}{\HA}$ we define its dagger as $\vec{A}^\dagger = (A_1^\dagger,\ldots,A_d^\dagger)$. 

\begin{definition}\label{def:bias-nc} Let $n$ be an integer and $M\in \lin{\C^n\otimes \C^n}$. Define
\begin{align}\label{eq:nc-gt-p}
\biasnc(M)\,:=\,\sup_{d,\,\vec{X},\vec{Y}\in\vecm{d}{\C^n}} \,\big|\Tr\big( (\vec{X}\odot \vec{Y})\, M\big)\big|,
\end{align}
where the supremum is taken over all dimensions $d$ and vector-valued matrices $\vec{X}\in\vecm{d}{\C^d}$ and $\vec{Y}\in\vecm{d}{\C^n}$ satisfying 
\begin{align}
\max\Big( \big\|\vec{X} \vec{X}^\dagger\big\|_\infty,\,\big\|\vec{X}^\dagger \vec{X}\big\|_\infty,&\,\big\|\vec{Y} \vec{Y}^\dagger\big\|_\infty,\,\big\|\vec{Y}^\dagger \vec{Y}\big\|_\infty\Big) \,\leq\, 1.\label{eq:nc-gt-pcons}
\end{align}
\end{definition}

If we restrict the supremum to $d=1$ then the constraint~\eqref{eq:nc-gt-pcons} simply expresses that $\vec{X} = (X)$ and $\vec{Y} = (Y)$ should have norm at most $1$, so that $\biasnc$ is indeed a relaxation of the complex bias, i.e., $\biasc(M)\leq \biasnc(M)$. Note moreover that if $M$ is a real diagonal matrix then only the vectors on the diagonal of $\vec{X}$ and $\vec{Y}$ contribute to~\eqref{eq:nc-gt-p}. The constraint~\eqref{eq:nc-gt-pcons} implies that these vectors must have norm at most $1$. Therefore, in the case of a classical XOR game we have $\biasnc(G)=\biascom(G)$, the relaxation of the bias defined in~\eqref{eq:class-bias-sdp}.

\bigskip

\begin{topbotframe}
\begin{example}[Unentangled bias of the games $(T_n)$ (2)]\label{ex:pt-unentangled-2} We now observe that for the family $(T_n)$ we have $\biasnc(T_n)\leq 1/\sqrt{n}$, implying that $\biasnc(T_n)=\bias^\C(T_n)=\bias(T_n)=1/\sqrt{n}$. Indeed, let $\vec{X}\in\vecm{d}{\C^n}$ and $\vec{Y}\in\vecm{d}{\C^n}$ be arbitrary vector-valued matrices satisfying the constraints~\eqref{eq:nc-gt-pcons}. The resulting value in~\eqref{eq:nc-gt-p} is
\begin{align*}
 \Tr\big( (\vec{X}\odot \vec{Y})\, T_n\big) &= 
 \frac{1}{2\sqrt{n}} \Big(\sum_{i=1}^n \sum_{j=1}^d \big( \bra{00} X_j \otimes Y_j \ket{ii} + \bra{ii} X_j \otimes Y_j \ket{00}\big)\Big)\\
    &= \frac{1}{2\sqrt{n}} \Big(\Big(\sum_{i=1}^n \sum_{j=1}^d \bra{0} X_j\ket{i} \bra{0} Y_j \ket{i}\Big) + \Big( \sum_{i=1}^n \sum_{j=1}^d \bra{i} X_j\ket{0} \bra{i} Y_j \ket{0}\Big)\Big) \\
    &\le \frac{1}{2\sqrt{n}} \Big(\Big(\sum_{i=1}^n\sum_{j=1}^d |\bra{0}X_j \ket{i}|^2\Big)^{1/2} \Big(\sum_{i=1}^n\sum_{j=1}^d |\bra{0}Y_j \ket{i}|^2\Big)^{1/2}\\
		&\hskip2cm +\Big(\sum_{i=1}^n\sum_{j=1}^d |\bra{i}X_j \ket{0}|^2\Big)^{1/2} \Big(\sum_{i=1}^n\sum_{j=1}^d |\bra{i}Y_j \ket{0}|^2\Big)^{1/2}\Big) \\
		&\leq \frac{1}{2\sqrt{n}} \Big( \Big\| \sum_j  X_j X_j^\dagger\Big\|_\infty^{1/2}\Big\| \sum_j  Y_j Y_j^\dagger\Big\|_\infty^{1/2} + \Big\| \sum_j   X_j^\dagger X_j\Big\|_\infty^{1/2}\Big\| \sum_j  Y_j^\dagger Y_j \Big\|_\infty^{1/2}\Big)\\
&		\le \frac{1}{\sqrt{n}},
\end{align*}
where the first inequality follows from the Cauchy-Schwarz inequality, the second uses that for any $Z$,
$$\sum_{i=1}^n |\bra{0}Z \ket{i}|^2 \,\leq\, \bra{0} ZZ^\dagger \ket{0}
\qquad\text{and}\qquad 
\sum_{i=1}^n \,|\bra{i}Z \ket{0}|^2 \,\leq\, \bra{0} Z^\dagger Z\ket{0},$$
and the last follows since $\vec{X},\vec{Y}$ satisfy~\eqref{eq:nc-gt-pcons}.  
\end{example}
\end{topbotframe}

The following lemma shows that $\biasnc(M)$ is never ``unreasonably large", that is, it is never larger than $\|M\|_1$, which is the bias that the players would obtain if they were allowed to apply a \emph{single} joint unitary simultaneously on both their message registers (something one might call
the ``colluding bias"). 

\begin{lemma}\label{lem:realval} Let $n$ be an integer and $M\in \lin{\C^n\otimes\C^n}$. Then
$$ \biasnc(M) \,\leq\, \|M\|_1.$$
\end{lemma}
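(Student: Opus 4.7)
The plan is to reduce the claim to a bound on operator norms via the matrix Hölder inequality $|\Tr(ZM)| \leq \|Z\|_\infty \|M\|_1$, applied with $Z = \vec{X}\odot\vec{Y} = \sum_r X_r\otimes Y_r$. Since the supremum in the definition of $\biasnc(M)$ ranges over $(\vec{X},\vec{Y})$ satisfying \eqref{eq:nc-gt-pcons}, it suffices to show that $\|\vec{X}\odot \vec{Y}\|_\infty \leq 1$ for every such pair; in fact, I expect that only two of the four constraints in \eqref{eq:nc-gt-pcons} will actually be needed.

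To bound the operator norm of $\sum_r X_r\otimes Y_r$, I would factor it as a composition of two maps, each of which is controlled by a single one of the constraints. Introducing an auxiliary space $\C^d$ with basis $\{e_r\}$, define
\[
L : \C^n\otimes\C^n \;\to\; \C^n\otimes\C^d\otimes\C^n, \qquad v\otimes w \,\mapsto\, \sum_r (X_r v)\otimes e_r\otimes w,
\]
and
\[
R : \C^n\otimes\C^d\otimes\C^n \;\to\; \C^n\otimes\C^n, \qquad u\otimes e_r\otimes w \,\mapsto\, u\otimes (Y_r w).
\]
A direct unfolding shows that $R\circ L = \sum_r X_r\otimes Y_r = \vec{X}\odot\vec{Y}$. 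A short computation gives $L^\dagger L = \big(\sum_r X_r^\dagger X_r\big)\otimes \Id = (\vec{X}^\dagger\vec{X})\otimes \Id$, so $\|L\|_\infty^2 = \|\vec{X}^\dagger\vec{X}\|_\infty \leq 1$. Symmetrically, $R R^\dagger = \Id\otimes\big(\sum_r Y_r Y_r^\dagger\big) = \Id\otimes(\vec{Y}\vec{Y}^\dagger)$, so $\|R\|_\infty^2 = \|\vec{Y}\vec{Y}^\dagger\|_\infty \leq 1$.

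Submultiplicativity of the operator norm then yields $\|\vec{X}\odot\vec{Y}\|_\infty \leq \|R\|_\infty\,\|L\|_\infty \leq 1$, and combining with Hölder completes the proof. The only real choice in the argument is identifying the factorization $R\circ L$; once the auxiliary space $\C^d$ is placed ``between'' the two tensor factors, the norm bounds fall out mechanically from the definitions of $\vec{X}^\dagger\vec{X}$ and $\vec{Y}\vec{Y}^\dagger$, so there is no substantial obstacle. I note in passing that the bound in fact holds under the weaker hypothesis that $\|\vec{X}^\dagger\vec{X}\|_\infty$ and $\|\vec{Y}\vec{Y}^\dagger\|_\infty$ are at most $1$; the full symmetric set of four constraints in \eqref{eq:nc-gt-pcons} is presumably needed elsewhere (e.g., to ensure the semidefinite programming formulation of $\biasnc$).
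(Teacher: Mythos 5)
Your argument is correct, and it reaches the conclusion by a genuinely different (and arguably cleaner) route than the paper. The paper expands $M$ in its singular value decomposition $M=\sum_j s_j\ket{u_j}\bra{v_j}$ and bounds $\sum_{i,j} s_j\,|\bra{v_j}X_i\otimes Y_i\ket{u_j}|$ by two successive applications of the Cauchy--Schwarz inequality, using the constraints $\|\vec{X}\vec{X}^\dagger\|_\infty\le 1$ and $\|\vec{Y}^\dagger\vec{Y}\|_\infty\le 1$. You instead isolate the stronger intermediate statement $\|\vec{X}\odot\vec{Y}\|_\infty\le\|\vec{X}^\dagger\vec{X}\|_\infty^{1/2}\,\|\vec{Y}\vec{Y}^\dagger\|_\infty^{1/2}$ via the factorization $\vec{X}\odot\vec{Y}=R\circ L$ through the space $\C^n\otimes\C^d\otimes\C^n$, and then conclude by trace duality, $|\Tr(ZM)|\le\|Z\|_\infty\|M\|_1$. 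Your computations of $L^\dagger L=(\vec{X}^\dagger\vec{X})\otimes\Id$ and $RR^\dagger=\Id\otimes(\vec{Y}\vec{Y}^\dagger)$ are right, and this factorization is the standard proof that a Haagerup-type norm dominates the operator norm of $\sum_r X_r\otimes Y_r$. Note that the two proofs use complementary ``crossed'' pairs of constraints from~\eqref{eq:nc-gt-pcons}: the paper uses $\|\vec{X}\vec{X}^\dagger\|_\infty\le 1$ together with $\|\vec{Y}^\dagger\vec{Y}\|_\infty\le 1$, whereas you use $\|\vec{X}^\dagger\vec{X}\|_\infty\le 1$ together with $\|\vec{Y}\vec{Y}^\dagger\|_\infty\le 1$. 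The remark following the lemma in the paper confirms that either crossed pair suffices---which matches your closing observation---and also that the two ``uncrossed'' pairings do not (the game $T_n$ gives a counterexample), so your weakening of the hypotheses is as far as one can push it. What your approach buys is an explicit and reusable operator-norm bound on $\vec{X}\odot\vec{Y}$ itself; what the paper's buys is that it stays entirely at the level of scalar estimates on the singular vectors of $M$, without introducing the auxiliary space.
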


\begin{proof}
Let $\vec{X},\vec{Y}$ be vector-valued matrices satisfying the constraint~\eqref{eq:nc-gt-pcons}. Write $\vec{X} = (X_i)$ (resp. $\vec{Y}=(Y_i)$), where each $X_i$ (resp. $Y_i$) is in $\lin{\C^n}$, and let $M = \sum_j s_j \ket{u_j}\bra{v_j}$ be the singular value decomposition of $M$. Then by the Cauchy-Schwarz inequality, 
\begin{align*}
| \Tr((\vec{X}\odot \vec{Y}) M)| &\leq \sum_{i,j} s_j\, \big| \bra{v_j} X_i\otimes Y_i \ket{u_j} \big|\\
&\leq  \sum_{i,j} s_j\, \Big(\bra{v_j} X_iX_i^\dagger \otimes \Id \ket{v_j}\Big)^{1/2} \Big(\bra{u_j} \Id \otimes Y_i^\dagger Y_i \ket{u_j}\Big)^{1/2}\\
&\leq  \sum_{j} s_j \, \Big(\sum_i \bra{v_j} X_iX_i^\dagger \otimes \Id \ket{v_j}\Big)^{1/2}\Big(\sum_i \bra{u_j} \Id \otimes Y_i^\dagger Y_i \ket{u_j}\Big)^{1/2}\\
&\leq \sum_{j} s_j = \|M\|_1,
\end{align*}
where the last inequality follows from the constraint~\eqref{eq:nc-gt-pcons}.
\end{proof}

The proof of Lemma~\ref{lem:realval} only makes use of two of the four constraints in~\eqref{eq:nc-gt-pcons}, and it holds as long as either both constraints $\|\vec{X} \vec{X}^\dagger\|_\infty$, $\|\vec{Y}^\dagger \vec{Y}\|_\infty \leq 1$, or both constraints $\| \vec{X}^\dagger\vec{X}\|_\infty,\| \vec{Y}\vec{Y}^\dagger\|_\infty \leq 1$, hold. If instead one was to keep only (say) the two constraints  $\|\vec{X} \vec{X}^\dagger\|_\infty,\| \vec{Y}\vec{Y}^\dagger\|_\infty \leq 1$, then the lemma would no longer be true. This can be seen by taking $M$ to be the matrix associated with the game $T_n$, as in Example~\ref{ex:pt-m}.
Let $\vec{X} = \vec{Y} \in \vecm{n}{\C^{n+1}}$ have the basis vector $e_i\in\R^n$ in position $(i,0)$ for $i=1,\ldots,n$, and $0$ elsewhere. Then $\vec{X}\vec{X}^\dagger=\vec{Y}\vec{Y}^\dagger=\mathrm{diag}(0,1,\ldots,1)$ so both constraints are satisfied. However, one can easily compute
$$ \Tr\big( (\vec{X}\odot \vec{Y}) M\big) \,=\, \frac{1}{2\sqrt{n}} \sum_i 1 \,=\, \frac{\sqrt{n}}{2},$$
which is much larger than $\|M\|_1=1$. 

In addition, we note that all four constraints in~\eqref{eq:nc-gt-pcons} are necessary in order for $\biasnc(M)$ to be a constant-factor relaxation of $\bias(M)$. Indeed, suppose for example that we drop the constraint $\| \vec{X}^\dagger\vec{X}\|_\infty\leq 1$. Then $\vec{X}$ as defined above, and $\vec{Y} = \vec{X}/\sqrt{n} $ would constitute a feasible solution, with corresponding value  $\Tr( (\vec{X}\odot \vec{Y}) M) = 1/2$: this is much larger than $\bias(M) = 1/\sqrt{n}$. The following theorem states that, when all constraints are present, $\biasnc$ indeed gives a constant factor approximation to both the unentangled and complex biases. 

\begin{theorem}\label{thm:unentangled-bias} Let $G$ be a quantum XOR game of size $n$. For any $\eps>0$, one can approximate up to $(1\pm\eps)$ in time $\poly(n,\log 1/\eps)$ a quantity $\biasnc(G)$ which satisfies
$$ \bias(G) \,\leq\,\bias^\C(G)\,\leq\,\biasnc(G) \,\leq\, 2\,\biasc(G)\leq 2\sqrt{2}\,\bias(G).$$
Moreover, there is an infinite sequence of games for which the ratio $\biasnc(G)/\bias^\C(G)$ converges to $2$. Also, it always holds that $\biasnc(G) \leq \|M(G)\|_1 \leq 1$. Finally, the upper bounds on $\biasnc$ are explicit, in the sense that the algorithm can also output, in time $\poly(n,1/\eps)$, a description of a complex strategy achieving bias $\biasnc(G)/(2+\eps)$ as well as a strategy achieving bias $\biasnc(G)/(2\sqrt{2}+\eps)$. 
\end{theorem}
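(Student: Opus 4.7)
The chain of inequalities in the theorem splits naturally into four separate claims, of which two are already established and two remain. The trivial direction $\bias(G)\leq\bias^\C(G)\leq\biasnc(G)$ follows directly from the definitions: the complex bias relaxes the bias by dropping the Hermiticity constraint on $A,B$, while $\biasnc$ reduces to $\biasc$ when one restricts the supremum in \eqref{eq:nc-gt-p} to $d=1$ (so that the four constraints in \eqref{eq:nc-gt-pcons} collapse to $\|X\|_\infty,\|Y\|_\infty\leq 1$). The rightmost inequality $\biasc(G)\leq\sqrt{2}\,\bias(G)$ is exactly Claim~\ref{claim:obs_value}, and the upper bound $\biasnc(G)\leq \|M(G)\|_1\leq 1$ is Lemma~\ref{lem:realval} together with the assumption $\|M(G)\|_1\leq 1$ built into Definition~\ref{def:qxor}. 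Thus the heart of the proof is the inequality $\biasnc(G)\leq 2\,\biasc(G)$.

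For this inequality, the plan is to appeal to the non-commutative Grothendieck inequality in the form due to Haagerup~\cite{Haagerup85NCGT}. The idea is to associate to $M(G)$ the bilinear form $\varphi_M:M_n(\C)\times M_n(\C)\to\C$ defined by $\varphi_M(A,B)=\Tr((A\otimes B)M)$, so that $\biasc(G)=\|\varphi_M\|$ is the operator-space norm of $\varphi_M$ on $M_n$ equipped with the operator norm. Haagerup's theorem then produces four states $f_1,f_2,g_1,g_2$ on $M_n$ such that
\begin{equation*}
|\varphi_M(A,B)|\,\leq\,\|\varphi_M\|\,\Big(\tfrac{1}{2}f_1(AA^\dagger)+\tfrac{1}{2}g_1(A^\dagger A)\Big)^{1/2}\Big(\tfrac{1}{2}f_2(BB^\dagger)+\tfrac{1}{2}g_2(B^\dagger B)\Big)^{1/2}.
\end{equation*}
Given any near-optimal $(\vec X,\vec Y)$ for $\biasnc(G)$, the plan is to bound $|\Tr((\vec X\odot\vec Y)M)|$ by expanding it as $\sum_i \varphi_M(X_i,Y_i)$, applying the above inequality termwise, and invoking Cauchy--Schwarz together with the four norm constraints of \eqref{eq:nc-gt-pcons} (which translate exactly to $f_1(\sum X_iX_i^\dagger),g_1(\sum X_i^\dagger X_i),f_2(\sum Y_iY_i^\dagger),g_2(\sum Y_i^\dagger Y_i)\leq 1$) to extract the factor $2$.

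For polynomial-time computability, the plan is to rewrite $\biasnc(G)$ as a semidefinite program in the standard way. Without loss of generality one may assume $d\leq 2n^2$ by a dimension-reduction argument, and one represents the solution by the Gram matrix of the columns of $\vec X,\vec Y$ viewed both row-wise and column-wise. The four operator norm constraints become positive semidefinite constraints on principal submatrices, and the objective $\Re\Tr((\vec X\odot\vec Y)M)$ is a linear function of these Gram entries (after rotating $M$ by a complex phase to make the objective real). Standard interior-point methods then give a $(1\pm\eps)$-approximation in time $\poly(n,\log(1/\eps))$.

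For the remaining two parts: the infinite sequence of games witnessing $\biasnc(G)/\bias^\C(G)\to 2$ will be provided by the family $(\HI_n)$ of Theorem~\ref{thm:higamesummary}, since the stated identity $\bias(\HI_n)=\bias^\C(\HI_n)=\tfrac{n+1}{2n+1}\biasnc(\HI_n)$ gives the ratio $(2n+1)/(n+1)\to 2$. For the explicit strategies, the plan is to make the Grothendieck rounding constructive: given the four states $f_1,g_1,f_2,g_2$ obtained from the SDP dual (or, more concretely, from a constructive proof of the non-commutative Grothendieck inequality such as that of Haagerup), one reads off complex matrices $A,B$ achieving $|\varphi_M(A,B)|\geq \biasnc(G)/(2+\eps)$, and then applies the rounding from Claim~\ref{claim:obs_value} (itself effective via Krivine's explicit two-dimensional rounding) to convert these into Hermitian $A',B'$ of operator norm $\leq 1$ achieving bias at least $\biasnc(G)/(2\sqrt{2}+\eps)$. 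The main obstacle is ensuring that the non-commutative Grothendieck rounding is implementable in polynomial time with an explicit factor of $2$; this will require reproving Haagerup's inequality in a quantitative, constructive form rather than invoking it as a black box.
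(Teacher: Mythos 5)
Your overall architecture matches the paper's: the outer inequalities come from Claim~\ref{claim:obs_value}, the bound $\biasnc(G)\le\|M(G)\|_1$ from Lemma~\ref{lem:realval}, computability from expressing $\biasnc$ as an SDP over $2n^2$ vector variables, the ratio $\to 2$ from the family $(\HI_n)$, and the heart of the matter --- $\biasnc(G)\le 2\,\biasc(G)$ --- from the non-commutative Grothendieck inequality of Pisier and Haagerup. The only genuine difference in route is that the paper invokes the NCGT directly in its ``supremum versus supremum'' form (Theorem~\ref{thm:nc-gt}, i.e.\ Eq.~(7.2) of Pisier's survey), where the deduction is just the observation that the constraint $\|\vec{X}\vec{X}^\dagger\|_\infty+\|\vec{X}^\dagger\vec{X}\|_\infty\le 2$ is weaker than~\eqref{eq:nc-gt-pcons}, whereas you start from the factorization-through-states form and re-derive the supremum comparison by applying it termwise and using Cauchy--Schwarz. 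That derivation is exactly the standard equivalence between the two formulations, so in substance the proofs coincide. Your handling of the explicit strategies is also aligned with the paper, which likewise defers the constructive version of Haagerup's inequality to a companion work rather than proving it in place.

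There is, however, an error in the form of Haagerup's theorem you quote: the factors of $\tfrac12$ inside the parentheses should not be there. The correct statement is
\begin{equation*}
|\varphi_M(A,B)|\,\le\,\|\varphi_M\|\,\big(f_1(AA^\dagger)+g_1(A^\dagger A)\big)^{1/2}\big(f_2(BB^\dagger)+g_2(B^\dagger B)\big)^{1/2}.
\end{equation*}
With your normalization, running your termwise-plus-Cauchy--Schwarz computation against the constraints~\eqref{eq:nc-gt-pcons} makes each factor at most $1$ and yields $\biasnc(G)\le\biasc(G)$, i.e.\ constant $1$ rather than $2$ --- which is false, as the game $\HI_1$ of Theorem~\ref{thm:higamesummary} satisfies $\biasc(\HI_1)=2/5<3/5=\biasnc(\HI_1)$ (so the inequality you wrote cannot be what Haagerup proved). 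With the correct, un-averaged form, each factor is bounded by $\sqrt{2}$ and the product gives exactly the factor $2$ you intend to extract. This is a misstatement rather than a structural flaw, but as written the key step does not produce the claimed constant and must be corrected.
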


\begin{proof}
It is not difficult to verify that $\biasnc$, just like $\biascom$, is a semidefinite program. As such, it can be solved up to precision $\eps$ in time $\poly(n,\log 1/\eps)$. In slightly more detail, the semidefinite program corresponding to $\biasnc$ is over $2 n^2$ vector variables and the goal function is a linear function in the inner products between these vectors. To see why the constraint~\eqref{eq:nc-gt-pcons} is a semidefinite constraint, it suffices to notice that (say) $\|\vec{X} \vec{X}^\dagger\|_\infty \le 1$ is equivalent to $\vec{X} \vec{X}^\dagger \le \Id$.
\pnote{we cannot really write both $X X^\dagger$ and $X^\dagger X$; I think instead of the latter we can only write $\overline{X^\dagger X}$, i.e., its complex conjugate, but maybe this need not be mentioned}

The first and last inequalities follow from Claim~\ref{claim:obs_value}. The second inequality was already observed above. 
The substance of the theorem is in the inequality $\biasnc(G) \leq 2\,\biasc(G)$. This inequality is a consequence of the ``non-commutative Grothendieck inequality'' proved by Pisier~\cite{Pisier78NCGT} and Haagerup~\cite{Haagerup85NCGT}. While technically it follows directly from that result, the connection may not be immediate to readers unfamiliar with the uses of Grothendieck's inequality made in the functional analysis literature, and we explain the derivation in detail in Section~\ref{sec:grothendieck}.

The gap $\biasnc/\bias^\C\to 2$ follows from Theorem~\ref{thm:higamesummary}.
The explicit forms of the upper bounds follow from the algorithmic variant of the non-commutative Grothendieck inequality, as detailed in~\cite{RegevV12c}.
\end{proof}

\subsection{The entangled bias}\label{sec:entangled-bias}

We now consider the case that the players are allowed to initialize their private spaces $\HA$, $\HB$ in an arbitrary state $\ket{\Psi}\in \HA\otimes \HB$. Following Definition~\ref{def:qxor}, the resulting \emph{entangled bias} $\bias^*(G):= \bias^*(M(G))$ can be defined as follows. 

\begin{definition}\label{def-bias} Let $n$ be an integer and $M\in\herm{\C^n\otimes \C^n}$. The \emph{entangled bias} of $M$, denoted $\bias^*(M)$, is defined as 
\beq\label{eq:omegaq-def}
\bias^*(M) :=\, \sup_{\substack{\HA,\HB,\ket{\Psi},\, A\in \obs{\C^n\otimes\HA},\\ B\in \obs{\C^n\otimes\HB}}} \big| \bra{\Psi} \Tr_{\C^n\otimes \C^n}\big((A\otimes B) \, (M\otimes \Id_{\HA\otimes \HB}) \big) \ket{\Psi} \big|,
\eeq
where the supremum is taken over all finite dimensional Hilbert spaces $\HA,\HB$ and states $\ket{\Psi}\in\HA\otimes\HB$.
\end{definition}

\begin{topbotframe}
\begin{example}[Entangled bias of the games $(T_n)$]\label{ex:pt-bias}
We show that for any $n$ the game $T_n$ can be won with probability arbitrarily close to $1$, provided the players are allowed to share an entangled state of large enough dimension. 
First recall from Example~\ref{ex:pt-unentangled-1} that $T_1$ can be won with probability $1$ (even without any entanglement). In order to succeed in the game $T_n$ for general $n$, the players will use a specific entangled state in order to reduce to the case $n=1$. For any $d\ge 1$ this state, which falls in the family of so-called~\emph{embezzlement states},\footnote{The specific state we use was introduced in~\cite{LTW08}. See also~\cite{vDH03} for a ``universal'' family of states having similar ``embezzlement'' properties.} is defined as  
$$ \ket{\emb_{d}} \,:=\, \frac{1}{\sqrt{d}}\sum_{j=1}^d \big(\ket{n+1}\ket{n+1}\big)^{\otimes j} \otimes \ket{\Psi_n^{me}}^{\otimes (d-j)} \,\in\, (\C^{n+1})^{\otimes d} \otimes (\C^{n+1})^{\otimes d}.$$
Consider the following strategy for the players in $T_n$, defined for any integer $d$. The players initialize their private registers in state $\ket{\emb_{d}}$. Upon receiving their respective message register, controlled on the message register not being in state $\ket{0}$ they each apply the unitary transformation corresponding to a cyclic shift on the $d+1$ copies of $\C^{n+1}$ in their possession. This leads to the transformation\footnote{Note that this transformation requires the players' message registers to be of dimension $n+2$ instead of $n+1$. This is easily achieved by having the players use an additional qubit as ancilla each.}
\begin{align*}
 \big(\ket{0}\ket{0} \big)\otimes \ket{\emb_{d}} \,&\mapsto\, \big(\ket{0}\ket{0} \big)\otimes \ket{\emb_{d}},\\
 \ket{\Psi_n^{me}}\otimes \ket{\emb_{d}} \,&\mapsto\, \ket{n+1}\ket{n+1}\otimes \Big( \frac{1}{\sqrt{d}}\sum_{j=0}^{d-1} \big(\ket{n+1}\ket{n+1}\big)^{\otimes j} \otimes \ket{\Psi_n^{me}}^{\otimes (d-j)}\Big).
\end{align*}
Since the state on the right has overlap $1-O(1/d)$ with $\ket{\emb_{d}}$, after the cyclic shift and up to a local unitary mapping $\ket{n+1}\mapsto\ket{1}$ the players' message registers are in a state close to what it would be in the game $T_1$. They may then apply their perfect strategy for $T_1$, in which case one can verify that they will succeed with probability $1-O(1/d)$ in $T_n$. We refer to the proof of Lemma~\ref{lem:rankone-equiv-rev} for more details.
\end{example}
\end{topbotframe}

In contrast to the unentangled case, relaxing the supremum in~\eqref{eq:omegaq-def} to be taken over all complex matrices with operator norm at most $1$ does not change the definition of the bias, as is shown in the following claim. 

\begin{claim}\label{claim:obs_value-quant} Let $M\in\herm{\C^n\otimes \C^n}$. Then
\beq\label{eq:obs_value-0}
\bias^*(M) \, =\, \sup_{\substack{\HA,\HB,\,\ket{\Phi},\ket{\Psi},\,A\in\lin{\C^n\otimes\HA},\\ B\in\lin{\C^n\otimes\HB},\, \|A\|_{\infty}\leq 1, \|B\|_{\infty}\leq 1}} \big|\bra{\Phi} \Tr_{\C^n\otimes \C^n}\big((A\otimes B) \, (M\otimes \Id_{\HA\otimes\HB}) \big) \ket{\Psi}\big|,
\eeq
where the supremum is taken over all finite-dimensional Hilbert spaces $\HA,\HB$ and states $\ket{\Phi},\ket{\Psi}\in\HA\otimes\HB$.
\end{claim}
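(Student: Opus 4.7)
The plan is to reduce, via two dilations on enlarged private spaces, any strategy appearing on the right-hand side of~\eqref{eq:obs_value-0} to one of the form in Definition~\ref{def-bias}, achieving at least the same bias. The ``$\le$'' direction is immediate, since any observable is a Hermitian contraction and $\ket{\Phi}=\ket{\Psi}$ is admissible; the substance lies in the reverse inequality.

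First I would simultaneously fuse the two states and make $A,B$ Hermitian. Fix complex contractions $A\in\lin{\C^n\otimes\HA}$, $B\in\lin{\C^n\otimes\HB}$ and states $\ket{\Phi},\ket{\Psi}\in\HA\otimes\HB$, set $v:=\Tr((A\otimes B)(M\otimes\ket{\Psi}\bra{\Phi}))$, and introduce a fresh qubit $\HA',\HB'$ on each side. Form
$$\ket{\Xi} \,=\, \frac{1}{\sqrt{2}}\big(\ket{00}_{\HA'\HB'}\ket{\Phi} + \ket{11}_{\HA'\HB'}\ket{\Psi}\big),\quad \tilde A \,=\, \ket{0}\bra{1}_{\HA'}\otimes A + \ket{1}\bra{0}_{\HA'}\otimes A^\dagger,$$
and $\tilde B$ analogously. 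These are Hermitian, and the block-diagonal form $\tilde A^2 = \ket{0}\bra{0}\otimes AA^\dagger + \ket{1}\bra{1}\otimes A^\dagger A$ gives $\|\tilde A\|_\infty \le 1$. Since $\ket{\Xi}\bra{\Xi}$ has only ancilla components $\ket{aa}\bra{bb}$ with $a,b\in\{0,1\}$, the only terms of $\tilde A\otimes\tilde B$ surviving the partial trace over $\HA'\otimes\HB'$ are $\ket{00}\bra{11}\otimes A\otimes B$ (paired with $\ket{11}\bra{00}\otimes\ket{\Psi}\bra{\Phi}$) and its adjoint. Using $M=M^\dagger$ to identify the second contribution with the complex conjugate of the first, I expect
$$\Tr\big((\tilde A\otimes\tilde B)(M\otimes\ket{\Xi}\bra{\Xi})\big) \,=\, \tfrac12(v+\bar v) \,=\, \Re v.$$
Multiplying $A$ by the phase $e^{-i\arg v}$ (which preserves $\|A\|_\infty$) then promotes $\Re v$ to $|v|$.

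The second step applies the Halmos dilation: on one more private qubit $\HA''$, set
$$\hat A \,=\, \ket{0}\bra{0}_{\HA''}\otimes \tilde A - \ket{1}\bra{1}_{\HA''}\otimes \tilde A + (\ket{0}\bra{1}+\ket{1}\bra{0})_{\HA''}\otimes \sqrt{\Id - \tilde A^2}.$$
Since $\tilde A$ commutes with $\sqrt{\Id - \tilde A^2}$, a short block computation yields $\hat A^2 = \Id$, so $\hat A \in \obs{\C^n\otimes(\HA'\otimes\HA\otimes\HA'')}$; define $\hat B$ analogously. With the further-extended state $\ket{\Xi'} := \ket{\Xi}\otimes\ket{0}_{\HA''}\ket{0}_{\HB''}$, the identity $\bra{0}_{\HA''}\hat A\ket{0}_{\HA''} = \tilde A$ (and its analogue for $\hat B$) gives
$$\Tr\big((\hat A\otimes\hat B)(M\otimes\ket{\Xi'}\bra{\Xi'})\big) \,=\, \Tr\big((\tilde A\otimes\tilde B)(M\otimes\ket{\Xi}\bra{\Xi})\big) \,=\, |v|,$$
which establishes RHS $\le \bias^*(M)$.

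I do not anticipate any conceptual obstacle: both steps are entirely standard operator-theoretic dilations. The only real care required is the tensor-factor bookkeeping — the ancillas $\HA',\HA''$ must be appended to the private side so that $\tilde A$ and $\hat A$ still act on $\C^n$ tensored with an enlarged private space — together with the short block-matrix verifications of $\|\tilde A\|_\infty \le 1$ and $\hat A^2 = \Id$. Hermiticity of $M$ enters the argument only in the first step, to relate the two surviving cross-terms as complex conjugates.
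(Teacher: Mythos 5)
Your proof is correct and follows essentially the same route as the paper: the off-diagonal Hermitian dilation $\ket{0}\bra{1}\otimes A+\ket{1}\bra{0}\otimes A^{\dagger}$ together with the superposed state $\tfrac{1}{\sqrt 2}(\ket{00}\ket{\Phi}+\ket{11}\ket{\Psi})$ is exactly the paper's construction, and Hermiticity of $M$ enters in the same place to identify the two cross terms as conjugates. The only difference is cosmetic: the paper first reduces to unitary $A,B$ by convexity (so $\tilde A$ is automatically an observable), whereas you keep general contractions and restore $\hat A^{2}=\Id$ afterwards via a Halmos dilation — both are valid.
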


\begin{proof}
It will suffice to show that the supremum in~\eqref{eq:omegaq-def} is at least as large as that in~\eqref{eq:obs_value-0}, since the other inequality is clear. Let $\HA,\HB$ be finite-dimensional Hilbert spaces, and let $\ket{\Phi},\ket{\Psi},A,B$ achieve the supremum in the right-hand side of~\eqref{eq:obs_value-0}. Without loss of generality we may assume that $A,B$ are unitary and the expression inside the absolute value
 is real and non-negative. Consider the two observables
$$\tilde{A} \,=\,\begin{pmatrix} 0 & A \\ A^\dagger & 0 \end{pmatrix} \in \obs{\C^n \otimes (\HA \otimes \C^2)} 
    \qquad \text{and} \qquad  
   \tilde{B} \,=\,\begin{pmatrix} 0 & B \\ B^\dagger & 0 \end{pmatrix} \in \obs{\C^n \otimes (\HB \otimes \C^2)},$$
and the state $\ket{\tilde{\Psi}} =  \frac{1}{\sqrt{2}} \big( \ket{\Phi} \otimes\ket{00}+\ket{\Psi}\otimes \ket{11}\big) \in 
  (\HA \otimes \C^2) \otimes (\HB \otimes \C^2)$. 
Then
\begin{align}
\bra{\tilde{\Psi}}\Tr_{\C^n\otimes \C^n}\big((\tilde{A}\otimes \tilde{B}) \, (M\otimes \Id_{\HA\otimes\HB}) \big)\ket{\tilde{\Psi}} &= \frac{1}{2}\,\big(\bra{\Phi}\Tr_{\C^n\otimes \C^n}\big((A\otimes B) \, (M\otimes \Id_{\HA\otimes\HB}) \big)\ket{{\Psi}} \label{eq:obs_value-1}\notag\\[2mm]
&\qquad + \bra{\Psi}\Tr_{\C^n\otimes \C^n}\big((A^\dagger \otimes B^\dagger) \, (M\otimes \Id_{\HA\otimes\HB}) \big)\ket{{\Phi}}\big)\notag\\
&= \bra{\Phi}\Tr_{\C^n\otimes \C^n}\big((A\otimes B) \, (M\otimes \Id_{\HA\otimes\HB}) \big)\ket{{\Psi}},\notag
\end{align}
since $M$ is Hermitian and given our assumption on the last expression above being real. 
\end{proof}

Next, we define the \emph{maximally entangled bias} $\biasme(G):=\biasme(M(G))$, in which players are restricted to sharing the maximally entangled state $\ket{\Psi^{me}}$. Following Definition~\ref{def:qxor}, it can be defined as follows.

\begin{definition}\label{def:bias-me} Let $n$ be an integer and $M\in\herm{\C^n\otimes \C^n}$. The \emph{maximally entangled bias} of $M$, denoted $\biasme(M)$, is defined as
\begin{equation}\label{eq:omegame-def}
\biasme(M)\,:=\, \sup_{d,\,A,B\in \obs{\C^n\otimes\C^d}}  \big| \bra{\Psi_d^{me}} \Tr_{\C^n\otimes \C^n}\big((A\otimes B) \, (M\otimes \Id_{\C^d\otimes \C^d}) \big) \ket{\Psi_d^{me}} \big|.
\end{equation}
\end{definition}

As in the proof of Claim~\ref{claim:obs_value-herm} it is easy to see that the absence of an explicit ancilla space for the players in~\eqref{eq:omegame-def} is without loss of generality.
We also note that as was the case for $\bias^*$, one can equivalently take the supremum here over all matrices with operator norm at most $1$. This follows by a straightforward modification of the proof of Claim~\ref{claim:obs_value-quant}. It also follows from this argument that for all games $G$, $\biasme(G) \ge \bias^\C(G)$; in fact, there exists a strategy using just one EPR pair that achieves bias $\bias^\C(G)$ in $G$.

The following lemma shows that $\biasnc(M)$ is always an upper bound on $\biasme(M)$. The lemma already appears in~\cite{HI95}, but we give a (slightly different) proof that will be useful to understand why $\biasnc(M)$ is in general \emph{not} an upper bound on the entangled bias $\bias^*(M)$ (as can be seen from the game $T_n$, which as shown in Examples~\ref{ex:pt-unentangled-2} and~\ref{ex:pt-bias} satisfies $\biasnc(T_n) = 1/\sqrt{n} \ll \bias^*(T_n)=1$). While reading the proof, the reader might wish to keep the proof of Lemma~\ref{lem:bias-relaxation-classical} in mind. 

\begin{lemma}\label{lem:bias-relaxation-quantum} Let $M\in\herm{\C^n\otimes \C^n}$. Then 
$$ \biasme(M) \,\leq\, \biasnc(M).$$
\end{lemma}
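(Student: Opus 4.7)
The plan is to mimic the proof of Lemma~\ref{lem:bias-relaxation-classical} (Tsirelson's classical reduction), replacing the vector-vector Schmidt construction with a block-decomposition of the observables. Given a near-optimal strategy $(A,B)$ for $\biasme(M)$ with $A,B\in\obs{\C^n\otimes\C^d}$ and state $\ket{\Psi_d^{me}}$, I would decompose
$$A\,=\,\sum_{i,j=1}^d A^{(i,j)}\otimes \ket{i}\bra{j},\qquad B\,=\,\sum_{i,j=1}^d B^{(i,j)}\otimes \ket{i}\bra{j},$$
with $A^{(i,j)},B^{(i,j)}\in\lin{\C^n}$ and, by Hermiticity, $(A^{(i,j)})^\dagger=A^{(j,i)}$ and likewise for $B$.

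Next I would define a pair of vector-valued matrices $\vec{X},\vec{Y}\in\vecm{d^2}{\C^n}$, indexing the ``vector'' coordinate by pairs $r=(i,j)\in[d]^2$, by
$$X_{(i,j)}\,:=\,\frac{1}{\sqrt{d}}A^{(i,j)},\qquad Y_{(i,j)}\,:=\,\frac{1}{\sqrt{d}}B^{(i,j)}.$$
A direct calculation using $\bra{\Psi_d^{me}}(\ket{i}\bra{j}\otimes \ket{k}\bra{l})\ket{\Psi_d^{me}}=\frac{1}{d}\delta_{ik}\delta_{jl}$ gives
$$\bra{\Psi_d^{me}} \Tr_{\C^n\otimes\C^n}\big((A\otimes B)(M\otimes \Id)\big) \ket{\Psi_d^{me}} \,=\, \frac{1}{d}\sum_{i,j}\Tr\big((A^{(i,j)}\otimes B^{(i,j)})M\big) \,=\, \Tr\big((\vec{X}\odot\vec{Y})M\big),$$
so the two values match exactly.

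The content of the lemma then reduces to checking the four constraints in~\eqref{eq:nc-gt-pcons}. Using $A^2=\Id_{nd}$, which forces $\sum_j A^{(i,j)}A^{(j,l)}=\delta_{il}\Id_n$ block-wise, I get
$$\vec{X}\vec{X}^\dagger \,=\, \frac{1}{d}\sum_{i,j} A^{(i,j)}(A^{(i,j)})^\dagger \,=\, \frac{1}{d}\sum_{i,j} A^{(i,j)}A^{(j,i)} \,=\, \Id_n,$$
and by the same identity (after relabeling) $\vec{X}^\dagger\vec{X}=\Id_n$, with the analogous computations for $\vec{Y}$. So $(\vec{X},\vec{Y})$ is feasible for~\eqref{eq:nc-gt-p} and the bias it achieves there is at least $\biasme(M)$.

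The main subtlety to flag is why the argument does not extend to $\bias^*(M)$: if the shared state has Schmidt coefficients $\lambda_i$ that are not all equal to $1/\sqrt{d}$, the natural analogue of the construction would weight the blocks asymmetrically (as in Lemma~\ref{lem:bias-relaxation-classical}), and as already observed in the remark following Lemma~\ref{lem:realval} such a weighting can control at most two of the four norm constraints simultaneously. It is precisely this failure that leaves room for the entangled--unentangled gap witnessed by the family $(T_n)$, and this is exactly where the operator-space Grothendieck inequality will be needed instead.
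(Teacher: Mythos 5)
Your proof is correct and is essentially the paper's own argument: both decompose the observables into their $d^2$ blocks $(\Id\otimes\bra{i})A(\Id\otimes\ket{j})$, normalize by $1/\sqrt{d}$, match the objective values, and verify the four constraints by noting that $\vec{X}\vec{X}^\dagger$ and $\vec{X}^\dagger\vec{X}$ are averages of the diagonal blocks of $AA^\dagger$ and $A^\dagger A$. Your closing remark on why the argument breaks for non-uniform Schmidt coefficients also matches the discussion the paper gives immediately after the lemma.
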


\begin{proof} Let $d$ be an integer and $A,B\in \obs{\C^n\otimes \C^d}$. The expression in the supremum in~\eqref{eq:omegame-def} can be written as
\begin{equation*} 
 \frac{1}{d} \sum_{i,j} \Tr\big( (\Id \otimes \bra{i})A(\Id \otimes \ket{j})\otimes  (\Id \otimes \bra{i})B( \Id \otimes \ket{j}) M\big) =
 \Tr\big( (\vec{A}\odot\vec{B}) M \big)
\end{equation*}
where the vector-valued matrices $\vec{A},\vec{B}\in\vecm{d^2}{\C^n}$ are defined as
$$ \vec{A} \,:=\, \frac{1}{\sqrt{d}}\big( (\Id \otimes \bra{i})A(\Id \otimes \ket{j}) \big)_{i,j} \quad\text{and}\quad \vec{B}\,:=\, \frac{1}{\sqrt{d}}\big( (\Id \otimes \bra{i})B(\Id \otimes \ket{j}) \big)_{i,j}.$$
Note that if we think of $A$ as a $d \times d$ block matrix with each block of size $n \times n$, then 
$\vec{A}$ simply corresponds to a list of the $d^2$ blocks of $A$, and similarly for $\vec{B}$. In particular the matrix $\vec{A} \vec{A}^\dagger $ (resp.\ $\vec{A}^\dagger \vec{A}$) corresponds to the average of the $d$ diagonal blocks of the matrix $AA^\dagger$ (resp.\ $A^\dagger A$), and must therefore have operator norm at most $1$. Similar bounds hold for $\vec{B}\vec{B}^\dagger$ and $\vec{B}^\dagger \vec{B}$, showing that $(\vec{A},\vec{B})$ satisfy the constraint~\eqref{eq:nc-gt-pcons}, and the proof is complete. 
\end{proof}

As noted above, $\biasnc(G)$ is in general not an upper bound on the entangled bias $\bias^*(G)$. It is instructive to see what fails in the proof of Lemma~\ref{lem:bias-relaxation-quantum} if we try to 
adapt it to the case of a general entangled state. Following the proof of Lemma~\ref{lem:bias-relaxation-classical} we would have to weigh the block-rows of $A$ using the Schmidt coefficients of $\ket{\Psi}$, resulting in the vector-valued matrix
$$ \vec{A}_R \,=\, \big( \lambda_i\, (\Id \otimes \bra{i})A(\Id \otimes \ket{j}) \big)_{i,j},$$
and similarly define $\vec{B}_C$ using column-weighing.
Then $\Tr\big( (\vec{A}_R \odot\vec{B}_C) M \big)$ equals the expression inside the absolute value in~\eqref{eq:omegaq-def}, as desired.
Moreover, as before, $\vec{A}_R \vec{A}_R^\dagger$ is an average of the diagonal blocks of $AA^\dagger$, this time a weighted average with weights $\lambda_i$. As a result, we still have $\vec{A}_R\vec{A}_R^\dagger \leq \Id$. However, and this is where the proof fails, $\vec{A}_R^\dagger \vec{A}_R\leq \Id$ is no longer true in general, as is demonstrated by the following example. 

\begin{topbotframe}\pnote{Motivation for this example is that $\vec{X}_R = \vec{Y}_R = (\ket{0}\bra{i}/\sqrt{n}+\ket{i}\bra{0})_i$, $\vec{X}_C = \vec{Y}_C = (\ket{0}\bra{i}+\ket{i}\bra{0}/\sqrt{n})_i$ is a sdp solution showing $\biasos(T_n)=1$.}
\begin{example} Let $X$ be the matrix defined as $X = \sum_{i=1}^n \ket{1}\bra{i}\otimes \ket{i}\bra{0} + \sum_{i=1}^n \ket{i}\bra{1}\otimes \ket{0}\bra{i} \in \lin{\C^n \otimes \C^{n+1}}$. Then $X$ is Hermitian and of operator norm at most $1$, so $X$ is a valid strategy for either player in a quantum XOR game. Let $\ket{\Psi} = \frac{1}{\sqrt{2}}\ket{00} + \frac{1}{\sqrt{2n}} \sum_i \ket{ii}$. The corresponding ``row-weighted'' vector-valued matrix $\vec{X}_R = (X_{i,j})$ has most its entries equal to $0$, except for $X_{0,j} =  (1/\sqrt{2})\ket{j}\bra{1}$ for $j=1,\ldots,n$ and $X_{i,0} =  (1/\sqrt{2n})\ket{1}\bra{i}$ for $i=1,\ldots,n$. 
$\vec{X}_R$ satisfies $\vec{X}_R\vec{X}_R^\dagger =  (1/2)\ket{1}\bra{1}+(1/2)\Id \leq \Id$, but $\vec{X}_R^\dagger \vec{X}_R = (1/(2n)) \Id + (n/2)\ket{1}\bra{1}$ has operator norm $n/2+1/(2n)$. 
\end{example} 
\end{topbotframe}

One could, of course, weigh the columns of $A$ instead of its rows. 
This leads to the vector-valued matrix 
$$ \vec{A}_C \,=\, \big( \lambda_j\, (\Id \otimes \bra{i})A(\Id \otimes  \ket{j}) \big)_{i,j} ,$$
and to the similarly defined $\vec{B}_R$.
Now we have $\vec{A}_C^\dagger \vec{A}_C\leq \Id$ but in general not $\vec{A}_C \vec{A}_C^\dagger \leq \Id$! 

The discussion above explains why $\biasnc$ is not an upper bound on $\bias^*$. But not all is lost, and it turns out
that one can relax the constraint in the definition of $\biasnc$, leading to the quantity we call $\biasos$, which \emph{does} upper bound $\bias^*$. 
The idea is to include \emph{both} vector-valued matrices $\vec{A}_R$, $\vec{A}_C$, each satisfying the corresponding constraint, as well as matching $\vec{B}_C,\vec{B}_R$, and a consistency constraint among them. 

\begin{definition}\label{def:bias-os} Let $n$ be an integer and $M\in\lin{\C^n\otimes\C^n}$. Define
\begin{align}\label{eq:os-gt-p}
 \biasos(M)\,:=\,\sup_{d,\,\vec{X}_R,\vec{X}_C,\vec{Y}_R,\vec{Y}_C\in\vecm{d}{\C^n}} \big|\Tr\big( (\vec{X}_R\odot \vec{Y}_C)\, M\big)\big|,
\end{align}
where the supremum is taken over all vector-valued matrices $\vec{X}_R,\vec{X}_C,\vec{Y}_R,\vec{Y}_C \in \vecm{d}{\C^n}$ such that the following constraints hold:
\begin{align}
\vec{X}_R\odot \vec{Y}_C &\,=\, \vec{X}_C\odot \vec{Y}_R,\label{eq:os-gt-pcons}\\
\max\Big( \big\|\vec{X}_R \vec{X}_R^\dagger\big\|_\infty,\,\big\|\vec{Y}_R \vec{Y}_R^\dagger\big\|_\infty,&\,\big\|\vec{X}_C^\dagger \vec{X}_C\big\|_\infty,\,\big\|\vec{Y}_C^\dagger \vec{Y}_C\big\|_\infty\Big) \,\leq\, 1.\notag
\end{align}
\end{definition}

As we already saw in the setting of $\biasnc(G)$, in case $G$ is a classical XOR game  only the vectors appearing on the diagonal of $\vec{X}_R,\vec{X}_C,\vec{Y}_R$ and $\vec{Y_C}$ contribute to the objective value~\eqref{eq:os-gt-p}, and the constraints $\|\vec{X}_R \vec{X}_R^\dagger\|_\infty \leq 1$ and $\|\vec{Y}_C^\dagger \vec{Y}_C\|_\infty \le 1$ impose that these vectors have norm at most $1$. Hence in that case it holds that $\bias^*(G) = \biascom(G) = \biasnc(G) = \biasos(G)$ (the first equality was already shown in Proposition~\ref{prop:tsirelson}).

We end this section with a proof of the second sequence of inequalities in Theorem~\ref{thm:algorithm}, together with additional properties. 

\begin{theorem}\label{thm:entangled-bias} Let $G$ be a quantum XOR game of size $n$. For any $\eps>0$, one can approximate up to $(1\pm\eps)$ in time $\poly(n,\log 1/\eps)$ a quantity $\biasos(G)$ which satisfies
$$ \bias^*(G) \,\leq\, \biasos(G)\,\leq\, 2\,\bias^*(G).$$
Moreover, there exists a game for which both inequalities are strict. It also holds that $\biasos(G)\leq \|M(G)\|_1 \leq 1$.
Finally, the upper bound on $\biasos$ is explicit, in the sense that the algorithm can also output in time $(n/\eps)^{O(1/\eps)}$ a description
of an entangled strategy achieving bias $\biasos(G)/(2+\eps)$. The strategy uses entanglement of dimension at most $(n/\eps)^{C/\eps}$ for some universal constant $C>0$.
\end{theorem}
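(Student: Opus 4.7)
The proof mirrors the structure of Theorem~\ref{thm:unentangled-bias}, consisting of four parts: efficient computability of $\biasos$; the upper bound $\bias^*(G) \leq \biasos(G)$; the lower bound $\biasos(G) \leq 2\bias^*(G)$ together with the explicit (dimension-controlled) rounding; and the two auxiliary facts $\biasos(G) \leq \|M(G)\|_1$ and the existence of a game on which both inequalities are strict.

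\emph{Efficient computability.} I would first observe that~\eqref{eq:os-gt-p} is a semidefinite program: the objective is linear in the inner products of the entries of $\vec{X}_R,\vec{Y}_C$, the consistency condition $\vec{X}_R \odot \vec{Y}_C = \vec{X}_C \odot \vec{Y}_R$ is a linear equality among such inner products, and each of the four operator-norm constraints of the form $\|\vec{X}_R \vec{X}_R^\dagger\|_\infty \leq 1$ is equivalent to the semidefinite constraint $\vec{X}_R \vec{X}_R^\dagger \leq \Id$ (and similarly for the others). Standard SDP solvers then yield an $\eps$-approximation in time $\poly(n,\log 1/\eps)$.

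\emph{Upper bound $\bias^* \leq \biasos$.} This is precisely the content of the discussion preceding Definition~\ref{def:bias-os}, which I would now turn into a formal argument. Given an entangled strategy $(A,B,\ket{\Psi})$, write the Schmidt decomposition $\ket{\Psi} = \sum_i \lambda_i \ket{i}\ket{i}$ and define, following the text, the row- and column-weighted vector-valued matrices $\vec{A}_R,\vec{A}_C,\vec{B}_R,\vec{B}_C$. A direct calculation shows $\vec{A}_R \odot \vec{B}_C = \vec{A}_C \odot \vec{B}_R$ (both sums carry the same weight $\lambda_i\lambda_j$ on entry $(i,j)$), and that this common quantity, traced against $M$, reproduces $\bra{\Psi}\Tr_{\C^n\otimes\C^n}((A\otimes B)(M\otimes \Id))\ket{\Psi}$. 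The required norm bounds $\|\vec{A}_R\vec{A}_R^\dagger\|_\infty\leq 1$ and $\|\vec{A}_C^\dagger \vec{A}_C\|_\infty\leq 1$ (and the analogues for $B$) follow because these matrices are convex combinations, with weights $\lambda_i^2$, of the diagonal blocks of $AA^\dagger$ and $A^\dagger A$ respectively, and $\|A\|_\infty\leq 1$.

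\emph{Lower bound $\biasos \leq 2\bias^*$ and explicit rounding.} This is the main obstacle and the key analytic content of the theorem. I would invoke the \emph{operator space Grothendieck inequality} of Pisier--Shlyakhtenko~\cite{PS02OSGT} and Haagerup--Musat~\cite{HM08}, deferring the detailed translation to Section~\ref{sec:grothendieck}. The translation takes as input the SDP solution $(\vec{X}_R,\vec{X}_C,\vec{Y}_R,\vec{Y}_C)$ and, via the factorization provided by the inequality, produces observables $A,B$ and an entangled state $\ket{\Psi}$ witnessing bias at least $\biasos(G)/2$. For the \emph{explicit} version, I would use the algorithmic form of the operator space Grothendieck inequality from~\cite{RegevV12c}: the factorization step is constructive and the resulting entangled strategy can be truncated and dimension-reduced at the cost of a $(1+\eps)$ factor, by standard $\eps$-net or Johnson--Lindenstrauss-style sketching applied to the vectors parameterizing the factorization. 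The blow-up $(n/\eps)^{C/\eps}$ in the shared entanglement dimension is then the generic cost of such an $\eps$-approximation.

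\emph{The bound $\biasos(G) \leq \|M(G)\|_1$ and the strict-inequality example.} I would prove the norm bound by the same Cauchy--Schwarz argument as in Lemma~\ref{lem:realval}, noting that the argument there only uses two of the four norm constraints (e.g.~$\|\vec{X}_R\vec{X}_R^\dagger\|_\infty,\|\vec{Y}_C^\dagger \vec{Y}_C\|_\infty \leq 1$), both of which are among the constraints in Definition~\ref{def:bias-os}. Finally, the game $\HI_1$ in Theorem~\ref{thm:higamesummary} satisfies $\bias^*(\HI_1) < \biasos(\HI_1) = 3/5$, and the factor-$2$ inequality is also strict there (since $\biasos(\HI_1)/\bias^*(\HI_1) < 2$), so $\HI_1$ furnishes an example in which both inequalities are simultaneously strict.
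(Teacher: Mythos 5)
Your proposal is correct and follows essentially the same route as the paper: SDP computability, the row/column-weighting argument for $\bias^*\leq\biasos$, the operator space Grothendieck inequality (deferred to Section~\ref{sec:grothendieck}) for the factor-$2$ bound, Lemma~\ref{lem:realval} for the $\|M\|_1$ bound, and $\HI_1$ for strictness. The only nit is that the paper attributes the quantitative/explicit version of the operator space Grothendieck inequality (and the entanglement-dimension bound) to~\cite{RegevV12b} rather than~\cite{RegevV12c}, which is the reference for the non-commutative case used in Theorem~\ref{thm:unentangled-bias}.
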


\begin{proof}
As was the case for $\biasnc$, it is not hard to see that $\biasos$ can be expressed as a semidefinite program  and is therefore computable up to precision $\eps$ in time $\poly(n,\log 1/\eps)$. The fact that $\bias^*(G)\leq\biasos(G)$ follows from the discussion preceding Definition~\ref{def:bias-os}. 
The substance of the theorem is in the inequality $\biasos(G)\leq 2\bias^*(G)$. The inequality follows from the operator space Grothendieck inequality; we state that inequality and explain in detail how it implies the inequality on the biases in Section~\ref{sec:grothendieck}. 

Theorem~\ref{thm:higamesummary} shows that $\bias^*(\HI_1)<\biasos(\HI_1)<2 \, \bias^*(\HI_1$),
where the games $(\HI_n)$ were introduced in Section~\ref{sec:ourresults}.
For the inequality $\biasos(M) \leq \|M\|_1$, see Lemma~\ref{lem:realval} and the remark following it. 

Finally, the explicit part of the theorem and the bound on the entanglement dimension follow from the quantitative version of the operator space Grothendieck inequality~\cite{RegevV12b}. 
\end{proof}

\section{Some constructions}

\subsection{Rank-one quantum games}\label{sec:rank-one}

Rank-one quantum games are a model of two-player one-round games that was introduced recently by Cooney et al.~\cite{cooney}. Despite seemingly quite different from quantum XOR games (in particular, a rank-one game involves quantum communication both from the referee to the players \emph{and} from the players to the referee), in this section we show that there is a strong relationship between the two models in case players are allowed to use entanglement. 
In particular, there is a simple transformation from one type of game to the other that essentially maps the maximum success probability of entangled players in a rank-one game to the entangled bias of the corresponding quantum XOR game, and vice-versa. This equivalence may be a source of additional examples of quantum XOR games.  In fact, at the end of this section we show how the family of games $(T_n)$ can be naturally obtained from a simple rank-one game; this correspondence will let us rederive the fact, proved in Example~\ref{ex:pt-bias}, that $\bias^*(T_n)=1$. The family $(C_n)$ discussed in Section~\ref{sec:ourresults} is another interesting example. We note that Cooney et al.\ do not study either unentangled or maximally entangled players in their model; although one could define both, it is not clear how to analyze those quantities, and in particular they do not seem related to the non-commutative Grothendieck inequality, as is the case in our model. 
\pnote{The natural definition of the unentangled value in their model should allow the use of ancilla for the players. This allows them to win with good probability even on ``difficult" games like $T_n$. (The players can just use their ancilla to swap out entanglement.) Hence their unentangled value is not related to the norm of the bilinear form and the ncgt. The norm of the operator seem to correspond in their model to the case of \emph{no ancilla} which is somewhat unnatural.}

Formally, a rank-one game $\hat{G}$ of size $n$ is specified by an arbitrary (finite-dimensional) Hilbert space $\mathcal{V}$, corresponding to the referee's private space, and two unit vectors $\ket{\eta},\ket{\gamma}\in \C^n\otimes \C^n \otimes \mathcal{V}$. The game proceeds as follows. The referee first prepares the state $\ket{\eta}$ on three registers $\textsc{M}_A,\textsc{M}_B,\textsc{V}$ corresponding to the spaces $\C^n,\C^n,\mathcal{V}$ respectively. He sends register $\textsc{M}_A$ to Alice, and register $\textsc{M}_B$ to Bob. The players are allowed to apply arbitrary unitaries $U,V$ on their respective message registers as well as on their own private spaces $\HA$, $\HB$, which may be initialized in an arbitrary state $\ket{\Psi} \in \HA \otimes \HB$. The players then send registers $\textsc{M}_A$, $\textsc{M}_B$ back to the referee, who performs a \emph{rank-one} measurement $\big\{P^{acc} = \ket{\gamma}\bra{\gamma}, \Id - P^{acc}\big\}$ on the three registers in his possession. If he obtains the outcome ``\emph{acc}'' then he accepts, otherwise he rejects. 

Given a rank-one quantum game $\hat{G} = (\ket{\eta},\ket{\gamma})$, associate to it the (not necessarily Hermitian) matrix $\hat{M} = \hat{M}(\hat{G}) := \Tr_{\mathcal{V}} \ket{\eta}\bra{\gamma}$. It is not difficult to see that the maximum acceptance probability of any entangled players in $\hat{G}$ equals (see Theorem~3.2 in~\cite{cooney} for a proof)
\beq\label{eq:val-rank1}
\biasrk(\hat{G}) \,=\, \biasrk(\hat{M}(\hat{G}))\,:=\,\sup_{\substack{\HA,\HB,\,\ket{\Psi},\ket{\Phi}\in \HA\otimes\HB,\\U \in \lin{\C^n\otimes\HA},V\in \lin{\C^n\otimes\HB}}}\, \big|\Tr\big( (U\otimes V) (\hat{M}\otimes \ket{\Psi}\bra{\Phi})\big) \big|^2,
\eeq
 where the supremum is taken over all finite-dimensional Hilbert spaces $\HA$ and $\HB$, unit vectors $\ket{\Psi},\ket{\Phi} \in\HA\otimes \HB$, and $U\in\lin{\C^n\otimes \HA}$, $V\in\lin{\C^n\otimes\HB}$ of norm at most $1$. (As in the proof of Claim~\ref{claim:obs_value}, a convexity argument shows that we could equivalently restrict to unitaries.) We also include for completeness a proof of the following easy fact.

\begin{claim}[{\cite[Proposition~3.1]{cooney}}]\label{clm:gameformatrix}
For any matrix $\hat{M} \in \lin{\C^n\otimes \C^n}$ with $\|\hat{M}\|_1 \le 1$, there exists a rank-one game whose associated matrix is $\hat{M}$.
\end{claim}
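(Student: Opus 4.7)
My plan is to reduce the construction to a singular value decomposition of $\hat{M}$, padded with ``dummy'' dimensions in $\mathcal{V}$ to make $\ket\eta$ and $\ket\gamma$ unit vectors without altering the partial trace.

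Concretely, write the singular value decomposition $\hat{M}=\sum_{i=1}^{r} s_i \ket{u_i}\bra{v_i}$, where $\ket{u_i},\ket{v_i}\in\C^n\otimes\C^n$ are orthonormal families and $s_i\geq 0$ with $\sum_i s_i=\|\hat{M}\|_1\leq 1$. Choose a referee space $\mathcal{V}$ of dimension $r+2$, with orthonormal basis $\{\ket{1},\dots,\ket{r},\ket{0_\eta},\ket{0_\gamma}\}$, and define
\[
\ket{\eta}\,:=\,\sum_{i=1}^r \sqrt{s_i}\,\ket{u_i}\otimes\ket{i}+\sqrt{1-\|\hat M\|_1}\,\ket{\phi_\eta}\otimes\ket{0_\eta},
\]
\[
\ket{\gamma}\,:=\,\sum_{i=1}^r \sqrt{s_i}\,\ket{v_i}\otimes\ket{i}+\sqrt{1-\|\hat M\|_1}\,\ket{\phi_\gamma}\otimes\ket{0_\gamma},
\]
for arbitrary unit vectors $\ket{\phi_\eta},\ket{\phi_\gamma}\in\C^n\otimes\C^n$. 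The orthonormality of $\{\ket{u_i}\},\{\ket{v_i}\}$ (hence the Pythagorean computation of the norms) together with $\sum_i s_i +(1-\|\hat M\|_1)=1$ makes both vectors of unit norm.

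The verification that $\Tr_{\mathcal V}\ket{\eta}\bra{\gamma}=\hat M$ is then immediate from orthogonality in $\mathcal V$: the only surviving terms of $\Tr_{\mathcal V}\ket\eta\bra\gamma$ are the ``diagonal'' ones $\sum_i s_i\ket{u_i}\bra{v_i}$, since $\langle i|j\rangle=\delta_{ij}$ and $\ket{0_\eta},\ket{0_\gamma}$ are orthogonal to each other and to all $\ket{i}$, killing every cross term.

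There is no real obstacle; the only subtlety is that we cannot assume $\|\hat M\|_1=1$, which is precisely why two extra auxiliary dimensions $\ket{0_\eta},\ket{0_\gamma}$ are needed, placed in mutually orthogonal sectors of $\mathcal V$ so that the padding terms contribute nothing to the partial trace. By the characterization of $\biasrk(\hat G)$ in~\eqref{eq:val-rank1}, this gives a bona fide rank-one game whose associated matrix is exactly $\hat M$.
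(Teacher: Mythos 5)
Your construction is correct and is essentially the paper's own proof: write the singular value decomposition $\hat M=\sum_i s_i\ket{u_i}\bra{v_i}$ and set $\ket\eta=\sum_i\sqrt{s_i}\ket{u_i}\ket{i}$, $\ket\gamma=\sum_i\sqrt{s_i}\ket{v_i}\ket{i}$, so that orthogonality in $\mathcal V$ yields $\Tr_{\mathcal V}\ket\eta\bra\gamma=\hat M$. The only difference is that you spell out the padding with two mutually orthogonal auxiliary directions for the case $\|\hat M\|_1<1$, which the paper dismisses as ``similar''; your handling of it is correct.
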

\begin{proof}
Assume $\|\hat{M}\|_1 = 1$; the general case is similar.
Write $\hat{M}$ in its singular value decomposition as
$$\hat{M} \,=\,\sum_i \, s_i \, \ket{u_i}\bra{v_i},$$
with $s_i > 0$ and $\ket{u_i},\ket{v_i}$ orthonormal families in $\C^n\otimes \C^n$. Define a rank-one quantum game $\hat{G}$ by choosing the referee's private register $\mathcal{V}\simeq \C^{n^2}$, and letting 
$$\ket{\eta} \,:=\, \sum_i \sqrt{s_i} \ket{u_i}\ket{i}\quad\text{and}\quad\ket{\gamma} \,:=\, \sum_i \sqrt{s_i} \ket{v_i}\ket{i}.$$
It is easy to check that $\hat{M}(\hat{G}) = \hat{M}$.
\end{proof}

In the following two lemmas we state precisely the relationship between rank-one quantum games and quantum XOR games. 

\begin{lemma}\label{lem:rankone-equiv} Let $G$ be an arbitrary quantum XOR game. Then there exists a rank-one quantum game $\hat{G}$ such that $\biasrk(\hat{G}) = (\bias^*(G))^2$. Moreover, the associated matrices are equal: $\hat{M}(\hat{G}) = M(G)$.  
\end{lemma}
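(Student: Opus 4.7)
My plan is to reduce the statement to the combination of Claim~\ref{clm:gameformatrix} and the alternative characterization of the entangled bias provided by Claim~\ref{claim:obs_value-quant}. First, since $G$ is a quantum XOR game, by Definition~\ref{def:qxor} the matrix $M := M(G) \in \herm{\C^n \otimes \C^n}$ satisfies $\|M\|_1 \leq 1$. Thus Claim~\ref{clm:gameformatrix} directly produces a rank-one game $\hat{G}$ such that $\hat{M}(\hat{G}) = M$, which takes care of the ``moreover'' part of the statement.

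The remaining task is to show $\biasrk(\hat{G}) = (\bias^*(G))^2$. To this end I would start from the defining expression~\eqref{eq:val-rank1} and rewrite its objective using the elementary identity $\Tr(X \cdot \ket{\Psi}\bra{\Phi}) = \bra{\Phi} X \ket{\Psi}$, applied after first tracing out the two message registers. Concretely, for any $U \in \lin{\C^n \otimes \HA}$, $V \in \lin{\C^n \otimes \HB}$ and unit vectors $\ket{\Psi},\ket{\Phi} \in \HA \otimes \HB$, this yields
$$\Tr\bigl((U \otimes V)(M \otimes \ket{\Psi}\bra{\Phi})\bigr) \;=\; \bra{\Phi}\, \Tr_{\C^n \otimes \C^n}\bigl((U \otimes V)(M \otimes \Id_{\HA \otimes \HB})\bigr)\, \ket{\Psi}.$$
Taking the supremum of the absolute value of this expression over all finite-dimensional $\HA,\HB$, all unit vectors $\ket{\Psi},\ket{\Phi}$, and all $U,V$ of operator norm at most $1$, the right-hand side becomes exactly the supremum appearing in~\eqref{eq:obs_value-0}, which by Claim~\ref{claim:obs_value-quant} equals $\bias^*(M)$. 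Squaring, as dictated by~\eqref{eq:val-rank1}, gives $\biasrk(\hat{M}) = (\bias^*(M))^2$.

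The only potential obstacle is purely book-keeping: one must be careful that the tensor-factor ordering used when writing $M \otimes \ket{\Psi}\bra{\Phi}$ (message registers, then private registers) matches the ordering inherited from $U \otimes V$ (where each $U,V$ already acts on a message register paired with a private register), i.e.\ the implicit permutation swapping the second message register past $\HA$ is consistent on both sides. One also needs to record the standard convexity observation (already used for Claim~\ref{claim:obs_value-quant}) that relaxing ``unitaries'' to ``contractions'' in the rank-one game does not change the supremum, so that the domains of the two optimization problems truly coincide. Neither step introduces any substantive difficulty, so the proof is essentially a definitional unpacking.
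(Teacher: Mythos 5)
Your proposal is correct and follows exactly the paper's own route: invoke Claim~\ref{clm:gameformatrix} to realize $M(G)$ as the matrix of a rank-one game, then identify the objective of~\eqref{eq:val-rank1} with that of~\eqref{eq:obs_value-0} via the identity $\Tr\bigl(X\,\ket{\Psi}\bra{\Phi}\bigr)=\bra{\Phi}X\ket{\Psi}$ and conclude with Claim~\ref{claim:obs_value-quant}. You merely spell out the comparison that the paper leaves as a one-line remark, so there is nothing to add.
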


\begin{proof} 
Let $G$ be a quantum XOR game of size $n$ with associated game matrix $M$. Let $\hat{G}$ be a rank-one game with associated matrix $M$ as guaranteed to exist by  Claim~\ref{clm:gameformatrix}. Comparing Eq.~\eqref{eq:val-rank1} with the characterization of  $\bias^*(G)$ given in Claim~\ref{claim:obs_value-quant}, we see that $\biasrk(\hat{G}) = (\bias^*(G))^2$, as claimed.
\end{proof}

\begin{lemma}\label{lem:rankone-equiv-rev} Let $\hat{G}$ be an arbitrary rank-one quantum game. Then there exists a quantum XOR game $G$ such that $\bias^*(G) = (\biasrk(\hat{G}))^{1/2}$. 
\end{lemma}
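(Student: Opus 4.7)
I would construct $G$ as the quantum XOR game on $\C^{2n}\otimes\C^{2n}$, viewing each player's message register as $\C^n\otimes \C^2$, and specified by the Hermitian matrix
\[
M \,=\, \tfrac12\hat M\otimes \ket{00}\bra{11} + \tfrac12\hat M^\dagger \otimes \ket{11}\bra{00},
\]
where $\hat M = \hat M(\hat G)$ and the auxiliary kets $\ket{00},\ket{11}$ live on the extra $\C^2\otimes\C^2$ flag factor. A direct computation of $MM^\dagger$ gives $\tfrac14 \hat M\hat M^\dagger\otimes \ket{00}\bra{00} + \tfrac14 \hat M^\dagger\hat M\otimes \ket{11}\bra{11}$, so the singular values of $M$ are those of $\hat M$ halved, each with multiplicity two, yielding $\|M\|_1 = \|\hat M\|_1 \leq 1$; hence $G$ is a valid quantum XOR game.

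For the upper bound $\bias^*(G)\leq \sqrt{\biasrk(\hat G)}$, I would start from the two-state characterization of $\bias^*$ in Claim~\ref{claim:obs_value-quant} and decompose any candidate pair $A\in\lin{\C^n\otimes\C^2\otimes\HA}$, $B\in\lin{\C^n\otimes\C^2\otimes\HB}$ of norm at most $1$ as block matrices $A = \sum_{s,t}A_{st}\otimes\ket s\bra t$ and $B = \sum_{u,v}B_{uv}\otimes\ket u\bra v$ with respect to the auxiliary flag. Since $M$ is supported only on the $\ket{00}\bra{11}$ and $\ket{11}\bra{00}$ flag transitions, exactly two of the sixteen resulting block-pairings produce a nonzero partial trace over $\C^2\otimes\C^2$, leading to
\[
\bias^*(G) \,=\, \tfrac12\sup\bigl|\,\Tr[(A_{10}\otimes B_{10})(\hat M\otimes\ket\Psi\bra\Phi)] + \Tr[(A_{01}\otimes B_{01})(\hat M^\dagger \otimes \ket\Psi\bra\Phi)]\,\bigr|.
\]
Each term is at most $\sqrt{\biasrk(\hat G)}$ by~\eqref{eq:val-rank1}, using in addition that $\biasrk(\hat M^\dagger) = \biasrk(\hat M)$, and the triangle inequality closes this direction.

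For the matching lower bound $\bias^*(G)\geq \sqrt{\biasrk(\hat G)}$, I would exploit the fact --- noted just before~\eqref{eq:val-rank1} and proved in~\cite[Theorem~3.2]{cooney} --- that $\biasrk(\hat G)$ is the \emph{operational} value of the rank-one game in which both players share a single state, so that $\biasrk(\hat G) = \sup|\Tr[(U\otimes V)(\hat M\otimes\ket\Psi\bra\Psi)]|^2$ over unitaries $U,V$ and a single unit vector $\ket\Psi$. Pick $(U^*,V^*,\ket{\Psi^*})$ approaching this supremum, and multiply $U^*$ by a complex phase so that $\alpha := \Tr[(U^*\otimes V^*)(\hat M\otimes\ket{\Psi^*}\bra{\Psi^*})]$ is a non-negative real. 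Then the observables
\[
A \,=\, U^*\otimes \ket 1\bra 0 + (U^*)^\dagger\otimes\ket 0\bra 1,\qquad B \,=\, V^*\otimes\ket 1\bra 0 + (V^*)^\dagger\otimes\ket 0\bra 1,
\]
together with the shared state $\ket{\widetilde\Psi} = \ket{\Psi^*}$, constitute a valid XOR-game strategy: both operators are Hermitian and squaring them gives $\Id$ since $U^*,V^*$ are unitary. Running the same block analysis as for the upper bound, only two of the sixteen terms in $(A\otimes B)(M\otimes\ket{\widetilde\Psi}\bra{\widetilde\Psi})$ survive, contributing $\tfrac12\alpha$ and $\tfrac12\overline\alpha$; using cyclicity of the trace and the identity $\Tr[X] = \overline{\Tr[X^\dagger]}$ one verifies that the second contribution equals $\tfrac12\alpha$ as well, so the total bias is $\alpha \to \sqrt{\biasrk(\hat G)}$, which closes the equality.

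The main obstacle is the one-state reformulation of the rank-one value, which is the analogue of Claim~\ref{claim:obs_value-quant} for rank-one games: without it, one is forced to control the two block-amplitudes $\alpha_1,\alpha_2$ simultaneously at a common pair $(\ket\Psi,\ket\Phi)$, and at an optimal pair for $\alpha_1$ the quantity $\alpha_2$ (which involves the ``swap-conjugate'' bilinear form) need not attain its own maximum. Once the one-state formulation is invoked, taking $\ket\Psi=\ket\Phi=\ket{\Psi^*}$ makes the two contributions complex conjugates of each other, so a single choice of phase forces them to agree and the lower bound follows.
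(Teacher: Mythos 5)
Your construction of $G$ and your upper bound argument are essentially the paper's. The gap is in the lower bound, and it is fatal as written: the quantity $\biasrk(\hat{G})$ in~\eqref{eq:val-rank1} (this is what~\cite[Theorem~3.2]{cooney} proves) is a supremum over \emph{two independent} unit vectors $\ket{\Psi},\ket{\Phi}$, not one. Operationally the players do share a single state $\ket{\Psi}$, but the referee's rank-one projector $\ket{\gamma}\bra{\gamma}$ acts as the identity on the players' private registers, so the acceptance probability is the squared norm of $(\bra{\gamma}\otimes\Id)(U\otimes V)\ket{\eta}\ket{\Psi}$; the second vector $\ket{\Phi}$ is the free direction of the residual private state, and at the optimum it is in general different from $\ket{\Psi}$. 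Your ``one-state reformulation'' $\biasrk(\hat{G})=\sup|\Tr[(U\otimes V)(\hat{M}\otimes\ket{\Psi}\bra{\Psi})]|^2$ is therefore not what the cited theorem says, and it cannot hold at any fixed dimension: for the trivial rank-one game $\hat{T}_n$ of Example~\ref{ex:rk1-to-xor} (send $\ket{0}\ket{0}$, project onto $\ket{\Psi_n^{me}}$), value $1$ is achieved with initial state $\ket{\Psi}=\ket{\Psi_n^{me}}$ and residual $\ket{\Phi}=\ket{0}\ket{0}$; if a finite-dimensional one-state strategy had value close to $1$, your construction would yield a finite-dimensional XOR strategy of bias close to $1$ for a game locally equivalent to $T_n$, contradicting Claim~\ref{claim:pt-infinite}.

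What is true is that the one-state supremum, taken over \emph{all} dimensions, does equal $\biasrk(\hat{G})$ --- but that is a consequence of Lemma~\ref{lem:rankone-equiv-rev} (combined with the easy direction), not an admissible input to its proof, so your argument is circular at the one point you yourself identify as ``the main obstacle.'' The missing idea is exactly how the paper closes this gap: starting from a near-optimal two-state tuple $(U,V,\ket{\Psi},\ket{\Phi})$, the players additionally share an embezzlement state $\frac{1}{\sqrt{D}}\sum_{j=1}^d\ket{\Psi}^{\otimes j}\otimes\ket{\Phi}^{\otimes(d-j)}$ and prepend to $U,V$ cyclic shifts that coherently convert $\ket{\Phi}$ into $\ket{\Psi}$ up to error $O(1/d)$, producing a sequence of genuinely one-state strategies whose bias tends to $(\biasrk(\hat{G}))^{1/2}$ as $d\to\infty$. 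By Claim~\ref{claim:pt-infinite} the growth in entanglement dimension is unavoidable, so some such limiting construction must appear in any correct proof; the rest of your computation (the phase choice making the two block contributions conjugate, hence summing to $\alpha$) is fine once that step is supplied.
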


\begin{proof}
 Let $\hat{G}=(\ket{\eta},\ket{\gamma})$ be a rank-one quantum game of size $n$. We associate to $\hat{G}$ the following quantum XOR game $G$ of size $2n$. In $G$, the referee prepares one of two possible states 
$$ \ket{\psi_\pm} \,:=\, \frac{1}{\sqrt{2}}\big(\ket{0}_{\textsc{M}_A}\ket{0}_{\textsc{M}_B} \ket{\eta}_{\textsc{M}_A\textsc{M}_B\textsc{V}} \pm \ket{1}_{\textsc{M}_A}\ket{1}_{\textsc{M}_B} \ket{\gamma}_{\textsc{M}_A\textsc{M}_B\textsc{V}}\big),$$
each with probability $1/2$. Note that the message registers in $G$ have one more qubit than those in $\hat{G}$. The referee sends the players their respective message registers, and accepts their answers if and only if their parity is $0$ in case the state prepared was $\ket{\psi_+}$, and $1$ in case it was $\ket{\psi_-}$. The corresponding game matrix is
\begin{align*}
 M \,=\, M(G) &= \frac{1}{2} \Big( \ket{00}\bra{11} \otimes \Tr_{\mathcal{V}}\big(\ket{\eta}\bra{\gamma}\big) + \ket{11}\bra{00}\otimes \Tr_{\mathcal{V}}\big(\ket{\gamma}\bra{\eta}\big)\Big)\\
&= \frac{1}{2} \Big( \ket{00}\bra{11} \otimes \hat{M} + \ket{11}\bra{00}\otimes \hat{M}^\dagger\Big),
\end{align*}
where $\hat{M} = \hat{M}(\hat{G})$ is the matrix associated to $\hat{G}$. 

We first show that $\bias^*(G) \leq (\biasrk(\hat{G}))^{1/2}$. Let $\eps>0$ and $(A,B,\ket{\Psi})$ a strategy for the players in $G$ achieving a bias at least $(1-\eps)\bias^*(G)$. Define $U := (\bra{1}\otimes \Id) A (\ket{0}\otimes \Id)$, $V := (\bra{1}\otimes \Id) B (\ket{0}\otimes \Id)$, and $\ket{\Phi} := \ket{\Psi}$. Then $U$ and $V$ each have norm at most $1$, so that the quadruple $(U,V,\ket{\Phi},\ket{\Psi})$ forms a valid assignment to the right-hand side of~\eqref{eq:val-rank1}. The resulting value is
\begin{align*}
\biasrk(\hat{G}) &\geq \big| \Tr\big( \big(U\otimes V\big) \big( \hat{M} \otimes \ket{\Psi} \bra{\Psi} \big)\big) \big|^2 \\
&= \big|\Tr\big( \big( A\otimes B\big) \big(M \otimes \ket{\Psi}\bra{\Psi}\big)\big|^2\\
&\geq ((1-\eps)\, \omega^*(G))^2,
\end{align*}
which concludes the proof of this direction of the inequality by letting $\eps\to 0$. 

It remains to show $\bias^*(G) \geq (\biasrk(\hat{G}))^{1/2}$. Let $U,V,\ket{\Psi}$ and $\ket{\Phi}$ achieve a value at least $(1-\eps)\biasrk(\hat{G})$ in the right-hand side of~\eqref{eq:val-rank1}. Assume without loss of generality that both $\ket{\Psi},\ket{\Phi} \in \HH \otimes \HH$ for some finite-dimensional $\HH$. Note that by changing the phase of $\ket{\Psi}$ we may assume that the expression inside the absolute value is real and non-negative. We construct a strategy for the players in $G$ based on the use of an embezzlement state.\footnote{A similar state was already used to prove $\bias^*(T_n)=1$ in Example~\ref{ex:pt-bias}.} Letting $d$ be a dimension parameter (we will eventually take the limit as $d\to\infty$), it is defined as 
$$ \ket{\emb_{d}} \,:=\, \frac{1}{\sqrt{D}}\sum_{j=1}^d \ket{\Psi}^{\otimes j} \otimes \ket{\Phi}^{\otimes (d-j)},$$
where $d\leq D \leq d^2$ is the appropriate normalization factor. 
The players share $\ket{\Phi}$ and $\ket{\emb_d}$, so each player 
has $d+1$ copies of $\HH$ altogether.
Let $\tilde{U},\tilde{V}$ be the unitary transformations corresponding to a cyclic shift on those $d+1$ copies, so that
$$\tilde{U}\otimes\tilde{V}:\, \ket{\Phi}\ket{\emb_d} \,\mapsto\, \ket{\Psi} \ket{\tilde{\emb}_d},$$
where $\ket{\tilde{\emb}_d} = (1/\sqrt{D})\sum_{j=0}^{d-1} \ket{\Psi}^{\otimes j} \otimes \ket{\Phi}^{\otimes (d-j)}$. Note that $\|\ket{\emb_d}-\ket{\tilde{\emb}_d}\|^2 \leq 4/d$ and so $\mathrm{Re}\langle \emb_d \ket{\tilde{\emb}_d} \ge 1-2/d$. Let
$$A \,:=\, \ket{0}\bra{1}\otimes (U\cdot \tilde{U}) + \ket{1}\bra{0} \otimes (U\cdot \tilde{U})^\dagger\quad\text{and}\quad B \,:=\, \ket{0}\bra{1}\otimes (V \cdot \tilde{V}) + \ket{1}\bra{0} \otimes (V \cdot \tilde{V})^\dagger.$$ 
One can verify that both $A$ and $B$ have norm at most $1$, and so by Claim~\ref{claim:obs_value-quant},
\begin{align*}
\bias^*(G) &\geq \big|\Tr\big( (A\otimes B)(M \otimes \ket{\Phi}\bra{\Phi} \otimes \ket{\emb_d}\bra{\emb_d})\big)\big| \\
&= \frac{1}{2}\Big|\bra{\eta}\bra{\Phi} \bra{\emb_d} U\tilde{U}\otimes V \tilde{V}\ket{\gamma}\ket{\Phi}\ket{\emb_d} + \bra{\gamma}\bra{\Phi} \bra{\emb_d} \tilde{U}^\dagger U^\dagger \otimes \tilde{V}^\dagger V^\dagger \ket{\eta}\ket{\Phi}\ket{\emb_d} \Big|\\
&= \Big|\mathrm{Re} \big( \bra{\eta}\bra{\Phi} U\otimes V \ket{\gamma}\ket{\Psi} \cdot \langle \emb_d \ket{\tilde{\emb}_d}\big) \Big|\\
&\geq \Big(1-\frac{2}{d}\Big)((1-\eps)\,\biasrk(\hat{G}))^{1/2}.
\end{align*}
Taking the limit as $\eps\to 0$ and $d\to\infty$ finishes the proof of the second part of the lemma.
\end{proof} 

We end this section with two examples illustrating both transformations described in the proofs of Lemmas~\ref{lem:rankone-equiv} and~\ref{lem:rankone-equiv-rev}.
We first illustrate the transformation $G\to \hat{G}$ from quantum XOR game to rank-one quantum game by applying it to the games $(T_n)$. 

\begin{topbotframe}
\begin{example}[From quantum XOR game to rank-one quantum game]\label{ex:xor-to-rk1}
Let $M$ be the matrix associated to the game $T_n$. Using the singular value decomposition $M=\frac{1}{2} \ket{00}\bra{\Psi_n^{me}} + 
\frac{1}{2} \ket{\Psi_n^{me}} \bra{00}$ in the proof of Claim~\ref{clm:gameformatrix}, we obtain the rank-one quantum game $\hat{T}_n$ defined by
$$ \ket{\eta} \,=\, \frac{1}{\sqrt{2}}\big( \ket{00}\ket{0} + \ket{\Psi_n^{me}}\ket{1}\big)\quad\text{and}\quad\ket{\gamma}\,=\,\frac{1}{\sqrt{2}}\big( \ket{\Psi_n^{me}}\ket{0} + \ket{00}\ket{1}\big).$$
The game $\hat{T}_2$ thus obtained is closely related to the ``coherent state exchange'' game introduced in~\cite{LTW08}, and we discuss this connection further in Section~\ref{sec:partial-transpose}. 
\end{example}
\end{topbotframe}

The next example shows how the other transformation, $\hat{G}\to G$, can be used to map a very simple rank-one quantum game $\hat{T}_n$ to the game $T_n$, leading us to rederive the fact that $\bias^*(T_n)=1$. Moreover, as we will show in Claim~\ref{claim:pt-infinite} an unbounded amount of entanglement between the players is \emph{necessary} in order for them to succeed with probability approaching $1$ in $T_n$. In contrast, the rank-one game $\hat{T_n}$ can be won perfectly using a maximally entangled state of dimension $n$.  Hence the example also demonstrates that our use of arbitrarily high-dimensional embezzlement states in the transformation $\hat{G}\to G$ given in the proof of Lemma~\ref{lem:rankone-equiv-rev} cannot be completely avoided. 

\begin{topbotframe}
\begin{example}[From rank-one quantum game to quantum XOR game]\label{ex:rk1-to-xor}
Consider the following simple rank-one quantum game $\hat{T}_n$ of size $n+1$. The referee first prepares the state $\ket{\eta} :=\ket{0}\ket{0}$, where here we think of each $\ket{0}$ as an $(n+1)$-dimensional state. He sends each player one of the two registers, and upon receiving their answers projects onto $P^{acc} = \ket{\gamma}\bra{\gamma}$, where $\ket{\gamma}=\ket{\Psi_n^{me}}$.
As a rank-one game, $\hat{G}$ is trivial, i.e., $\biasrk(\hat{G}) = 1$: an optimal strategy for the players, succeeding with probability $1$, consists in starting the game by sharing the state $\ket{\Psi} \,:=\, \ket{\gamma}$  and simply swapping their message register with their respective share of $\ket{\gamma}$.  

According to the transformation described in the proof of Lemma~\ref{lem:rankone-equiv-rev}, in the quantum XOR game $G$ that is obtained from $\hat{G}$ the players are asked to distinguish between the two states
$$ \ket{\psi_{\pm}} \,=\, \frac{1}{\sqrt{2}}\big(\ket{00} \ket{0}\ket{0} \pm \ket{11}\ket{\Psi_n^{me}}\big).$$
By a local unitary transformation, these two states are equivalent to the two states $\ket{\psi_0}$ and $\ket{\psi_1}$ used to define the game $T_n$. 
Hence Lemma~\ref{lem:rankone-equiv-rev} immediately reproves that $\bias^*(T_n)=1$ for all~$n$. 
\end{example}
\end{topbotframe}

\subsection{The family \texorpdfstring{$(T_n)$}{Tn}}\label{sec:partial-transpose}

In this section we complete the proof of Theorem~\ref{thm:ptgamemain}. The first sequence of equalities were shown in Examples~\ref{ex:pt-unentangled-1} and~\ref{ex:pt-unentangled-2}, and the second in Example~\ref{ex:pt-bias}. It therefore remains to prove the ``moreover" part, namely, that perfect winning probability can only be achieved in the limit of infinite entanglement.  

\begin{claim}\label{claim:pt-infinite} Let $\eps>0$ be small enough, $n\ge 2$ and $d$ an integer. Suppose that $(A,B,\ket{\Psi})$, where $\ket{\Psi}\in\C^d\otimes \C^d$, is a strategy for the players in the game $T_n$ that achieves a bias at least $1-\eps$. Then $d \geq n^{C/\sqrt{\eps}}$, where $C>0$ is a universal constant.  
\end{claim}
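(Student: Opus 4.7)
The plan is to reduce the claim to a Schmidt-coefficient estimate for a ``coherent state exchange'' task in the sense of Leung--Toner--Watrous~\cite{LTW08}.

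First, using the spectral decomposition $M(T_n)=\tfrac{1}{2}\bigl(\ket{\Psi_n^{me}}\bra{00}+\ket{00}\bra{\Psi_n^{me}}\bigr)$ from Example~\ref{ex:pt-m}, the bias of any strategy $(A,B,\ket{\Psi})$ equals
\[\mathrm{Re}\,\bigl(\bra{\Psi_n^{me}}\otimes\bra{\Psi}\bigr)\,(A\otimes B)\,\bigl(\ket{00}\otimes\ket{\Psi}\bigr),\]
so bias $\ge 1-\eps$ implies that the magnitude of this matrix element is at least $1-\eps$. Operationally, this says that the local unitary $A\otimes B$ must approximately convert $\ket{00}\otimes\ket{\Psi}$ into $\ket{\Psi_n^{me}}\otimes\ket{\Psi}$, which is precisely a coherent state exchange task of the kind analyzed in~\cite{LTW08}, with the shared state $\ket{\Psi}$ acting as a catalyst.

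Next, I would use that $A\otimes B$ is a product unitary across the bipartition $(\C^{n+1}\otimes\C^d):(\C^{n+1}\otimes\C^d)$ and therefore preserves Schmidt coefficients on this bipartition. Writing $\ket{\Psi}=\sum_{k=1}^d\sqrt{\mu_k}\ket{e_k}\ket{f_k}$ in Schmidt form with $\mu_1\ge\cdots\ge\mu_d$, one verifies that $\ket{00}\otimes\ket{\Psi}$ has Schmidt spectrum $(\mu_k)_{k=1}^d$ while $\ket{\Psi_n^{me}}\otimes\ket{\Psi}$ has Schmidt spectrum consisting of each $\mu_k/n$ with multiplicity $n$. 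Matching the two spectra in decreasing order, the Schmidt fidelity, which upper bounds the magnitude above, is
\[\frac{1}{\sqrt{n}}\sum_{k=1}^d\sqrt{\mu_k\,\mu_{\lceil k/n\rceil}}\,\ge\,1-\eps.\]

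The hard part is the combinatorial claim that this inequality forces $d\ge n^{C/\sqrt{\eps}}$. A first application of Cauchy--Schwarz to the left-hand side gives $\sum_{j\le d/n}\mu_j\ge(1-\eps)^2\ge 1-2\eps$, so most of the mass of $\mu$ is concentrated on its top $d/n$ entries. Iterating this observation level by level (controlling the mass on the top $d/n^\ell$ entries of $\mu$ as $\ell$ grows) following the argument of~\cite{LTW08}, and carefully bookkeeping the $O(\sqrt{\eps})$-type loss incurred at each level, one finds that the iteration continues for $\ell=\Omega(1/\sqrt{\eps})$ steps before the effective support of $\mu$ shrinks to a single index. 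Since $d/n^\ell\ge 1$ throughout, this yields the desired bound $d\ge n^{\Omega(1/\sqrt{\eps})}$. The main obstacle is to carry out the iteration sharply enough to obtain the $1/\sqrt{\eps}$ rate; the relevant extremal distributions are the embezzlement-state spectra $\mu_k\propto 1/k$ of~\cite{vDH03} underlying the strategy of Example~\ref{ex:pt-bias}, which saturate the Schmidt fidelity bound.
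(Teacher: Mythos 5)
Your reduction through Schmidt spectra is sound and is genuinely different from the paper's argument. Steps one through four are all correct: the bias equals $\Re\big(\bra{00}\bra{\Psi}(A\otimes B)\ket{\Psi_n^{me}}\ket{\Psi}\big)$, the observables $A,B$ are unitary, local unitaries preserve the Schmidt spectrum across the Alice:Bob cut, and the maximal overlap of two pure bipartite states under local unitaries is the Bhattacharyya overlap of their sorted Schmidt spectra, which here is exactly $\frac{1}{\sqrt n}\sum_{k\le d}\sqrt{\mu_k\,\mu_{\lceil k/n\rceil}}$. The paper instead invokes a fact from~\cite{LTW08} combining the Fuchs--van de Graaf inequality with Fannes' continuity of entropy: local unitaries preserve the entanglement entropy, the target state's entropy exceeds the initial one's by $\log n$, and Fannes forces $\log n \le O(\sqrt{\eps})\log(nd)$. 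Your route replaces the entropy-continuity step by an exact spectral statement plus a classical tail estimate, which is more elementary; but note that your attribution of the iteration to~\cite{LTW08} is off --- their lower bound is the entropy argument, not a level-by-level Schmidt analysis.

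The gap is in the final combinatorial step, which you rightly flag as the hard part but whose mechanism, as described, fails. ``Iterating'' the Cauchy--Schwarz observation loses a \emph{multiplicative} fidelity factor per level, not an additive $O(\sqrt{\eps})$: writing $G(m):=\sum_{j\le m}\mu_j$ and $\delta_\ell:=1-G(d/n^\ell)$, the fidelity constraint (even after optimal coarse-graining) only yields a recursion of the shape $\delta_{\ell+1}\lesssim \eps+\sqrt{\delta_\ell}$, so $\delta_\ell$ grows like $\eps^{2^{-\ell}}$ and the iteration survives only $O(\log\log(1/\eps))$ levels, far short of $\Omega(1/\sqrt{\eps})$. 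The fix is to pay the square root once, up front: the fidelity bound between $\mu$ and its spread-out version $\tilde\mu_k=\mu_{\lceil k/n\rceil}/n$ gives $\frac12\|\mu-\tilde\mu\|_1\le\sqrt{1-(1-\eps)^2}\le\sqrt{2\eps}$, hence
\begin{equation*}
G(m)\,\le\, G\big(\lceil m/n\rceil\big)+\sqrt{2\eps}\qquad\text{for every } m
\end{equation*}
simultaneously; telescoping from $m=d$ down to $m=1$ (and using $\mu_1\le\mu_1/n+\sqrt{2\eps}$ to kill the last term) gives $1\le O(\ell\sqrt{\eps})$ with $\ell\approx\log_n d$, i.e., $d\ge n^{C/\sqrt{\eps}}$. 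Finally, a small correction: the $1/k$ embezzlement spectra do not saturate your bound --- they achieve $\eps\approx\log n/\log d$, i.e.\ $d\approx n^{1/\eps}$, consistent with the paper's remark that the claim is only known to be tight up to a square in the exponent.
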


We note that the bound in the claim is not far from tight, as Example~\ref{ex:pt-bias} demonstrates the existence of a strategy achieving an entangled bias of $1-\eps$ in $T_n$ and using an entangled state of dimension $d = n^{O(1/\eps)}$ for each player. 
We also note that one can derive this claim in a black-box fashion from the main result of~\cite{LTW08}. In more detail, the ``coherent state exchange'' game~\cite{LTW08} can be described as the rank-one game given by the states $\ket{\eta} = (\ket{00}\ket{0}+\ket{\varphi}\ket{1})/\sqrt{2}$ and $\ket{\gamma} = (\ket{00}\ket{0}+\ket{11}\ket{1})/\sqrt{2}$. This game is very close to the rank-one game $\hat{T}_2$ we associated to $T_2$ in Example~\ref{ex:xor-to-rk1}, and it is not hard to convert any strategy for $\hat{T}_2$ into a strategy for the coherent state exchange game with a similar success probability.%
\pnote{
Alice and Bob use the same entangled state $\ket{\Psi}$ as in their strategy for $T_2$. Let $\Pi$ be the projector on Span$\{\ket{1},\ket{2}\}$ and define
$$ A'\,:=\, \ket{1}\bra{0} A\Pi + \ket{0}\bra{0}\quad\text{and}\quad B'\,:=\, \ket{1}\bra{0} B\Pi + \ket{0}\bra{0}.$$
$A'$ and $B'$ have norm at most $1$, and using the equation in proof of Claim~\ref{claim:pt-infinite} we see that the success probability of the strategy $(A',B',\ket{\Psi})$ in $\hat{X}$ is at least\onote{strictly speaking in the rank-one section we never wrote the expression for the success probability of a strategy, so this is a bit ugly}
$$ \big| \bra{\Psi} \bra{\gamma} (A'\otimes B') \ket{\eta}\ket{\Psi} \big|^2 \,=\, \frac{1}{4}\big| 1 + \bra{\Psi}\bra{00} A\otimes B \ket{\varphi}\ket{\Psi} \big|^2  \,\geq\, (1-\eps)^2.$$
}
The claim for the case $n=2$ then follows from the main result of~\cite{LTW08}, and the general case can be derived from a straightforward modification of their proof. Below we give a more direct proof based on the techniques in~\cite{vDH03,LTW08}.

\begin{proof} The fact that $(A,B,\ket{\Psi})$ achieves a bias at least $1-\eps$ in $T_n$ implies, by definition,
\begin{equation*} 
\big|\bra{\Psi} \big( (A\otimes B) (M(T_n)\otimes \Id) \big)\ket{\Psi}\big|\,=\,\big|\Re\big( \bra{00}\bra{\Psi} (A\otimes B) \ket{\me_n}\ket{\Psi}\big)\big| \,\geq\, (1-\eps).
\end{equation*} 
We use the the following fact, implicit in~\cite[Section~3]{LTW08}. (Its proof follows from the Fuchs-van de Graaf inequalities, which relate the fidelity to the trace norm, and Fannes' inequality, which provides a lower bound on the trace distance between two density matrices as a function of the difference of their von Neumann entropies.)

\begin{fact} Let $n,d$ be integers, $U,V\in\lin{\C^n\otimes \C^d}$ arbitrary operators of norm at most $1$, and $\ket{\varphi}\in\C^n\otimes \C^n$, $\ket{\Psi}\in\C^d\otimes \C^d$ of unit norm. Let $S$ be the von Neumann entropy of the reduced density of $\ket{\varphi}$ on any of the two subsystems, and assume $S\ge 1$. Then\pnote{How to get this bound: Fannes states that, if $x = \|\rho-\sigma\|_1$ and $S = |S(\rho)-S(\sigma)|$, then provided $x\le 1/e$ it holds that $S \le x\log (d/x)$. For us, $S\ge 1$, which implies that $x\ge 1/(3d)$ must hold; so we can simplify the bound to get $x\ge S/(2\log (3d))$ (this is quite crude in the constants, but gives the right order; it is also the same bound as~\cite{LTW08} use). We can then plug this in the equation from~\cite{LTW08} (middle of p.8).}
$$1-\big|\bra{\varphi}\bra{\Psi} U\otimes V \ket{0^n 0^n}\ket{\Psi}\big|^2 \,\geq\, \min\Big\{ \frac{1}{4e^2},\,\frac{S^2}{16 \log^2 (3d)}\Big\}.$$
\end{fact}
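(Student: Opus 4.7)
The plan is to reduce the claimed bound to a Fannes-type entropy comparison between the reduced states of two pure states. Set
$$\ket{\alpha} \,:=\, (U\otimes V)\ket{0^n}\ket{0^n}\ket{\Psi}\quad\text{and}\quad\ket{\beta} \,:=\, \ket{\varphi}\ket{\Psi},$$
both viewed as states on the four registers $(\C^n)_{A_1}\otimes(\C^d)_{A_2}\otimes(\C^n)_{B_1}\otimes(\C^d)_{B_2}$, with Alice holding $A_1A_2$ and Bob holding $B_1B_2$. Since $|\langle\varphi,\Psi|U\otimes V|0^n0^n,\Psi\rangle|^2$ is a convex function of $U$ (as well as of $V$), its maximum over the operator-norm unit ball of $\lin{\C^n\otimes\C^d}$ is attained at an extreme point, i.e., a unitary. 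So I may assume $U,V$ are unitary, making $\ket{\alpha},\ket{\beta}$ unit vectors.

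The first step is the Fuchs-van de Graaf identity for pure states, $\|\ket{\alpha}\bra{\alpha}-\ket{\beta}\bra{\beta}\|_1 = 2\sqrt{1-|\langle\alpha|\beta\rangle|^2}$, i.e.
$$1-|\langle\alpha|\beta\rangle|^2 \,=\, \tfrac{1}{4}\|\ket{\alpha}\bra{\alpha}-\ket{\beta}\bra{\beta}\|_1^2.$$
Let $\rho := \Tr_{B_1B_2}\ket{\alpha}\bra{\alpha}$ and $\sigma := \Tr_{B_1B_2}\ket{\beta}\bra{\beta}$. Contractivity of the trace norm under partial trace then gives $1-|\langle\alpha|\beta\rangle|^2 \geq x^2/4$ with $x := \|\rho-\sigma\|_1$, so it suffices to lower bound $x$.

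Next I compute the entropy gap between $\rho$ and $\sigma$. Writing $\rho_\Psi$ for the one-sided reduced density of $\ket{\Psi}$ (the two sides share the same spectrum) and similarly for $\rho_\varphi$, the factoring $\ket{\beta}=\ket{\varphi}_{A_1B_1}\otimes\ket{\Psi}_{A_2B_2}$ gives $\sigma=\rho_\varphi\otimes\rho_\Psi$ and hence $S(\sigma)=S+S(\rho_\Psi)$. For $\rho$, since $U\otimes V$ acts as a product across the $A$-$B$ cut, the Schmidt coefficients of $\ket{\alpha}$ across this cut agree with those of the input $\ket{0^n0^n}_{A_1B_1}\otimes\ket{\Psi}_{A_2B_2}$, namely the Schmidt coefficients of $\ket{\Psi}$; hence $S(\rho)=S(\rho_\Psi)$, and the entropy gap is exactly $S$. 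The Fannes inequality now states that, for two density matrices of effective dimension $D$ at trace distance $x\leq 1/e$, the entropy gap is at most $x\log(D/x)$. If $x\geq 1/e$ we directly get $x^2/4\geq 1/(4e^2)$, the first term of the minimum. If $x<1/e$, the inequality $S\leq x\log(D/x)$ together with $S\geq 1$ forces $x\geq 1/(3D)$ (otherwise the right-hand side is $<1$), whence $\log(D/x)\leq 2\log(3D)$ and $x\geq S/(2\log(3D))$, giving $x^2/4\geq S^2/(16\log^2(3D))$.

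The main obstacle is matching the precise dimension $D$ in the final bound. The direct argument above yields $D = nd$, the ambient dimension of Alice's system, whereas the stated Fact uses $\log(3d)$. To recover the sharper $\log(3d)$ one exploits that $\rho$ has rank at most $d$ (since $\rho = U(\ket{0^n}\bra{0^n}\otimes\rho_\Psi)U^\dagger$ has the same spectrum as $\rho_\Psi$), effectively confining the Fannes comparison to a $d$-dimensional subspace, as in the analyses of~\cite{vDH03,LTW08}. For the intended application to Claim~\ref{claim:pt-infinite}, the weaker $\log(3nd)$ dependence suffices up to changing the universal constant $C$.
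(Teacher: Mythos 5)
Your argument follows precisely the route the paper sketches in the parenthetical remark preceding the Fact (and spells out in its notes): reduce to unitary $U,V$ by convexity, apply the pure-state Fuchs--van de Graaf identity, pass to Alice's reduced states $\rho,\sigma$ by monotonicity of the trace norm under partial trace, observe that the local unitary $U\otimes V$ preserves the Schmidt spectrum across the Alice/Bob cut so that $S(\sigma)-S(\rho)=S$ exactly, and close with Fannes' inequality via the case split at $x=1/e$ and the elementary manipulation $x\ge 1/(3D)\Rightarrow\log(D/x)\le 2\log(3D)$. All of these steps are correct, and the last computation is word for word the one in the paper's own derivation.

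The one point of divergence is the dimension $D$ entering Fannes, and there your honesty is better than your fix. The comparison is between $\rho=U(\ket{0^n}\bra{0^n}\otimes\rho_\Psi)U^\dagger$ and $\sigma=\rho_\varphi\otimes\rho_\Psi$ as states on Alice's full register $\C^n\otimes\C^d$, so the argument gives $D=nd$. Your proposed repair---confining Fannes to a $d$-dimensional subspace because $\rho$ has rank at most $d$---does not go through: Fannes requires \emph{both} states to be supported there, and $\sigma$ has rank up to $nd$. In fact no repair is possible, because the Fact with $\log(3d)$ is false when $n\gg d$: take $d=1$, $U=V=\Id$ and $\ket{\varphi}=\sqrt{1-\mu}\,\ket{00}+\sqrt{\mu/(n-1)}\sum_{i\ge 1}\ket{ii}$; for $d=1$ and $S\ge 1$ both terms of the minimum are bounded below by an absolute constant, so choosing $\mu$ smaller than that constant and then $n$ large enough that $S=h(\mu)+\mu\log(n-1)\ge 1$ gives $1-|\bra{\varphi}\,0^n0^n\rangle|^2=\mu$ strictly below the claimed minimum. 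So the Fact should be read with $\log(3nd)$ (Alice's total dimension), a slip that is also present in the paper's own derivation. As you correctly note, this is harmless for the intended application: in Claim~\ref{claim:pt-infinite} one has $\ket{\varphi}=\ket{\me_n}$ and $S=\log n$, and replacing $\log(3d)$ by $\log(3nd)$ only changes the universal constant $C$ in the conclusion $d\ge n^{C/\sqrt{\eps}}$. Modulo that correction, your proof is complete and coincides with the paper's.
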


Applying the fact to $\ket{\varphi} = \ket{\me_n}$ and $U=A,V=B$, and using that the reduced density of the maximally entangled state $\ket{\me_n}$ has von Neumann entropy $\log n$, we obtain that the strategy $(A,B,\ket{\Psi})$ must satisfy $\eps \geq C^2\log^2(n)/\log^2(d)$ for some universal constant $C>0$. 
\end{proof}

\subsection{The family \texorpdfstring{$(\HI_n)$}{Hn}}\label{sec:hi-game}

In this section we prove Theorem~\ref{thm:higamesummary}. The inequalities involving $\HI_1$ are proved in Sections~\ref{sec:hi-1} and~\ref{sec:hi-os}, and the inequalities involving $\HI_n$ are discussed in Section~\ref{sec:hi-general}. 

\subsubsection{The game \texorpdfstring{$\HI_1$}{H1}}\label{sec:hi-1}

The game $\HI:=\HI_1$ is a game of size $3$, 
whose associated game matrix $M=M(\HI)$ corresponds to the $n=1$ case of a family introduced in~\cite{HI95}. It is defined as
$$M \,=\, M(\HI) \,=\, \frac{1}{10} \big( C_1 \otimes C_1 + C_2 \otimes C_2 + C_3 \otimes C_3 \big),$$
where
$$C_1 = \begin{pmatrix} 0 & 0 & 0 \\ 0 & 0 & 1\\ 0 & -1 & 0 \end{pmatrix},\qquad C_2 = \begin{pmatrix} 0 & 0 & 1 \\ 0 & 0 & 0\\ -1 & 0 & 0 \end{pmatrix}\qquad \text{and}\qquad C_3=\begin{pmatrix} 0 & 1 & 0 \\ -1 & 0 & 0\\ 0 & 0 & 0 \end{pmatrix}.$$
One can verify that our choice of normalization is such that $\|M\|_1=1$. 
The following lemma sums up the results of~\cite{HI95} about $\HI$. 

\begin{lemma}[\cite{HI95}]\label{lem:hi-1}
The following hold for the game $\HI$:
$$\frac{2}{5}\,=\,\bias(\HI)\,=\,\bias^\C(\HI) \,<\, \biasme(\HI)  \,<\, \biasnc(\HI)\,=\,\frac{3}{5}.$$
\end{lemma}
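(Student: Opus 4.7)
The plan is to exploit the product structure of the game matrix $M = \tfrac{1}{10}\sum_l C_l \otimes C_l$: for any $A,B\in\lin{\C^3}$,
$$
\Tr\big((A\otimes B)M\big) \,=\, \frac{1}{10}\sum_{l=1}^3 \Tr(AC_l)\,\Tr(BC_l).
$$
Since each $C_l$ is real antisymmetric (hence skew-Hermitian) and $\Tr(C_mC_l) = -2\delta_{ml}$, only the antisymmetric part $A^a := \tfrac{1}{2}(A-A^T)$ of $A$ contributes: writing $A^a = \sum_l a_l C_l$ with complex coefficients $a_l$, one has $\Tr(AC_l) = -2a_l$. A direct computation of $(A^a)^\dagger A^a$ shows that the three singular values of $A^a$ are $0,\|a\|_2,\|a\|_2$, so $\|A^a\|_\infty = \|a\|_2$. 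Combined with the fact that $A\mapsto A^a$ is norm-contracting (as $\|\cdot\|_\infty$ is invariant under transposition), this yields the key inequality
$$
\sum_{l=1}^3 |\Tr(AC_l)|^2 \,=\, 4\|a\|_2^2 \,\le\, 4\|A\|_\infty^2,
$$
as well as its Frobenius analogue $\sum_l |\Tr(XC_l)|^2 = 2\|X^a\|_F^2 \le 2\|X\|_F^2$ valid for every $X \in \lin{\C^3}$.

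For the equalities $\bias(\HI) = \bias^\C(\HI) = \tfrac{2}{5}$, Cauchy--Schwarz together with the operator-norm bound above gives $|\Tr((A\otimes B)M)|\le \tfrac{1}{10}\cdot 2\cdot 2 = \tfrac{2}{5}$, while the choice $A = B = iC_1$ (which is Hermitian of operator norm one) yields $\Tr(AC_l)\Tr(BC_l) = -4\delta_{l1}$ and saturates the bound. For $\biasnc(\HI) = \tfrac{3}{5}$, the Frobenius version yields $\sum_{l,r}|\Tr(X_rC_l)|^2 \le 2\sum_r\|X_r\|_F^2 = 2\Tr(\vec X^\dagger \vec X) \le 2n = 6$ for any feasible $\vec X$, since the constraint $\vec X^\dagger\vec X \le \Id$ forces $\Tr(\vec X^\dagger \vec X) \le n = 3$; Cauchy--Schwarz then gives $\biasnc(\HI)\le\tfrac{6}{10} = \tfrac{3}{5}$. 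A matching lower bound is achieved for $d=3$ by the explicit choice $X_r = Y_r = C_r/\sqrt{2}$ for $r=1,2,3$: direct calculation shows that $\sum_r X_rX_r^\dagger = \sum_r X_r^\dagger X_r = \Id$, so all four $\biasnc$-constraints hold with equality, and the bias equals $\tfrac{1}{20}\sum_{l,r}(\Tr(C_rC_l))^2 = \tfrac{12}{20} = \tfrac{3}{5}$.

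For the strict inequality $\biasme(\HI) > \tfrac{2}{5}$, the plan is to exhibit an explicit entangled strategy. A natural ansatz uses an ancilla $\C^d$ and operators of the form $A = \sum_l iC_l \otimes U_l$, $B = \sum_l iC_l \otimes V_l$ with $U_l,V_l \in \herm{\C^d}$, for which a short calculation gives the bias
$$
-\frac{2}{5d}\sum_l \Tr(U_lV_l^T).
$$
The challenge is that the transpose in $V_l^T$ introduces sign cancellations, so that naive choices (commuting $U_l$, or $U_l = V_l$ equal to Pauli matrices for $d=2$) fall short of $\tfrac{2}{5}$; the main obstacle is to find a sufficiently non-commutative pair $(U_l,V_l)$ that produces a strictly larger bias while still respecting the operator-norm constraint $\|A\|_\infty,\|B\|_\infty\le 1$. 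For the other strict inequality $\biasme(\HI) < \tfrac{3}{5}$, observe that the optimal $\biasnc$-solution described above has exactly three slots (one per $C_r$), whereas any strategy using a maximally entangled state $\ket{\Psi^{me}_d}$ produces, via the embedding of Lemma~\ref{lem:bias-relaxation-quantum}, a vector-valued matrix with $d^2$ slots; since $d^2 = 3$ has no integer solution, the SDP optimum cannot be attained by an observable-derived strategy. Promoting this structural obstruction to a strict quantitative gap can be done via the ultrafilter-based compactness argument of~\cite{HI95}; alternatively, and in keeping with the paper's more quantitative perspective, one may invoke the stronger result $\bias^*(\HI_1) < \tfrac{3}{5}$ established in Section~\ref{sec:hi-os} via the operator space Grothendieck inequality, from which $\biasme(\HI_1) \le \bias^*(\HI_1) < \tfrac{3}{5}$ follows at once.
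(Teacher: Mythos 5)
Your arguments for the two equalities are correct and essentially the paper's own: the bound $\bias^\C(\HI)\le 2/5$ follows from Cauchy--Schwarz plus the estimate $\sum_l|\Tr(AC_l)|^2\le 4\|A\|_\infty^2$ on the antisymmetric part (the paper obtains this from ``rank at most $2$ and singular values at most $1$, hence Frobenius norm squared at most $2$'' rather than your exact formula $\|A^a\|_\infty=\|a\|_2$, a cosmetic difference), and the bound $\biasnc(\HI)\le 3/5$ follows from the Frobenius analogue together with $\Tr(\vec{X}^\dagger\vec{X})\le 3$, matched by the same optimal solution $X_r=Y_r=C_r/\sqrt 2$. Deferring the strict upper bound $\biasme(\HI)<3/5$ to~\cite{HI95} or to the later quantitative bound $\bias^*(\HI)\le 3/5-\delta$ of Lemma~\ref{lem:hi-sdp} is also legitimate and is effectively what the paper does; just be aware that your ``$d^2=3$ has no integer solution'' observation is not itself evidence of a gap, since an optimal SDP solution need not have exactly three slots and a supremum can be approached without being attained.

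The genuine gap is the strict inequality $\bias^\C(\HI)<\biasme(\HI)$, which you do not prove: you set up the ansatz $A=\sum_l iC_l\otimes U_l$, $B=\sum_l iC_l\otimes V_l$, observe that naive instantiations fail to beat $2/5$, and explicitly leave the ``main obstacle'' open. This is one of the two strict inequalities the lemma asserts, so it cannot be left as a plan. The paper (Claim~\ref{clm:himaxentlowerbound}, following~\cite{HI95}) closes it with a concrete strategy: both players share $\ket{\Psi_3^{me}}$ and measure their message register jointly with their share of the state using the projector onto $\linspan\{\ket{12}-\ket{21},\ket{13}-\ket{31},\ket{23}-\ket{32},\ket{\Psi_3^{me}}\}$ versus its orthogonal complement (equivalently, they measure in the eigenbasis of $M$ and output the sign of the corresponding eigenvalue); a direct computation gives bias $5/9>2/5$. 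Note that the corresponding observable is $Q=-F+2\,\ketbra{\Psi_3^{me}}{\Psi_3^{me}}$ with $F$ the swap operator, which has nonzero components along \emph{symmetric} matrices in the message register and therefore lies outside your ansatz class $\{\sum_l iC_l\otimes U_l\}$; so even an ingenious choice of $(U_l,V_l)$ within that class may not suffice, and you should either exhibit an explicit strategy of the paper's form or broaden the ansatz.
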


For completeness, we prove this lemma in the three claims below, following the original arguments from~\cite{HI95}. 
We use the opportunity to observe that their proof of the equality $\biasnc(\HI)=3/5$ can be used to also show that $\biasos(\HI)=3/5$, as stated in Theorem~\ref{thm:higamesummary}. The only remaining inequality in Theorem~\ref{thm:higamesummary}, $\bias^*(\HI)<3/5$, is proved in Section~\ref{sec:hi-os}.

We start with the unentangled bias. 

\begin{claim}[\cite{HI95},~Remark~1.4] The unentangled and complex biases of the game $\HI$ are $\bias(\HI)=\bias^\C(\HI)=2/5$.
\end{claim}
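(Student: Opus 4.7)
The plan is to reduce the bias computation to a Cauchy--Schwarz bound in $\C^3$, exploiting the specific antisymmetric structure of the matrices $C_k$. Starting from the identity $\Tr((A\otimes B)(C_k\otimes C_k)) = \Tr(AC_k)\Tr(BC_k)$, the bias decomposes as
\[
\Tr((A\otimes B)M) \,=\, \frac{1}{10}\sum_{k=1}^3 \Tr(AC_k)\Tr(BC_k).
\]
Since each $C_k$ satisfies $C_k^T = -C_k$, only the antisymmetric part $A_a:=(A-A^T)/2$ contributes: $\Tr(AC_k) = \Tr(A_a C_k)$. The three matrices $\{C_1,C_2,C_3\}$ form a basis of the space of $3\times 3$ complex antisymmetric matrices, so I would write $A_a = \sum_k a_k C_k$ and $B_a = \sum_k b_k C_k$ for uniquely determined $\vec a, \vec b\in\C^3$. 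A short computation gives $\Tr(C_j C_k) = -2\delta_{jk}$, so $\Tr(AC_k) = -2a_k$ and therefore
\[
\Tr((A\otimes B)M) \,=\, \frac{2}{5}\sum_{k=1}^3 a_k b_k.
\]

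The heart of the upper bound is the norm inequality $\|\vec a\|\le \|A\|_\infty$, which I would establish by combining three ingredients. First, the orthogonality relation $\Tr(C_j^\dagger C_k) = 2\delta_{jk}$ (using $C_k^\dagger = -C_k$) yields the Frobenius identity $\|A_a\|_F^2 = 2\|\vec a\|^2$. Second, any $3\times 3$ complex antisymmetric matrix has rank at most $2$: its determinant vanishes since $\det A_a = \det(-A_a^T) = -\det A_a$, so at most two of its singular values are nonzero, whence $\|A_a\|_F^2 \le 2\|A_a\|_\infty^2$. Third, $\|A_a\|_\infty \le \|A\|_\infty$ because $A_a = \tfrac{1}{2}(A-A^T)$ and $\|A^T\|_\infty = \|A\|_\infty$. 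Combining these gives $\|\vec a\|\le \|A\|_\infty \le 1$, and symmetrically $\|\vec b\|\le 1$. Cauchy--Schwarz then bounds
\[
|\Tr((A\otimes B)M)| \,=\, \frac{2}{5}\Big|\sum_k a_k b_k\Big| \,\le\, \frac{2}{5}\|\vec a\|\|\vec b\| \,\le\, \frac{2}{5},
\]
establishing $\bias^\C(\HI)\le 2/5$ and \emph{a fortiori} $\bias(\HI)\le 2/5$.

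For the matching lower bound, I would exhibit explicit strategies. The (complex) strategy $A=B=C_1$ has $\|C_1\|_\infty = 1$, corresponds to $\vec a=\vec b=(1,0,0)$, and yields $\Tr((C_1\otimes C_1)M) = 2/5$, giving $\bias^\C(\HI) \ge 2/5$. Since $C_1$ is real antisymmetric, $iC_1$ is Hermitian of operator norm $1$, so the strategy $A=B=iC_1$ is a valid unentangled strategy achieving the same absolute value $2/5$, proving $\bias(\HI) \ge 2/5$ as well. The main obstacle I anticipate is the rank-$2$ inequality $\|A_a\|_F^2 \le 2\|A_a\|_\infty^2$: it relies on the odd-dimension fact that $\det A_a = 0$ for a $3\times 3$ antisymmetric matrix, and it is exactly this step that makes the constant $2/5$ tight rather than something weaker.
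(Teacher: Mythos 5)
Your proof is correct and follows essentially the same route as the paper's: both reduce to $\Tr(AC_k)\Tr(BC_k)$, pass to the antisymmetric part $(A-A^T)/2$, use the vanishing determinant of a $3\times 3$ antisymmetric matrix to bound the Frobenius norm by $\sqrt{2}$ times the operator norm, and finish with Cauchy--Schwarz; your expansion in the orthogonal basis $\{C_k\}$ is just a reparametrization of the paper's explicit $(x,y,z)$ coordinates. The explicit strategies for the lower bound also match (the paper uses $A=iC_1$, $B=-iC_1$ to get $+2/5$, but your $A=B=iC_1$ giving $-2/5$ is equally valid since the bias is an absolute value).
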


\begin{proof} 
We first observe that it is easy to achieve a bias of $2/5$: for instance, Alice can use the observable $A = iC_1$, and Bob $B=-iC_1$, in which case $\Tr( M (A\otimes B) ) = (1/10) \Tr(C_1^2)^2 = 2/5$. Note that this strategy has the following operational interpretation: both players bet on their questions being $\{\ket{1}\pm i \ket{2}\}$, and measure their respective message registers in a basis containing both vectors. If they get neither they output a random answer. If Alice obtains $\ket{1}+ i \ket{2}$ she outputs $0$, and if she obtains $\ket{1}-i\ket{2}$ she outputs $1$; Bob does exactly the opposite. 
 
Now we show that $\bias^\C(\HI)\leq 2/5$. Consider an arbitrary complex strategy for the players, using (possibly non-Hermitian) matrices $A$ and $B$ with operator norm at most $1$. Using the Cauchy-Schwarz inequality, we may bound the bias that $(A,B)$ achieve in $\HI$ as follows:
\begin{align*}
\frac{1}{10} \sum_i \Tr\big( (A\otimes B) (C_i\otimes C_i)\big) 
 & = \frac{1}{10} \sum_i \Tr(A C_i)\Tr(B C_i) \nonumber \\
& \le \frac{1}{10} \Big(\sum_i \big|\Tr\big( AC_i \big)\big|^2\Big)^{1/2} \Big(\sum_i \big|\Tr\big( BC_i\big)\big|^2\Big)^{1/2}. 
\end{align*}
We now show that $\sum_i \big|\Tr\big( AC_i \big)\big|^2 \le 4$ which, together with the analogous bound for $B$, would imply the claim. Notice that
\[
\sum_i \big|\Tr\big( AC_i \big)\big|^2 = \sum_i \Big|\Tr\Big( \frac{A-A^T}{2}\, C_i \Big)\Big|^2 = 4(|x|^2+|y|^2+|z|^2),
\]
where $x,y,z \in \C$ are defined by
\[
\frac{A-A^T}{2}\, =\, \begin{pmatrix} 0 & x & y \\ -x & 0 & z \\ -y & -z & 0 \end{pmatrix}.
\]
Observe that this matrix has rank at most 2 as its determinant is zero, and that its operator norm satisfies
$\|(A-A^T)/2\|_\infty\leq\|A\|_\infty\leq 1$. It therefore has at most two nonzero singular values, and both are at most $1$. Since the Frobenius norm is both the sum of squares of the singular values and the sum of the modulus squared of the entries, we conclude that $|x|^2+|y|^2+|z|^2 \le 1$.
\end{proof}

Next we show that the maximally entangled bias is strictly larger than the unentangled bias. 

\begin{claim}[\cite{HI95}, Theorem~3.4]\label{clm:himaxentlowerbound} The maximally entangled bias of the game $\HI$  satisfies $\biasme(\HI)\geq 5/9\approx 0.556$. 
\end{claim}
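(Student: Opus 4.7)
My plan is to prove $\biasme(\HI) \geq 5/9$ by exhibiting an explicit entangled strategy with a qutrit ancilla $\HA = \HB = \C^3$ and the maximally entangled state $\ket{\Psi^{me}_3}$. The key observation is that the game matrix $M = \frac{1}{10}\sum_k C_k \otimes C_k$ is SU(2)-equivariant: under the adjoint $SO(3)$ action on each copy of $\C^3$, the $C_k$ transform as a vector, making $M$ invariant. This suggests restricting attention to SU(2)-equivariant observables on $\C^3 \otimes \C^3$ (message $\otimes$ ancilla), where the ancilla carries the spin-$1$ representation.

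Under this action, $\C^3 \otimes \C^3 \cong \C^1 \oplus \C^3 \oplus \C^5$ (spin-$0$, spin-$1$, spin-$2$), and any SU(2)-equivariant observable has the form $\mathcal{A} = s_0 P_0 + s_1 P_1 + s_2 P_2$ with $s_J \in \{\pm 1\}$, where $P_J$ projects onto the spin-$J$ isotypic component. The plan is to rewrite $\mathcal{A}$ as a polynomial $\alpha \Id + \beta S + \gamma S^2$ in $S := -\sum_l C_l \otimes C_l$, which has eigenvalues $-2, -1, +1$ on the three components. Using the identities $\Tr(C_j C_k) = -2\delta_{jk}$ and (the sign-sensitive) $\Tr(C_l C_m C_k) = -\epsilon_{lmk}$, I will show that the partial trace $U_k^\mathcal{A} := \Tr_{\C^3}[\mathcal{A}(C_k \otimes \Id_{\HA})]$ simplifies to $(2\beta - \gamma) C_k$ as an operator on the ancilla.

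Combining this with the block-expansion formula for the maximally entangled bias given just after Claim~\ref{claim:obs_value-herm} (specifically, bias $= \frac{1}{d'}\sum_{i,l}\Tr[(\mathcal{A}^{(i,l)} \otimes \mathcal{B}^{(i,l)}) M]$ with $d' = 3$), and taking $\mathcal{B} = \mathcal{A}$, a short calculation reduces the bias to $(2\beta - \gamma)^2 / 5$. The last step is to optimize over the eight sign assignments $(s_0, s_1, s_2)$: solving $\beta = (s_2 - s_1)/2$, $\gamma = (2s_0 - 3s_1 + s_2)/6$ gives $2\beta - \gamma = (-2s_0 - 3s_1 + 5s_2)/6$, whose maximum absolute value is $5/3$, attained by $(s_0, s_1, s_2) = (+, +, -)$ (equivalently, $\mathcal{A} = P_0 + P_1 - P_2$, the observable distinguishing the spin-$2$ subspace from its complement). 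This yields bias $(5/3)^2 / 5 = 5/9$, completing the proof.

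The main obstacle in executing this plan lies in the two trace identities involving products of the $C_k$'s. The sign in $\Tr(C_l C_m C_k) = -\epsilon_{lmk}$ depends crucially on the explicit conventions used for the $C_k$ matrices in Section~\ref{sec:hi-1}, and must be verified directly from those definitions (a quick check on $\Tr(C_1 C_2 C_3) = \Tr(-\ket{2}\bra{2}) = -1$ confirms the minus sign). The rest is a routine application of $SU(2)$ representation theory and verification that $\mathcal{A}^2 = \Id$, which follows automatically from $\mathcal{A}$ being a signed sum of orthogonal projectors summing to $\Id$.
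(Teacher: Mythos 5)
Your proposal is correct and yields exactly the paper's strategy: the observable $P_0+P_1-P_2$ is precisely the paper's binary measurement whose outcome-$0$ projector is onto $\linspan\{\ket{12}-\ket{21},\ket{13}-\ket{31},\ket{23}-\ket{32},\ket{\Psi_3}\}$ (the spin-$0$ plus spin-$1$ components), used with the maximally entangled qutrit state. The paper simply states this strategy and asserts the value $5/9$, whereas you supply the $SU(2)$-equivariance derivation and the explicit trace computation; both check out.
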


\begin{proof} We describe an explicit strategy. 
The players share the three-dimensional state $\ket{\Psi_3} = \frac{1}{\sqrt{3}} \big( \ket{11}+\ket{22}+\ket{33}\big)$. Upon receiving their question, each of them performs the same binary projective measurement. The projector corresponding to outcome $0$ projects on $\linspan \big\{ \ket{12}-\ket{21},\ket{13}-\ket{31},\ket{23}-\ket{31},\ket{\Psi_3}\big\}$. The projector corresponding to outcome $1$ projects on the $5$-dimensional orthogonal subspace.\footnote{We note that this strategy corresponds to measuring in the eigenbasis of the matrix $M$, and outputting the sign of the eigenvalue associated with the eigenvector obtained as outcome.}
One can directly compute the resulting value $5/9$, or a factor $\approx 1.389$ advantage over the best unentangled strategy. 
\end{proof}

We conclude by computing $\biasnc(\HI)$ and $\biasos(\HI)$. 

\begin{claim}\label{claim:hi-os} We have
$$\biasnc(\textsc{H})\,=\,\biasos(\textsc{H})\,=\,\frac{3}{5}. $$
\end{claim}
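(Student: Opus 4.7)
The plan is to show the two equalities by sandwiching both quantities between $3/5$ and $3/5$. First, note the easy inequality $\biasnc(\HI)\leq \biasos(\HI)$: given any feasible pair $(\vec{X},\vec{Y})$ for the $\biasnc$ semidefinite program in Definition~\ref{def:bias-nc}, setting $\vec{X}_R=\vec{X}_C:=\vec{X}$ and $\vec{Y}_R=\vec{Y}_C:=\vec{Y}$ in Definition~\ref{def:bias-os} trivially verifies the consistency constraint $\vec{X}_R\odot\vec{Y}_C=\vec{X}_C\odot\vec{Y}_R$ and all four of the operator-norm constraints, while preserving the objective value. Hence it suffices to prove $\biasnc(\HI)\geq 3/5$ and $\biasos(\HI)\leq 3/5$.

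For the lower bound, I would simply exhibit the explicit assignment $\vec{X}=\vec{Y}:=(C_1,C_2,C_3)/\sqrt{2}\in\vecm{3}{\C^3}$ and directly verify feasibility and the claimed value. Using $\Tr(C_iC_j)=-2\delta_{ij}$ (a short computation), one obtains
\[
\Tr\bigl((\vec{X}\odot\vec{Y})M\bigr)\,=\,\frac{1}{20}\sum_{i,k=1}^{3}\Tr(C_kC_i)^2\,=\,\frac{12}{20}\,=\,\frac{3}{5},
\]
while $C_k^\dagger=-C_k$ combined with $-\sum_k C_k^2=2\Id_3$ yields $\vec{X}\vec{X}^\dagger=\vec{X}^\dagger\vec{X}=\Id_3$ (and identically for $\vec{Y}$), so the constraints~\eqref{eq:nc-gt-pcons} are saturated.

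For the matching upper bound on $\biasos(\HI)$, expand
\[
\Tr\bigl((\vec{X}_R\odot\vec{Y}_C)M\bigr)\,=\,\frac{1}{10}\sum_{i=1}^{3}\sum_{k}\Tr\bigl((X_R)_kC_i\bigr)\,\Tr\bigl((Y_C)_kC_i\bigr),
\]
and apply the Cauchy--Schwarz inequality to the double sum over $(i,k)$. The key analytic input is the same observation already exploited for the unentangled bias: since each $C_i$ is antisymmetric, $\Tr(AC_i)=\Tr(\Phi(A)C_i)$ where $\Phi(A)=(A-A^T)/2$ lies in the three-dimensional antisymmetric subspace on which $\{C_i/\sqrt{2}\}$ is an orthonormal basis (with respect to the Frobenius inner product). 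Hence $\sum_i|\Tr(AC_i)|^2=2\|\Phi(A)\|_F^2\leq 2\|A\|_F^2$, since orthogonal projection decreases Frobenius norm. Summing over $k$ and using $\sum_k\|(X_R)_k\|_F^2=\Tr\bigl(\sum_k(X_R)_k(X_R)_k^\dagger\bigr)\leq\Tr(\Id_3)=3$ (using the constraint $\|\vec{X}_R\vec{X}_R^\dagger\|_\infty\leq 1$), and symmetrically using $\|\vec{Y}_C^\dagger\vec{Y}_C\|_\infty\leq 1$ for the $(Y_C)_k$ factors, one obtains that each Cauchy--Schwarz factor is at most $\sqrt{6}$, giving a total bound of $\frac{1}{10}\cdot 6=\frac{3}{5}$, as desired.

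The main conceptual point that makes this work cleanly -- and the step I anticipate as the only delicate one -- is the interplay between the dimension-$3$ antisymmetric subspace and the trace bound on the PSD matrix $\sum_k(X_R)_k(X_R)_k^\dagger\leq\Id_3$, which together produce exactly the factor $6=2\cdot 3$ needed to recover the tight value $3/5$. Note that only two of the four $\biasos$ constraints are used in the upper bound (the ``matching'' pair $\|\vec{X}_R\vec{X}_R^\dagger\|_\infty,\|\vec{Y}_C^\dagger\vec{Y}_C\|_\infty\leq 1$), mirroring the structure of Lemma~\ref{lem:realval}; the consistency constraint $\vec{X}_R\odot\vec{Y}_C=\vec{X}_C\odot\vec{Y}_R$ plays no role.
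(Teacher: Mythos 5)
Your proof is correct and follows essentially the same route as the paper's: the identical witness $\vec{X}=\vec{Y}=(C_1,C_2,C_3)/\sqrt{2}$ for the lower bound, and for the upper bound the same Cauchy--Schwarz step followed by the estimate $\sum_i|\Tr(AC_i)|^2\le 2\|A\|_F^2$ and the trace bound $\Tr(\sum_k X_kX_k^\dagger)\le 3$, using exactly the same two of the four $\biasos$ constraints. Your packaging of the key estimate as an orthogonal projection onto the antisymmetric subspace spanned by $\{C_i/\sqrt 2\}$ is just a (slightly more conceptual) rephrasing of the paper's elementary bound $|a-b|^2\le 2(|a|^2+|b|^2)$ applied entrywise.
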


\begin{proof}
We first show that $\biasnc(\HI) \geq 3/5$. Let $(e_1,e_2,e_3)$ be the canonical basis of $\C^3$, and $\vec{X} = \vec{Y} = (1/\sqrt{2}) \sum_i e_i\otimes C_i \in \vecm{3}{\C^3}$. It is not hard to verify that $\vec{X}\vec{X}^\dagger = \vec{X}^\dagger \vec{X} = \Id$, so that $\vec{X},\vec{Y}$ satisfy the constraints~\eqref{eq:nc-gt-pcons}. Moreover, $\vec{X}\odot\vec{Y} = \sum_i C_i \otimes C_i /2$ so that the objective value in~\eqref{eq:nc-gt-p} is
$$\frac{1}{10} \Tr\big((\vec{X}\odot\vec{Y})\, M\big)\,=\, \frac{1}{20}\, \sum_{i,j} \,\Tr(C_i C_j)^2 \,=\, \frac{3}{5}.$$

Next we show $\biasos(\HI) \leq 3/5$. Let $(\vec{X}_R,\vec{X}_C,\vec{Y}_R,\vec{Y}_C)$ be vector-valued matrices satisfying the constraints~\eqref{eq:os-gt-pcons}. Let $X_j$ (resp.\ $Y_j$) be the matrix whose entries correspond to the $j$-th coordinate of the vector-entries of $\vec{X}_R$ (resp.\ $\vec{Y}_C$), so that $\vec{X}_R\odot\vec{Y}_C = \sum_j X_j \otimes Y_j$. By definition, the value achieved in~\eqref{eq:os-gt-p} is
\begin{align*}
\frac{1}{10}\sum_i \Tr\big( (\vec{X}_R \odot \vec{Y}_C) (C_i\otimes C_i)\big) &\leq \frac{1}{10} \Big(\sum_{i,j} \big|\Tr\big( X_jC_i \big)\big|^2\Big)^{1/2} \Big(\sum_{i,j} \big|\Tr\big( Y_jC_i\big)\big|^2\Big)^{1/2}\\
&\leq \frac{1}{5}\Tr\Big(\sum_j  X_jX_j^\dagger \Big)^{1/2} \Tr \Big( \sum_j Y_j^\dagger Y_j \Big)^{1/2}\\
&\leq\frac{3}{5}\Big\| \sum_j  X_jX_j^\dagger\Big\|_\infty^{1/2} \Big\| \sum_j  Y_j^\dagger Y_j \Big\|_\infty^{1/2}\\
& = \frac{3}{5}\Big\| \vec{X}_R \vec{X}_R^\dagger\Big\|_\infty^{1/2} \Big\| \vec{Y}_C^\dagger \vec{Y}_C \Big\|_\infty^{1/2}\,\leq\, \frac{3}{5},
\end{align*}
where the first inequality uses the Cauchy-Schwarz inequality, for the second we used the definition of $C_i$ and $|a-b|^2\leq 2(|a|^2+|b|^2)$ for any complex $a,b$, the third inequality uses $|\Tr(A)| \leq 3\|A\|_\infty$ for $3$-dimensional $A$, and the last follows from the constraints~\eqref{eq:os-gt-pcons}.
\end{proof}

\subsubsection{An upper bound on the entangled bias of \texorpdfstring{$\HI_1$}{HI}}\label{sec:hi-os}

In order to complete the proof of the part relating to $\HI_1$ in Theorem~\ref{thm:higamesummary}, it remains to show that $\bias^*(\HI) < 3/5$. 
This will be shown in Lemma~\ref{lem:hi-sdp} below.
Before getting there, we will show as a warm-up in Claim~\ref{claim:hi-sdp-exact} that no entangled strategy achieves a bias of exactly $3/5$. (Notice that this is a weaker statement since one might still have that for any $\eps>0$ there is a strategy with bias at least $3/5-\eps$.) Claim~\ref{claim:hi-sdp-exact} will not be directly used in the proof of Lemma~\ref{lem:hi-sdp}, but its proof is simpler and provides the template for the final argument. Before proceeding, we state a simple preliminary claim which lets us assume that the players' strategy has some symmetry that will be helpful in the analysis. 

\begin{claim}\label{claim:hi-symmetry}
Let $(A',B',\ket{\Psi'})$ be any strategy for the players in the game $\HI$. There exists a strategy $(A,A,\ket{\Psi})$ achieving a bias at least as high as that of $(A',B',\ket{\Psi'})$ in $\HI$, and such that $A$ is an observable and $\ket{\Psi}$ a permutation-invariant state whose reduced density $\rho = \Tr_{\HA} \ket{\Psi}\bra{\Psi} = \Tr_{\HB} \ket{\Psi}\bra{\Psi}$ on either player's private register has full support. 
\end{claim}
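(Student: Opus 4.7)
The proof exploits the swap-symmetry of the game matrix $M = M(\HI) = \frac{1}{10}\sum_i C_i \otimes C_i$: since each $C_i \otimes C_i$ is invariant under the swap $\sigma$ on $\C^3 \otimes \C^3$, so is $M$. Hence, after padding so that $\HA = \HB =: \HH$, the strategy $(B', A', \tau\ket{\Psi'})$, where $\tau$ is the swap on $\HH \otimes \HH$, achieves the same bias as $(A', B', \ket{\Psi'})$.

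First I would construct a symmetrized strategy on the extended private space $\HH \otimes \C^2$ by gluing the original and swapped strategies via an ancilla qubit:
\[
A := A' \otimes \ket{0}\bra{0} + B' \otimes \ket{1}\bra{1}, \qquad \ket{\Psi} := \tfrac{1}{\sqrt{2}}\bigl(\ket{\Psi'} \otimes \ket{01} + \tau\ket{\Psi'} \otimes \ket{10}\bigr).
\]
Expanding $\ket{\Psi}\bra{\Psi}$ into its four qubit-block components, only the diagonal $\ket{01}\bra{01}$ and $\ket{10}\bra{10}$ blocks survive the trace against $A \otimes A$ (whose symmetrization-qubit parts are diagonal); invoking the swap-invariance of $M$, each of these contributes half the original bias, yielding $\bias(A,A,\ket{\Psi}) = \bias(A',B',\ket{\Psi'})$. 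Here $A$ is manifestly an observable (a block-diagonal sum of two observables) and $\ket{\Psi}$ is swap-invariant.

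For the full-support property on $\rho = \Tr_{\HB}\ket{\Psi}\bra{\Psi} = \tfrac{1}{2}(\rho_{A'}\otimes\ket{0}\bra{0} + \rho_{B'}\otimes\ket{1}\bra{1})$, I would first restrict the private spaces to the supports of $\rho_{A'}$ and $\rho_{B'}$ before symmetrizing. The bias is preserved since $\ket{\Psi'}$ already lies in the tensor of these supports; the observables become Hermitian contractions, but this is admissible by Claim~\ref{claim:obs_value-quant}. After padding the two restricted spaces to a common dimension and symmetrizing, the resulting state has reduced density with full support on $\HH \otimes \C^2$, because $\rho_{A'}$ and $\rho_{B'}$ are now full-rank on $\HH$.

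The main obstacle is then converting the resulting contraction $A$ back into an observable while preserving full support. I would apply the doubling dilation
\[
\hat{A} := \begin{pmatrix} A & \sqrt{\Id-A^2} \\ \sqrt{\Id-A^2} & -A \end{pmatrix}
\]
from the proof of Claim~\ref{claim:obs_value-quant}, which yields an observable on a further doubled space; the new ancilla qubit is initially populated only in $\ket{00}$, so a small permutation-invariant filler state on the new ancilla with sign chosen to make the leading-order cross term in the bias non-negative restores full support at negligible cost. A compactness argument in the limit then produces a normal-form strategy with bias at least as high as the original.
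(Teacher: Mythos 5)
Your symmetrization step is essentially the paper's: you use the swap-invariance of $M(\HI)$, glue $(A',B')$ and its swap via an ancilla qubit, and restrict to the supports of the reduced densities beforehand to get full support. That part is fine. The genuine gap is in how you recover an observable after the restriction. Restricting $A'$, $B'$ to the supports indeed produces Hermitian contractions rather than observables, but the correct (and much simpler) fix is the one the paper uses: for the \emph{fixed} restricted state $\ket{\Psi''}$, the bias $\Tr\bigl((A\otimes B)(M\otimes\ket{\Psi''}\bra{\Psi''})\bigr)$ is linear in $A$ and in $B$ separately, and the extreme points of the convex set of Hermitian contractions are exactly the observables; hence each contraction can be replaced by an observable \emph{on the same space}, without decreasing the bias and without touching the state. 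No dilation is needed, and the full-support property is untouched. One then symmetrizes the resulting observables.

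Your route instead dilates the contraction to an observable on a doubled space, which reintroduces dimensions on which the state has no support, and then tries to repair this with a small ``filler'' component and a limiting argument. This does not work as written. If $\ket{\Psi_\eps}=\sqrt{1-\eps^2}\,\ket{\Psi}+\eps\ket{\chi}$ with $\ket{\chi}$ supported on the new dimensions, the bias is $(1-\eps^2)\,\beta + 2\eps\sqrt{1-\eps^2}\,\Re\bra{\Psi}\hat M\ket{\chi} + \eps^2\bra{\chi}\hat M\ket{\chi}$ where $\beta$ is the original bias and $\hat M$ the effective Hermitian operator; choosing the sign of $\eps$ only makes the cross term non-negative, and the net change is still $\geq \eps^2(\bra{\chi}\hat M\ket{\chi}-\beta)$ in the worst case, which can be negative. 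So for each fixed $\eps$ you only get bias $\geq \beta - O(\eps^2)$, not $\geq\beta$. The concluding ``compactness argument in the limit'' cannot close this: full support is not a closed condition, so a limit of full-support strategies with bias tending to $\beta$ need not itself have full support. The claim demands an exact strategy achieving bias at least $\beta$ with full support, and your argument delivers neither exactness nor full support in the limit. (As a minor point, the dilation you quote with $\sqrt{\Id-A^2}$ is also not the one appearing in the proof of Claim~\ref{claim:obs_value-quant}, which uses the off-diagonal unitary dilation.)
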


\begin{proof}
First, we restrict $A',B'$ and the state $\ket{\Psi'}$ to the support of 
$\Tr_{\HB} \ket{\Psi'}\bra{\Psi'}$ and $\Tr_{\HA} \ket{\Psi'}\bra{\Psi'}$. Denoting the resulting strategy by $(A'',B'',\ket{\Psi''})$,
we have that both reduced densities of $\ket{\Psi''}$ have full support, and the strategy achieves the same bias. 
Next, since the bias is a bilinear function of the two measurements, we can find a strategy $(A''',B''',\ket{\Psi''})$ achieving a bias at least as high and such that $A''',B'''$ are Hermitian with eigenvalues in $\{\pm 1\}$, i.e., they are observables. Finally, let 
$$\ket{\Psi}\,=\,\frac{1}{\sqrt{2}}\big(\ket{0}_A\ket{1}_B\otimes \ket{\Psi''}_{AB} + \ket{1}_A\ket{0}_B \otimes \ket{\Psi''_\tau}_{AB}\big),$$
where $\ket{\Psi''_\tau}$ is $\ket{\Psi''}$ with Alice and Bob's respective registers (which are of the same dimension) permuted, and define $A = \ket{0}\bra{0}\otimes A''' + \ket{1}\bra{1} \otimes B'''$. With these definitions, it is easy to verify that the strategy $(A,A,\ket{\Psi})$ has the same bias as $(A',B',\ket{\Psi'})$ in $\HI$, $A$ is an observable, and $\ket{\Psi}$ is a permutation-invariant state whose reduced density has full support.
\end{proof}

\begin{claim}\label{claim:hi-sdp-exact}
Let $(A,B,\ket{\Psi})$ be a strategy for the players in $\HI$. Then the corresponding bias is strictly less than $3/5$.  
\end{claim}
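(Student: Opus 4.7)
The plan is to argue by contradiction: since $\biasos(\HI)=3/5$ by Claim~\ref{claim:hi-os} upper bounds the bias of any entangled strategy, it suffices to rule out the possibility that some strategy achieves equality. Assume, towards a contradiction, that $(A,B,\ket{\Psi})$ achieves bias exactly $3/5$. First I would apply Claim~\ref{claim:hi-symmetry} to replace it by a symmetric strategy $(A,A,\ket{\Psi})$, still of bias $3/5$, with $A\in\obs{\C^3\otimes\HA}$ and $\ket{\Psi}$ permutation-invariant and of full-support reduced density on $\HA$; fix the Schmidt decomposition $\ket{\Psi}=\sum_k\lambda_k\ket{k}\ket{k}$, so that every $\lambda_k>0$. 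Following the discussion after Lemma~\ref{lem:bias-relaxation-quantum}, I would form the row- and column-weighted vector-valued matrices $\vec{A}_R=(\lambda_k\widetilde{A}_{kl})_{k,l}$ and $\vec{A}_C=(\lambda_l\widetilde{A}_{kl})_{k,l}$ in $\vecm{d^2}{\C^3}$, where $\widetilde{A}_{kl} := (\Id\otimes\bra{k})A(\Id\otimes\ket{l})\in\lin{\C^3}$. By construction, $(\vec{A}_R,\vec{A}_C,\vec{A}_R,\vec{A}_C)$ is a feasible point of the SDP defining $\biasos(\HI)$ whose objective value equals the bias of the strategy.

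Since that bias equals $3/5$, every inequality in the chain used to prove $\biasos(\HI)\le 3/5$ in Claim~\ref{claim:hi-os} must be saturated. The crucial tight step to exploit is $\sum_i |\Tr(XC_i)|^2\le 2\Tr(XX^\dagger)$: expanding $\Tr(XC_i)$ in matrix entries of $X$ shows that equality forces $X_{ii}=0$ and $X_{ji}=-X_{ij}$ for $i\ne j$, i.e.\ $X$ must be antisymmetric as a $3\times 3$ matrix. Applied component-wise to $\vec{A}_R$ and using that all $\lambda_k>0$, this forces $\widetilde{A}_{kl}^T=-\widetilde{A}_{kl}$ for every $k,l$. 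Decomposing $A=\sum_{i,j}E_{ij}\otimes A_{ij}$ with $A_{ij}\in\lin{\HA}$, antisymmetry of every block $\widetilde{A}_{kl}$ is equivalent to the operator identities $A_{ii}=0$ and $A_{ji}=-A_{ij}$ for $i\ne j$; combined with Hermiticity $A_{ji}=A_{ij}^\dagger$, each off-diagonal block $A_{ij}$ is then anti-Hermitian.

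The final step is to contradict the observable relation $A^2=\Id$. Reading off the three message-register diagonal blocks of $A^2$ using $A_{ii}=0$ and $A_{ji}=-A_{ij}$ yields $A_{12}^2+A_{13}^2 = A_{12}^2+A_{23}^2 = A_{13}^2+A_{23}^2 = -\Id_\HA$, hence $A_{ij}^2=-\Id_\HA/2$ for every pair $i\ne j$. Expanding the $(1,2)$ off-diagonal block of $A^2$ then gives $A_{13}A_{32}=-A_{13}A_{23}=0$, and since $A_{13}^2=-\Id_\HA/2$ exhibits $A_{13}$ as invertible, this forces $A_{23}=0$, contradicting $A_{23}^2=-\Id_\HA/2$. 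I expect the main obstacle to lie in the middle step: carefully tracking through the several inequalities of Claim~\ref{claim:hi-os} to isolate exactly which one forces antisymmetry of the blocks $\widetilde{A}_{kl}$, and then recognising the resulting operator identity $A_{ji}=-A_{ij}$ as rigidly clashing with Hermiticity and the observable relation $A^2=\Id$.
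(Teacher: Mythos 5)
Your proof is correct, and it reaches the paper's contradiction by a genuinely different route. The paper proves the claim directly: it writes the bias of the symmetrized strategy as $\frac{1}{10}\sum_{i<j}\bra{\Psi}(A_{ij}-A_{ij}^\dagger)\otimes(A_{ij}-A_{ij}^\dagger)\ket{\Psi}$ and re-derives the $3/5$ bound through a bespoke chain of four inequalities (Cauchy--Schwarz on vectors built from $\ket{\Psi}$, the bound $(K-K^\dagger)(K-K^\dagger)^\dagger\le 2(KK^\dagger+K^\dagger K)$, and the diagonal blocks of $AA^\dagger\le\Id$), then extracts the rigidity conditions $A_{ii}=0$, $A_{ij}=-A_{ij}^\dagger$, $A_{ij}A_{ij}^\dagger=\Id/2$ from simultaneous tightness. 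You instead push the strategy through the strategy-to-SDP map $(\vec{A}_R,\vec{A}_C)$ from the discussion preceding Definition~\ref{def:bias-os} and saturate the already-proven chain in Claim~\ref{claim:hi-os}; the single tight step $\sum_i|\Tr(XC_i)|^2\le 2\Tr(XX^\dagger)$ then forces antisymmetry of every ancilla-block $\widetilde{A}_{kl}$, which (using full support of $\rho$, supplied by Claim~\ref{claim:hi-symmetry}) is exactly $A_{ii}=0$ and $A_{ji}=-A_{ij}$. A nice feature of your version is that you do not need the tightness of the remaining steps at all: you recover $A_{ij}^2=-\Id/2$ purely from the diagonal blocks of $A^2=\Id$, whereas the paper gets the analogous condition from tightness of its inequality (c). The final contradiction via the off-diagonal block $A_{13}A_{32}=0$ and invertibility of $A_{13}$ is the same in both arguments. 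The only caveat worth recording is the shared implicit normalization that the quantity inside the absolute value equals $+3/5$ rather than $-3/5$; also note that the paper deliberately structures its direct chain so that it can be made quantitative in Lemma~\ref{lem:hi-sdp}, a robustification for which your SDP detour would be less convenient.
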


\begin{proof} 
Using Claim~\ref{claim:hi-symmetry} we may assume without loss of generality that $A=B$ and $\ket{\Psi}$ is a permutation-invariant state whose reduced density $\rho$ on either player has full support. For $i,j\in\{1,2,3\}$ let
\begin{equation}\label{eq:hi-aij-def}
 A_{ij} \,:=\, (\bra{i}\otimes \Id_{\HA})\, A\, (\ket{j}\otimes \Id_{\HA})\,\in\,\lin{\HA}.
\end{equation}
With this notation, and using that $A_{ij} = A_{ji}^\dagger$ since $A$ is Hermitian,
we can write the bias achieved by the strategy as
\begin{align}
\frac{1}{10} \sum_{i< j} \bra{\Psi} (A_{ij}-A_{ij}^\dagger)\otimes (A_{ij}-A_{ij}^\dagger) \ket{\Psi} 
&\stackrel{(a)}{\leq} \frac{1}{10} \sum_{i<j} \Tr\big((A_{ij}-A_{ij}^\dagger)(A_{ij}-A_{ij}^\dagger)^\dagger\rho\big)\notag \\
&\stackrel{(b)}{\leq} \frac{1}{10} \sum_{i<j} 2\,\Tr\big( (A_{ij}A_{ij}^\dagger + A_{ij}^\dagger A_{ij})\rho\big) \notag \\
& = \frac{1}{5} \,\sum_{i\neq j} \Tr\big( A_{ij}A_{ij}^\dagger\rho\big), \notag \\
&\stackrel{(c)}{\leq} \frac{3}{5}-\frac{1}{5} \,\sum_{i} \Tr\big( A_{ii}A_{ii}^\dagger\rho\big)
\stackrel{(d)}{\leq} \frac{3}{5}.\label{eq:hi-sdp-exact-2}
\end{align}
In (a) we apply the Cauchy-Schwarz inequality to the inner product between the vectors 
\begin{equation}\label{eq:hi-vectors}
 ((A_{ij}-A_{ij}^\dagger) \otimes \Id)^\dagger \ket{\Psi}\qquad\text{and}\qquad (\Id\otimes (A_{ij}-A_{ij}^\dagger)) \ket{\Psi},
\end{equation}
and we use that $\ket{\Psi}$ has the same reduced density $\rho$ on both subsystems. To obtain (b) observe that the difference between the two sides of the inequality is $\sum_{i<j}\Tr((A_{ij}+A_{ij}^\dagger)^2 \rho)/10$ which is clearly non-negative. The equality follows since $A$ is Hermitian. For (c), notice that the diagonal blocks of $AA^\dagger$ are given by $A_{ii}A_{ii}^\dagger + A_{ij}A_{ij}^\dagger + A_{ik}A_{ik}^\dagger$ for $i,j,k\in\{1,2,3\}$ all distinct, and since $AA^\dagger \leq \Id$, they must be at most $\Id_{\HA}$. 

To complete the proof, assume towards contradiction that all inequalities above are simultaneously tight. 
Then we have
\beq\label{eq:hi-sdp-exact-1}
\forall i\neq j\in\{1,2,3\},\qquad A_{ii}=0,\qquad A_{ij} = -A_{ij}^\dagger,\quad\text{and}\quad A_{ij}A_{ij}^\dagger = \Id/2,
\eeq
the first following from the tightness of (d), the second from the tightness of (b), and the third from the tightness of (c),
where in all three cases we also use the assumption that $\rho$ has full support. Now observe that there does not exist an $A$ that is both Hermitian and squares to identity, and satisfies these three conditions. Indeed, by considering the off-diagonal blocks in the equality $A^2 = \Id$ one obtains the equation $A_{ij}A_{ik}^\dagger = 0$ for $i,j,k\in\{1,2,3\}$ all distinct, which is easily seen to be incompatible with $A_{ij}A_{ij}^\dagger = \Id/2$ for every $i\neq j$ (since, say, the product of two nonsingular matrices is nonsingular). 
\end{proof}

\begin{lemma}\label{lem:hi-sdp}
There exists a $\delta>0$ such that $\bias^*(\HI) \leq 3/5-\delta$.
\end{lemma}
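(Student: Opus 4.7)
The plan is to give a quantitative refinement of the argument in Claim~\ref{claim:hi-sdp-exact}. By Claim~\ref{claim:hi-symmetry}, it suffices to consider strategies of the form $(A,A,\ket{\Psi})$ with $A$ an observable and $\ket{\Psi}$ permutation-invariant with reduced density $\rho$ of full support. The crucial new ingredient beyond the proof of Claim~\ref{claim:hi-sdp-exact} is that since $A$ is an observable we have the exact operator identities $AA^\dagger = A^\dagger A = \Id$, so inequality (c) in that proof is automatically an equality. Suppose the strategy achieves bias at least $3/5 - \varepsilon$ for some small $\varepsilon > 0$; tracking the remaining slacks in (a), (b), and (d) then yields the quantitative near-conditions $\Tr(A_{ii}A_{ii}^\dagger\rho) \leq 5\varepsilon$ for each $i$, $\Tr((A_{ij}+A_{ij}^\dagger)^2\rho) \leq 10\varepsilon$ for each $i<j$, together with a near-tightness condition for the Cauchy--Schwarz step (a) for each $i<j$.

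Combining these approximate conditions with the exact diagonal-block identities coming from $AA^\dagger = A^\dagger A = \Id$, a short linear-algebra argument---mirroring the derivation of $A_{ij}A_{ij}^\dagger = \Id/2$ at the end of the proof of Claim~\ref{claim:hi-sdp-exact}---shows that $A_{ij}A_{ij}^\dagger$ is within $O(\sqrt\varepsilon)$ of $\Id/2$ in a suitable $\rho$-weighted norm.

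The contradiction then arises from the exact identity $A^2 = \Id$. Restricted to the $(i,k)$ off-diagonal block (with $\{i,j,k\} = \{1,2,3\}$) it reads
\[ A_{ij}A_{jk} \,=\, -A_{ii}A_{ik} - A_{ik}A_{kk}. \]
Using $\|A\|_\infty \leq 1$ together with the smallness of $A_{ii}$ and $A_{kk}$, each of the two terms on the right-hand side has $\rho$-weighted norm at most $O(\sqrt\varepsilon)$, hence so does $A_{ij}A_{jk}$. On the other hand, using that $A_{ij}A_{ij}^\dagger$ and $A_{jk}A_{jk}^\dagger$ are both close to $\Id/2$ in the $\rho$-weighted sense, combined with the near-tightness of Cauchy--Schwarz---which, via the permutation-invariance of $\ket{\Psi}$, effectively allows one to interchange left and right actions of operators on $\ket{\Psi}$---we will derive a \emph{lower} bound of $1/4 - O(\sqrt\varepsilon)$ on the same quantity. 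For $\varepsilon$ smaller than some absolute positive constant $\delta$ these two bounds are incompatible, and the lemma follows.

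The main technical obstacle is setting up a single $\rho$-weighted norm in which both the upper and lower bounds on $A_{ij}A_{jk}$ hold simultaneously, with matching $O(\sqrt\varepsilon)$ error terms. The natural candidates $\Tr(MM^\dagger \rho)$ and $\Tr(M^\dagger M\rho)$ differ in how they interact with left versus right multiplication, and the permutation-invariance of $\ket{\Psi}$ only partially bridges them; careful bookkeeping of the errors as inequalities are chained is therefore required to make the argument close.
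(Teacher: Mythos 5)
Your overall architecture --- symmetrize via Claim~\ref{claim:hi-symmetry}, extract quantitative near-tightness conditions from the slack in the chain (a)--(d), and contradict an exact algebraic identity --- matches the paper's, and your observation that near-tight Cauchy--Schwarz plus permutation invariance lets you move operators between the two tensor factors is exactly the paper's relation~\eqref{eq:hi-sdp-commutate}. However, the two steps your contradiction rests on have genuine gaps. First, the claim that $A_{ij}A_{ij}^\dagger$ is within $O(\sqrt{\eps})$ of $\Id/2$ in a ``$\rho$-weighted norm'' by a ``short linear-algebra argument'' is not correct as stated: in a Frobenius-type weighted norm such as $\nr{A_{ij}A_{ij}^\dagger-\Id/2}$ it is false in general, since the available constraints ($P+Q=\Id$ on each block-row, $\Tr(P\rho)\approx\Tr(Q\rho)\approx 1/2$) are satisfied by complementary projections, which are far from $\Id/2$. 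What one can prove is only the weak statement $|\Tr((A_{ij}A_{ij}^\dagger-\Id/2)M\rho)|=O(\sqrt{\eps})\|M\|_\infty$ for bounded test operators $M$, and even that requires combining the three diagonal-block identities of $AA^\dagger=\Id$ with $A_{ij}A_{ij}^\dagger\approx A_{ij}^\dagger A_{ij}$, which in turn needs the approximate anti-Hermiticity \emph{and} the transfer relation --- not a mirroring of the exact-case algebra, where anti-Hermiticity holds as an operator identity. Second, and more seriously, the lower bound $1/4-O(\sqrt{\eps})$ on $\bra{\Psi}(A_{jk}A_{jk}^\dagger)\otimes(A_{ij}^\dagger A_{ij})\ket{\Psi}$ does not follow from closeness of each factor to $\Id/2$: a bipartite correlator $\bra{\Psi}P\otimes Q\ket{\Psi}$ with $\Tr(P\rho)=\Tr(Q\rho)=1/2$ can equal $0$ (complementary projections on a maximally entangled state). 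To close this you must first rewrite the correlator as $\Tr(PQ\rho)+O(\sqrt{\eps})$ using the transfer relations and only then apply the weak closeness with $M=Q$; that reduction is the crux, not bookkeeping. The same left-versus-right issue breaks your upper bound: $\nr{A_{ik}A_{kk}}^2=\Tr(A_{kk}^2\,A_{ik}^\dagger\rho A_{ik})$ is not controlled by $\nr{A_{kk}}$ because the small factor sits on the right of an operator that does not preserve $\rho$, so that term also requires a transfer step rather than the direct estimate you invoke.

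For comparison, the paper's proof deliberately avoids any individual estimate of the form $A_{ij}A_{ij}^\dagger\approx\Id/2$: it lower-bounds the \emph{sum} $\sum_{i\neq j}\nr{A_{ik}A_{jk}^\dagger}^2$ by pairing terms so that, after transferring operators across the tensor product, the completeness relation~\eqref{eq:hi-sdp-1b} can be applied to the pair, and it then contradicts the diagonal blocks of $(AA^\dagger)^2\leq\Id$ rather than the off-diagonal blocks of $A^2=\Id$ that you use. Your route is plausibly salvageable, but only by proving (i) the weak closeness statement against bounded test operators and (ii) the reduction of bipartite correlators to single-system traces; as written, the proposal asserts the conclusions of these steps without the machinery that makes them true, so the contradiction does not yet close.
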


\pnote{One might hope to prove a $6-\delta$ upper bound on the right hand side of (a), i.e., without using the tightness of (a); this seems impossible since the right hand side of (a) \emph{can} be $6$ when $\rho$ is not full rank. Take e.g. 
$\rho = \ket{1}\bra{1}$ in a four dim space, $A_{12} = (0 1 0 0, -1 0 0 0, 0 0 0 0, 0 0 0 0)/ \sqrt{2}$, 
$A_{13} = (0 0 1 0, 0 0 0 0, -1 0 0 0, 0 0 0 0)/ \sqrt{2}$, $A_{23} = (0 0 0 1, 0 0 0 0, 0 0 0 0, -1 0 0 0)/ \sqrt{2}$}
We note that, while one could in principle extract a numerical estimate for $\delta$ from our proof, we have not attempted to do so, and in any case do not expect such an estimate to be tight. We also computed numerically the third level of a simple semidefinite hierarchy tightening the relaxation $\biasos$ along the lines of the general method given in~\cite{DLTW08}. This produced the upper bound $0.578$, but since we have not verified this bound carefully, it should be taken with a pinch of salt.  Finally, we remark that some basic numerical optimizations we performed failed to identify a strategy achieving bias higher than what is obtained in Claim~\ref{clm:himaxentlowerbound}, i.e., $5/9\approx 0.556$.

\begin{proof}
Let $(A,B,\ket{\Psi})$ be a strategy for the players in $\HI$. By Claim~\ref{claim:hi-symmetry} we may assume without loss of generality that $A=B$ and $\ket{\Psi}$ is a permutation-invariant state with $\rho$ its reduced density matrix.
The proof of the lemma is based on the following claim, which derives a quantitative version of the relations~\eqref{eq:hi-sdp-exact-1} that were used in the proof of Claim~\ref{claim:hi-sdp-exact}. 
It will be useful to introduce the notation $\nr{W} := \bra{\Psi}W W^\dagger\ket{\Psi}^{1/2}$ where $W \in M_d(\C) \otimes M_d(\C)$, $d$ being the dimension of the private space $\HA=\HB$ of each player. It is easy to see that $\nr{\cdot}$ is a semi-norm since it is derived from the semi-inner product $(W,Z) \mapsto \bra{\Psi}Z W^\dagger\ket{\Psi}$. For an $X \in M_d(\C)$ we often abuse notation and write $\nr{X}$ instead of $\nr{\Id \otimes X} = \nr{X \otimes \Id} = \Tr(XX^\dagger \rho)^{1/2}$.

\begin{claim}\label{claim:hi-sdp-1} Suppose that the strategy $(A,A,\ket{\Psi})$ achieves a bias at least $3/5-\delta$ in $\HI$. Then the following relations hold for all $i\neq j \neq k\in \{1,2,3\}$: 
\begin{align}
\nr{A_{ii}}^2 & \,=\, O(\delta),\label{eq:hi-sdp-1a}\\
\nr{A_{ij}+A_{ij}^\dagger}^2 &\,=\, O(\delta),\label{eq:hi-sdp-1c}\\
\sum_{i \neq j} \nr{A_{ij}}^2 &\ge 3-O(\delta), \label{eq:hi-sdp-goal} \\
\nr{\Id - A_{ii}A_{ii}^\dagger - A_{ij}A_{ij}^\dagger - A_{ik}A_{ik}^\dagger}^2&\,=\,O(\delta),\label{eq:hi-sdp-1b} \\
\nr{A_{ij} \otimes \Id - \Id \otimes A_{ij}^\dagger}^2 &\,=\,O(\delta),\label{eq:hi-sdp-commutate} 
\end{align}
where the $A_{ij}$ are as defined in~\eqref{eq:hi-aij-def}. 
\end{claim}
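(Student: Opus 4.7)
In the setting of the claim, where $A=B$ and $\ket{\Psi}$ is permutation-invariant with full-support reduced density $\rho$, the plan is to extract each of the five bounds by tracking the slack at each step of the chain of inequalities (a)--(d) used in the proof of Claim~\ref{claim:hi-sdp-exact}. That chain begins at the bias, which by hypothesis is $\geq 3/5-\delta$, and ends at the upper bound $3/5$, so the total slack across the four steps is at most $\delta$. Because each individual slack is manifestly non-negative, each is therefore $O(\delta)$, and each of the five relations can be read off from one of these slacks.

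The bounds arising from steps (b), (c), and (d) are essentially immediate. The slack at (d) equals $\tfrac{1}{5}\sum_i\nr{A_{ii}}^2$, yielding~\eqref{eq:hi-sdp-1a}. The slack at (c) equals $\tfrac{1}{5}\sum_i\Tr((\Id-D_i)\rho)$, where $D_i:=A_{ii}A_{ii}^\dagger+A_{ij}A_{ij}^\dagger+A_{ik}A_{ik}^\dagger$ is the $i$th diagonal block of $AA^\dagger$; since $0\leq \Id-D_i\leq \Id$, the operator inequality $(\Id-D_i)^2\leq \Id-D_i$ upgrades this to $\nr{\Id-D_i}^2\leq\Tr((\Id-D_i)\rho)=O(\delta)$, which is~\eqref{eq:hi-sdp-1b}. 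The slack at (b) equals $\tfrac{1}{10}\sum_{i<j}\nr{A_{ij}+A_{ij}^\dagger}^2$ (because $2(XX^\dagger+YY^\dagger)-(X-Y)(X-Y)^\dagger=(X+Y)(X+Y)^\dagger$, applied with $X=A_{ij}$, $Y=A_{ij}^\dagger$), giving~\eqref{eq:hi-sdp-1c}. The lower bound~\eqref{eq:hi-sdp-goal} then follows by combining the hypothesis bias $\geq 3/5-\delta$ with the chain bound bias $\leq \tfrac{1}{5}\sum_{i\neq j}\nr{A_{ij}}^2$ that is already in place after step (b).

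The step that requires additional care is~\eqref{eq:hi-sdp-commutate}. The Cauchy--Schwarz at step (a) is applied to the vectors $u_{ij}:=(X_{ij}^\dagger\otimes\Id)\ket{\Psi}$ and $v_{ij}:=(\Id\otimes X_{ij})\ket{\Psi}$, with $X_{ij}:=A_{ij}-A_{ij}^\dagger$. Because $X_{ij}$ is anti-Hermitian, $\|u_{ij}\|=\|v_{ij}\|=\nr{X_{ij}}$ and $\langle u_{ij},v_{ij}\rangle=\bra{\Psi}X_{ij}\otimes X_{ij}\ket{\Psi}$ is real, so the slack at (a) equals $\tfrac{1}{2}\|u_{ij}-v_{ij}\|^2=\tfrac{1}{2}\nr{X_{ij}\otimes\Id+\Id\otimes X_{ij}}^2$; in particular this last quantity is $O(\delta)$. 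Setting $W:=A_{ij}\otimes\Id-\Id\otimes A_{ij}^\dagger$, a direct calculation gives $W-W^\dagger=X_{ij}\otimes\Id+\Id\otimes X_{ij}$, so $\nr{W-W^\dagger}^2=O(\delta)$ from the slack at (a), and $W+W^\dagger=(A_{ij}+A_{ij}^\dagger)\otimes\Id-\Id\otimes(A_{ij}+A_{ij}^\dagger)$, so $\nr{W+W^\dagger}^2=O(\delta)$ from~\eqref{eq:hi-sdp-1c} and the triangle inequality (using $\nr{X\otimes\Id}=\nr{\Id\otimes X}=\nr{X}$ by permutation invariance). Writing $W=\tfrac{1}{2}(W+W^\dagger)+\tfrac{1}{2}(W-W^\dagger)$ and using $\nr{A+B}^2\leq 2\nr{A}^2+2\nr{B}^2$ then yields $\nr{W}^2=O(\delta)$, establishing~\eqref{eq:hi-sdp-commutate}. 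The main subtlety is that the slack at (a) directly controls $W-W^\dagger$ rather than $W$ itself, and recovering $\nr{W}$ requires invoking the already-established bound~\eqref{eq:hi-sdp-1c}.
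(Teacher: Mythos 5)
Your proof is correct and takes essentially the same route as the paper's: both arguments read each of the five relations off the non-negative slack of the corresponding step in the chain (a)--(d) from the proof of Claim~\ref{claim:hi-sdp-exact}. Your derivation of~\eqref{eq:hi-sdp-commutate} via the Hermitian/anti-Hermitian decomposition of $W=A_{ij}\otimes\Id-\Id\otimes A_{ij}^\dagger$ is just an explicit rendering of the paper's ``two applications of the triangle inequality together with~\eqref{eq:hi-sdp-1c}''.
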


\begin{proof}
Since, by assumption, the first expression in~\eqref{eq:hi-sdp-exact-2} is at least $3/5-\delta$, all inequalities (a)--(d) should be tight up to an additive $\delta$. Eq.~\eqref{eq:hi-sdp-1a} follows immediately from the tightness of (d), \eqref{eq:hi-sdp-1c} follows from that of~(b), and~\eqref{eq:hi-sdp-goal} follows from that of the sequence~(c) and~(d). To show~\eqref{eq:hi-sdp-1b}, we use $0\leq \Id  - A_{ii}A_{ii}^\dagger - A_{ij}A_{ij}^\dagger- A_{ik}A_{ik}^\dagger \leq \Id$, where non-negativity follows by looking at the diagonal blocks in the inequality $AA^\dagger \leq \Id$, to bound
$$
 \nr{\Id  - A_{ii}A_{ii}^\dagger - A_{ij}A_{ij}^\dagger - A_{ik}A_{ik}^\dagger}^2 \,\leq\, \Tr\big( (\Id - A_{ii}A_{ii}^\dagger- A_{ij}A_{ij}^\dagger - A_{ik}A_{ik}^\dagger)\rho\big) \,=\, O(\delta),
$$
where the equality follows from the fact that the inequality (c) in~\eqref{eq:hi-sdp-exact-2} is tight up to $\delta$. 

To prove~\eqref{eq:hi-sdp-commutate}, observe that if $u$ and $v$ are two vectors of the same norm, and whose inner product is a non-negative real number that is at least $\|u\|^2-\eps$, then $\|u-v\| \le \sqrt{2\eps}$. Recall now that inequality (a) in~\eqref{eq:hi-sdp-exact-2} follows by applying, for each $i<j$, the Cauchy-Schwarz inequality to the two vectors in~\eqref{eq:hi-vectors}, which are of the same norm and whose inner product is a non-negative real number. It therefore follows from the tightness up to $\delta$ of this inequality that for all $i<j$,
$$ \nr{(A_{ij}-A_{ij}^\dagger) \otimes \Id + \Id\otimes (A_{ij}-A_{ij}^\dagger)} = O(\sqrt{\delta}).$$
Eq.~\eqref{eq:hi-sdp-commutate} now follows from two applications of the triangle inequality together with~\eqref{eq:hi-sdp-1c}.
The case $i>j$ follows since $A_{ij} = A_{ji}^\dagger$ and $\ket{\Psi}$ is permutation-invariant.
\end{proof}

Let $B_{ij}$ be the blocks of $AA^\dagger$; for all $i\neq j\neq k\in\{1,2,3\}$ we have
$$ B_{ii} = A_{ii}A_{ii}^\dagger + A_{ij}A_{ij}^\dagger + A_{ik}A_{ik}^\dagger\qquad\text{and}\qquad B_{ij}= A_{ii}A_{ji}^\dagger+A_{ij}A_{jj}^\dagger+A_{ik}A_{jk}^\dagger.$$
We will use the following estimates.

\begin{claim} Suppose that the strategy $(A,A,\ket{\Psi})$ achieves bias at least $3/5-\delta$ in $\HI$. Then the following relations hold:
\begin{align}
 \sum_i \nr{B_{ii} }^2&\geq 3-O\big(\sqrt{\delta}\big),\label{eq:hi-sdp-3a}\\
\sum_{i\neq j}  \nr{B_{ij} }^2&\geq \frac{3}{2} - O\big(\delta^{1/4}\big)\label{eq:hi-sdp-3b}.
\end{align}
\end{claim}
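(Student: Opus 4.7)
The plan is to first extract, from the bounds in Claim~\ref{claim:hi-sdp-1}, a quantitative version of the tight-case identities $A_{ij}A_{ij}^\dagger = A_{ij}^\dagger A_{ij} = \Id/2$ (for $i\neq j$), and then estimate each $\nr{B_{ij}}^2$ pair-by-pair. For~\eqref{eq:hi-sdp-3a} the argument is immediate: I would apply the reverse triangle inequality to~\eqref{eq:hi-sdp-1b}, giving $\nr{B_{ii}} \ge \nr{\Id} - \nr{\Id - B_{ii}} = 1 - O(\sqrt{\delta})$, so $\nr{B_{ii}}^2 \ge 1 - O(\sqrt{\delta})$; summing over $i$ yields the bound.

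For~\eqref{eq:hi-sdp-3b}, the first step is to establish, for all $i\neq j$, that $\nr{Q_{ij} - \Id/2} = \nr{P_{ij} - \Id/2} = O(\sqrt{\delta})$, where $Q_{ij} := A_{ij}A_{ij}^\dagger$ and $P_{ij} := A_{ij}^\dagger A_{ij}$. Starting from~\eqref{eq:hi-sdp-1b}, the estimate $\nr{A_{ii}^2} \le \|A_{ii}\|_\infty \nr{A_{ii}} = O(\sqrt{\delta})$ (using~\eqref{eq:hi-sdp-1a}) reduces it to $\nr{Q_{ij} + Q_{ik} - \Id} = O(\sqrt{\delta})$, one such relation for each $i$ with $\{i,j,k\}=\{1,2,3\}$. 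The identity $P_{ij} - Q_{ij} = [A_{ij}+A_{ij}^\dagger,\, A_{ij}]$, together with~\eqref{eq:hi-sdp-1c} and one application of the swap relation~\eqref{eq:hi-sdp-commutate} to bound $\nr{A_{ij}(A_{ij}+A_{ij}^\dagger)}$, then gives $\nr{P_{ij} - Q_{ij}} = O(\sqrt{\delta})$. Since $P_{ij} = Q_{ji}$ by the Hermiticity of $A$, the three relations over $i\in\{1,2,3\}$ form a simple linear system in $Q_{12}, Q_{13}, Q_{23}$ whose unique solution up to $O(\sqrt{\delta})$ in $\nr{\cdot}$ is $Q_{ij} = \Id/2$.

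I would then decompose $B_{ij} = A_{ik}A_{jk}^\dagger + R_{ij}$ with $R_{ij} = A_{ii}A_{ji}^\dagger + A_{ij}A_{jj}^\dagger$, and verify $\nr{R_{ij}} = O(\sqrt{\delta})$: the first summand by the inequality $\nr{XY}\le\|Y\|_\infty\nr{X}$ combined with~\eqref{eq:hi-sdp-1a}; the second by one application of~\eqref{eq:hi-sdp-commutate} to move $A_{ij}^\dagger$ to Bob's side, after which the Hermitian $A_{jj}$ acts on $\ket{\Psi}$ directly and is controlled by~\eqref{eq:hi-sdp-1a}. Applying~\eqref{eq:hi-sdp-commutate} once more to swap $A_{ik}^\dagger$ yields $\nr{A_{ik}A_{jk}^\dagger}^2 = \bra{\Psi}P_{jk}\otimes P_{ik}\ket{\Psi} + O(\sqrt{\delta})$, and the bounds on $\nr{P_{jk} - \Id/2}$ and $\nr{P_{ik} - \Id/2}$, combined with Cauchy--Schwarz for the semi-inner product on $\ket{\Psi}$, show that the right-hand side equals $1/4 + O(\sqrt{\delta})$. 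Thus $\nr{B_{ij}}^2 \ge 1/4 - O(\sqrt{\delta})$ for each $i\neq j$, and summing over the six ordered pairs gives $\sum_{i\neq j}\nr{B_{ij}}^2 \ge 3/2 - O(\sqrt{\delta})$, which is stronger than the stated $3/2 - O(\delta^{1/4})$.

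The hardest step will be controlling products of operators under the semi-norm $\nr{\cdot}$, which is asymmetric ($\nr{X}\neq\nr{X^\dagger}$ in general): whenever the smallness of a Hermitian block ($A_{ii}$, $A_{jj}$, or $A_{ij}+A_{ij}^\dagger$) needs to propagate through a non-Hermitian neighbour, we must invoke~\eqref{eq:hi-sdp-commutate} to move the non-Hermitian factor across the tensor product so that the small Hermitian operator can act directly on $\ket{\Psi}$, where its semi-norm is by definition small.
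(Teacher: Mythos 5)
Your proof of~\eqref{eq:hi-sdp-3a} is identical to the paper's. For~\eqref{eq:hi-sdp-3b} you take a genuinely different route: you first derive the per-block structure $\nr{A_{ij}A_{ij}^\dagger - \Id/2} \approx 0$ by combining the three relations $\nr{A_{ij}A_{ij}^\dagger + A_{ik}A_{ik}^\dagger - \Id}\approx 0$ with $A_{ij}^\dagger A_{ij}=A_{ji}A_{ji}^\dagger$ and the commutator identity, and then lower-bound each $\nr{B_{ij}}^2$ by $1/4$ separately. The paper instead never isolates individual blocks: it pairs the terms $\nr{A_{13}A_{23}^\dagger}^2+\nr{A_{12}A_{32}^\dagger}^2\approx\nr{A_{32}}^2$ and appeals only to the aggregate bound~\eqref{eq:hi-sdp-goal}. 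Your version is more structural and yields the stronger per-pair conclusion; the paper's pairing avoids the need to prove $A_{ij}^\dagger A_{ij}\approx A_{ij}A_{ij}^\dagger$. Both are valid, and your identity $P_{ij}-Q_{ij}=[A_{ij}+A_{ij}^\dagger,A_{ij}]$ is correct.

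However, your claimed strengthening of the error term to $O(\sqrt{\delta})$ does not follow. The issue is exactly the ``hardest step'' you flag: each use of~\eqref{eq:hi-sdp-commutate} inside an expression of the form $\bra{\Psi}\cdot\ket{\Psi}$ produces an \emph{additive} error of $O(\sqrt{\delta})$ in the \emph{value} of that expression. When the expression is a squared semi-norm, e.g.\ $\nr{A_{ij}(A_{ij}+A_{ij}^\dagger)}^2$ or $\nr{A_{ij}A_{jj}^\dagger}^2$, you only conclude that the square is $O(\sqrt{\delta})$, hence the semi-norm itself is $O(\delta^{1/4})$ — there is no way to do better with the available relations, since~\eqref{eq:hi-sdp-commutate} controls $\|(A_{ij}^\dagger\otimes\Id-\Id\otimes A_{ij})\ket{\Psi}\|$ only at scale $\sqrt{\delta}$. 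Consequently $\nr{P_{ij}-Q_{ij}}=O(\delta^{1/4})$, hence $\nr{Q_{ij}-\Id/2}=O(\delta^{1/4})$ after solving your linear system, and $\nr{R_{ij}}=O(\delta^{1/4})$ as well. Propagating these through your final assembly gives $\nr{B_{ij}}^2\ge 1/4-O(\delta^{1/4})$ and $\sum_{i\neq j}\nr{B_{ij}}^2\ge 3/2-O(\delta^{1/4})$ — exactly the stated bound, and the same loss the paper incurs in~\eqref{eq:hi-sdp-4}. So the proof of the claim as stated is sound once the exponents are tracked correctly, but the asserted improvement to $3/2-O(\sqrt{\delta})$ should be withdrawn.
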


\begin{proof}
We will repeatedly use the following easy fact: if $X,Y,Y' \in M_d(\C) \otimes M_d(\C)$ are such $\|X\|_\infty$ is bounded by some universal constant, then 
\begin{align}
|\bra{\Psi} YX \ket{\Psi} - \bra{\Psi} Y'X \ket{\Psi}| &=
|\bra{\Psi} (Y-Y')X \ket{\Psi}| \notag\\
&\le \bra{\Psi} (Y-Y')(Y-Y')^\dagger \ket{\Psi}^{1/2}\cdot  \bra{\Psi} X^\dagger X \ket{\Psi}^{1/2} \notag\\
& \le O(\nr{Y-Y'}),\label{eq:hi-bounded}
\end{align}
where the first inequality is Cauchy-Schwarz. An analogous inequality holds with $Y$ and $Y'$ appearing to the right of $X$.

To prove~\eqref{eq:hi-sdp-3a}, use the triangle inequality and~\eqref{eq:hi-sdp-1b},
$$
\nrb{B_{ii} } =  \nrb{\Id - (\Id - B_{ii}) } \ge 1 - \nrb{\Id - B_{ii}} \ge 1-O(\sqrt{\delta}).
$$
To prove~\eqref{eq:hi-sdp-3b}, first fix some $i\neq j \in \{1,2,3\}$ and use the triangle inequality to obtain
\begin{align}
\nrb{B_{ij}} &\geq \nrb{A_{ik}A_{jk}^\dagger} - \nrb{A_{ii}A_{ji}^\dagger} - \nrb{A_{ij}A_{jj}^\dagger},\label{eq:hi-sdp-4}
\end{align}
where $k$ is the unique index in $\{1,2,3\}$ different from $i$ and $j$. Using $A_{ji}^\dagger A_{ji}\leq \Id$, the second term in~\eqref{eq:hi-sdp-4} can be bounded by $O(\sqrt{\delta})$ using~\eqref{eq:hi-sdp-1a},
$$\nrb{A_{ii}A_{ji}^\dagger} \,=\, \Tr\big( A_{ii}A_{ji}^\dagger A_{ji}A_{ii}^\dagger\rho\big)^{1/2} \,\leq\, \nr{A_{ii}} = O\big(\sqrt{\delta}\big).$$
The third term can be bounded as
\begin{align*}
\nrb{A_{ij}A_{jj}^\dagger}^2 &= \bra{\Psi} A_{ij}A_{jj}^\dagger A_{jj} A_{ij}^\dagger \otimes \Id \ket{\Psi} \\
&\approx \bra{\Psi} A_{jj}^\dagger A_{jj} A_{ij}^\dagger \otimes A_{ij}^\dagger \ket{\Psi} \approx 0 
\end{align*}
where both approximate equalities are up to an additive $O(\sqrt{\delta})$ and follow from~\eqref{eq:hi-bounded}:
in the first we replace $A_{ij} \otimes \Id$ with $\Id \otimes A_{ij}^\dagger$, using~\eqref{eq:hi-sdp-commutate} to bound the error,
and in the second we replace $A_{jj}^\dagger=A_{jj}$ with $0$, this time using~\eqref{eq:hi-sdp-1a} to bound the error.

To complete the proof, we will now show that
$$\sum_{i \neq j} \nrb{A_{ik}A_{jk}^\dagger}^2 \ge \frac{3}{2} - O(\delta^{1/2}), $$
where $k \in \{1,2,3\}$ is the unique index different from $i$ and $j$. 
Consider the two terms corresponding to $(i,j)=(1,2)$ and $(i,j)=(1,3)$. For the former, we write
\begin{align*}
\nrb{A_{13}A_{23}^\dagger}^2 &= \bra{\Psi} A_{13}A_{23}^\dagger A_{23} A_{13}^\dagger \otimes \Id \ket{\Psi} \\
&\approx \bra{\Psi} A_{23}^\dagger A_{23} A_{13}^\dagger \otimes A_{13}^\dagger \ket{\Psi} \\
&\approx \bra{\Psi} A_{23}^\dagger A_{23} \otimes A_{13}^\dagger A_{13} \ket{\Psi},
\end{align*}
where as before the approximate equalities are up to an additive $O(\sqrt{\delta})$ and follow
from~\eqref{eq:hi-bounded} and~\eqref{eq:hi-sdp-commutate}.
For the latter we follow the same sequence, but add two extra steps at the end, 
\begin{align*}
\nrb{A_{12}A_{32}^\dagger}^2 
&\approx \phantom{-}\bra{\Psi} A_{32}^\dagger A_{32} \otimes A_{12}^\dagger A_{12} \ket{\Psi} \\
&\approx - \bra{\Psi} A_{23}^\dagger A_{32} \otimes A_{12}^\dagger A_{12}\ket{\Psi} \\
&\approx \phantom{-}\bra{\Psi} A_{23}^\dagger A_{23} \otimes A_{12}^\dagger A_{12}\ket{\Psi},
\end{align*}
where the last two approximate equalities are up to an additive $O(\sqrt{\delta})$ and follow
from~\eqref{eq:hi-bounded} and~\eqref{eq:hi-sdp-1c}.
Summing the two terms, and using~\eqref{eq:hi-sdp-1b} and~\eqref{eq:hi-sdp-1a}, we get
\begin{align*}
\nrb{A_{13}A_{23}^\dagger}^2 + \nr{A_{12}A_{32}^\dagger}^2 
&\approx \bra{\Psi}A_{23}^\dagger A_{23} \otimes (A_{13}^\dagger A_{13}+A_{12}^\dagger A_{12})  \ket{\Psi}\\
&\approx \bra{\Psi} A_{23}^\dagger A_{23} \otimes \Id \ket{\Psi} \\
&= \nr{A_{32}}^2.
\end{align*}
Repeating the same proof for the other two pairs of terms, summing the results, and noticing that $\nr{A_{ij}} \approx \nr{A_{ji}}$ due
to~\eqref{eq:hi-sdp-1c}, we get that
\[
\sum_{i \neq j} \nrb{A_{ik}A_{jk}^\dagger}^2 \ge \frac{1}{2} \sum_{i \neq j}\nr{A_{ij}}^2 - O(\sqrt{\delta}) \ge \frac{3}{2} - O(\sqrt{\delta}), 
\]
where the last inequality uses~\eqref{eq:hi-sdp-goal}. This complete the proof. 
\end{proof}

To conclude the proof of the lemma, we claim that~\eqref{eq:hi-sdp-3a} and~\eqref{eq:hi-sdp-3b} are incompatible with the condition $(AA^\dagger)^2 \leq \Id$. To see why, let $D_{ii}$ be the diagonal blocks of $(AA^\dagger)^2$ and note that together both inequalities imply that
$$ \sum_i \Tr\big(D_{ii} \rho\big) \,\geq\, 3 + \frac{3}{2} - O\big(\delta^{1/4}\big).$$
But $(AA^\dagger)^2 \leq \Id$ implies that 
$$ \sum_i \Tr\big(D_{ii}\rho\big) \,\leq\, 3,$$
which gives a contradiction for small enough $\delta$. 
\end{proof}

\subsubsection{The games \texorpdfstring{$(\HI_n)$}{Hn}}\label{sec:hi-general}

In this section we follow~\cite{HI95} in introducing a family of games $\{\HI_n\}_{n\geq 1}$ which generalizes the game $\HI=\HI_1$ from the previous section. The main motivation for studying this family is that it satisfies $\lim_{n\to\infty} \biasnc(\HI_n)/\bias^\C(\HI_n) = 2$, which as shown in Theorem~\ref{thm:unentangled-bias} is as strong a gap as possible between these two quantities.\footnote{In this case it also holds that $\bias(\HI_n)=\bias^\C(\HI_n)$, and we do not know if there exist games for which $\biasnc/\bias>2$.} 

For any integer $n\geq 1$, $\HI_n$ is a quantum XOR game of size $N=\binom{2n+1}{n}$. To describe $\HI_n$,\footnote{See~\cite[Section 4]{HI95} for an alternative definition.} it will be convenient to index the canonical basis of $\C^N$ by subsets $S\subseteq [2n+1]$ of cardinality $n$. Let $i\in[2n+1]$, and for every $S\subseteq [2n+1]$, $|S|=n$, such that $i\notin S$ let $\eps(i,S)$ be the sign of the permutation of $[2n+1]$ defined (using the standard one-line notation) as $(S,\, i,\,\overline{S\cup \{i\}})$, where both $S$ and $\overline{S\cup i}$ are ordered. Define a linear map $c_i$ from $\C^N$ to itself by
$$ c_i :\, e_S \,\mapsto\, \begin{cases} \eps(i,S)\, e_{\overline{S\cup\{i\}}} &\text{ if $i\notin S$,}\\ 0 & \text{ otherwise.}\end{cases}$$
Let $C_i$ be the matrix of $c_i$ in the basis $\{e_S\}$, and define
\begin{equation*}
M_n \,:=\,\binom{4n+1}{2n}^{-1}\, \sum_{i=1}^{2n+1} C_i \otimes C_i.
\end{equation*}
It is not hard to check that $M_n$ is Hermitian (in fact, $C_i$ is real symmetric or anti-symmetric, depending on the parity of $n$). Moreover, the normalization factor is chosen so as to ensure $\|M_n\|_1=1$~\cite[Lemma~3.2]{HI95}.
Let $\HI_n$ be the quantum XOR game whose associated matrix is $M(\HI_n)=M_n$. One can check that the game $\HI_1$ corresponds to the game $\HI$ from the previous section. Haagerup and Itoh showed the following (see also~\cite[Section~11]{PisierGT} for the first part of the lemma). 

\begin{lemma}[\cite{HI95}]\label{lem:hi-general} For every integer $n\geq 1$ it holds that
$$ \bias(\HI_n) \,=\,\bias^\C(\HI_n)\,=\, \Big(\frac{n+1}{2n+1}\Big)^2 \binom{2n+1}{n}^2\binom{4n+1}{2n}^{-1}\quad\text{and}\quad \frac{\biasnc(\HI_n)}{\bias^\C(\HI_n)} \,=\, \frac{2n+1}{n+1}.$$
Moreover, 
$$ \frac{\biasme(\HI_n)}{\bias^\C(\HI_n)} \,\geq\, \frac{(2n+1)^2}{(n+1)^3} \binom{4n+1}{2n}^2\binom{2n+1}{n}^{-4}\,\to_{n\to \infty}\, \frac{\pi}{2}.$$
\end{lemma}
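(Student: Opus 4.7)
The plan is to reduce each claim to a few algebraic identities for the matrices $\{C_i\}_{i=1}^{2n+1}$, all verifiable by direct basis computation: $(a)$ $C_i^\dagger = (-1)^n C_i$, coming from the sign of the block-swap permutation $(T,i,T')\mapsto(T',i,T)$; $(b)$ $\sum_i C_i^\dagger C_i = \sum_i C_i C_i^\dagger = (n+1)\Id_N$, since $C_i^\dagger C_i$ projects onto $\linspan\{e_S:i\notin S\}$ and each $|S|=n$ omits exactly $n+1$ indices from $[2n+1]$; $(c)$ the rescaled family $\{C_i/\sqrt{\binom{2n}{n}}\}$ is Hilbert-Schmidt orthonormal in $M_N$, i.e.\ $\Tr(C_i^\dagger C_j)=\binom{2n}{n}\delta_{ij}$; and $(d)$ $\Tr(C_i^2)=(-1)^n\binom{2n}{n}$, consistent with $(a)$ and $(c)$. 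Also, $\sum_i C_i\otimes C_i = \binom{4n+1}{2n}M_n$ by definition of $M_n$.

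The value of $\biasnc(\HI_n)$ is then computed by mimicking the proof of Claim~\ref{claim:hi-os}. The lower bound uses the explicit feasible solution $\vec{X}=\vec{Y}=\frac{1}{\sqrt{n+1}}\sum_i e_i\otimes C_i \in \vecm{2n+1}{\C^N}$: by $(b)$ it satisfies $\vec{X}\vec{X}^\dagger=\vec{X}^\dagger\vec{X}=\Id$, and by $(d)$ the objective evaluates to $\frac{2n+1}{n+1}\binom{2n}{n}^2/\binom{4n+1}{2n}$. For the matching upper bound, Cauchy-Schwarz as in Claim~\ref{claim:hi-os} reduces the problem to bounding $\sum_{i,j}|\Tr(X_j C_i)|^2$; Bessel's inequality against the orthonormal family in $(c)$ gives $\sum_i|\Tr(X_j C_i)|^2 \leq \binom{2n}{n}\Tr(X_j^\dagger X_j)$, and summing subject to $\Tr(\vec{X}^\dagger\vec{X})\leq N=\binom{2n+1}{n}$ produces exactly the matching value.

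For $\bias^\C(\HI_n)$, whose stated value equals $\binom{2n}{n}^2/\binom{4n+1}{2n}$ after using $\binom{2n+1}{n}=\frac{2n+1}{n+1}\binom{2n}{n}$, the lower bound is immediate: pick any index $k$ and choose $A=B=C_k$ if $n$ is even, or $A=iC_k$, $B=-iC_k$ if $n$ is odd; in both cases $(a)$ makes this a pair of Hermitian observables of operator norm $1$, and $(c)$--$(d)$ yield $\Tr((A\otimes B)M_n)=\binom{4n+1}{2n}^{-1}\Tr(C_k^2)^2=\binom{2n}{n}^2/\binom{4n+1}{2n}$; since the witness is Hermitian this also establishes $\bias(\HI_n)=\bias^\C(\HI_n)$. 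The upper bound is the main obstacle: Cauchy-Schwarz reduces it to the sharper inequality $\sum_i|\Tr(AC_i)|^2\leq \binom{2n}{n}^2$ for $\|A\|_\infty\leq 1$, or by duality to the Schatten-$1$ estimate $\|\sum_i \alpha_i C_i\|_1 \leq \binom{2n}{n}\|\alpha\|_2$ for every $\alpha\in\C^{2n+1}$. Bessel alone gives only the weaker bound $\binom{2n}{n}\binom{2n+1}{n}$, larger by exactly the factor $(2n+1)/(n+1)=\biasnc/\bias^\C$, so the sharpening must genuinely exploit the operator norm constraint. Following~\cite{HI95} I would show that $LL^\dagger:=(\sum_i\alpha_i C_i)(\sum_j\bar\alpha_j C_j^\dagger)$ has rank $\binom{2n}{n}$ with all nonzero eigenvalues equal to $\|\alpha\|_2^2$, equivalently that $(LL^\dagger)^2=\|\alpha\|_2^2\,LL^\dagger$; this identity is verified by expanding the fourfold product and applying $(a)$--$(d)$ systematically, and it immediately yields $\|\sum_i\alpha_i C_i\|_1=\binom{2n}{n}\|\alpha\|_2$.

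Finally, for the lower bound on $\biasme(\HI_n)$ I would generalize Claim~\ref{clm:himaxentlowerbound}: both players share $\ket{\Psi_N^{me}}$ and apply the two-outcome projective measurement corresponding to the positive and negative spectral projections of $M_n$. The resulting bias is determined by the eigenvalue multiplicities of $M_n$ acting on $\C^N\otimes \C^N$, and the irreducible decomposition of the joint $C_i$-action, analyzed in detail in~\cite[Sec.~3]{HI95}, gives these multiplicities in closed form, producing the claimed ratio $(2n+1)^2/(n+1)^3\cdot\binom{4n+1}{2n}^2/\binom{2n+1}{n}^4$. A routine Stirling estimate on this expression then yields the asymptotic $\pi/2$. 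The main obstacle throughout is the sharp Schatten-$1$ bound used for $\bias^\C$: without the algebraic input of $(a)$--$(d)$ combined at higher order, one cannot beat Bessel by the factor $(n+1)/(2n+1)$, and this is precisely the origin of the $\biasnc/\bias^\C$ gap.
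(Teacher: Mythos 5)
The first thing to note is that the paper contains no proof of this lemma: it is quoted from Haagerup and Itoh~\cite{HI95} (see also~\cite[Section~11]{PisierGT}), the only original contribution being the remark that $\bias(\HI_n)=\bias^\C(\HI_n)$ follows because the optimal witness for the lower bound is Hermitian. Judged as a reconstruction of the cited proof, your proposal is largely sound and generalizes the $n=1$ arguments of Section~\ref{sec:hi-1} in the right way. Your identities (a)--(d) are all correct (the block-swap sign computation does give $C_i^\dagger=(-1)^nC_i$, and $C_iC_i^\dagger=C_i^\dagger C_i$ is the projection onto $\linspan\{e_S:\,i\notin S\}$). The feasible point $\vec{X}=\vec{Y}=(n+1)^{-1/2}\sum_ie_i\otimes C_i$ together with the Bessel-plus-dimension-count upper bound does pin down $\biasnc(\HI_n)=\tfrac{2n+1}{n+1}\binom{2n}{n}^2\binom{4n+1}{2n}^{-1}$ exactly, the Hermitian witness $C_k$ or $\pm iC_k$ gives the matching lower bound for $\bias(\HI_n)=\bias^\C(\HI_n)$, and your closed forms specialize correctly at $n=1$ to $2/5$, $3/5$ and $25/18\cdot 2/5=5/9$, consistent with Claims in Section~\ref{sec:hi-1}.

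The one genuine gap is in the step you yourself identify as the crux. The identity $(LL^\dagger)^2=\|\alpha\|_2^2\,LL^\dagger$ for $L=\sum_i\alpha_iC_i$ is indeed true, but it cannot be obtained by ``applying (a)--(d) systematically'': those are sum and trace identities, while the fourfold product $\sum\alpha_i\bar\alpha_j\alpha_k\bar\alpha_l\,C_iC_j^\dagger C_kC_l^\dagger$ requires operator-level product relations among the $C_i$. What is actually needed is the CAR structure underlying the construction: each $C_i$ is a creation operator $a_i^*$ on $\Lambda^n(\C^{2n+1})$ followed by a Hodge-type signed unitary $\Lambda^{n+1}\to\Lambda^n$, so that $LL^\dagger$ is unitarily equivalent to $a(\alpha)^*a(\alpha)$ on $\Lambda^{n+1}$; then $a(\alpha)^2=0$ and $a(\alpha)a(\alpha)^*+a(\alpha)^*a(\alpha)=\|\alpha\|_2^2\,\Id$ yield the projection identity, and the rank $\binom{2n}{n}$ follows from your (c) by taking traces. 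This is exactly the input from~\cite{HI95} that your sketch elides, and without it the factor $(n+1)/(2n+1)$ improvement over Bessel is unproved. Similarly, your $\biasme$ lower bound is only an outline: the spectral-projection strategy generalizing Claim~\ref{clm:himaxentlowerbound} is the right one, but evaluating its bias requires the full eigenvalue decomposition (values and multiplicities) of $\sum_iC_i\otimes C_i$, which is again representation-theoretic content of~\cite{HI95} and not a consequence of (a)--(d).
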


Actually,~\cite{HI95} do not consider the (real) bias $\bias(\HI_n)$; the fact that $\bias(\HI_n) =\bias^\C(\HI_n)$ mentioned above is an easy observation that follows from their lower bound on $\bias^\C(\HI_n)$. The relevance of the lower bound on $\biasme(\HI_n)/\bias^\C(\HI_n)$ is that it is strictly larger than the complex Grothendieck constant $K_G^\C \leq 1.405$, which bounds the same ratio from above for all \emph{diagonal} matrices $M$, as described in Theorem~\ref{thm:class-bias}.



\section{Grothendieck inequalities}\label{sec:grothendieck}

In this section we introduce Grothendieck's original ``commutative'' inequality and two of its more recent generalizations, explaining how they lead to the key inequalities in Theorem~\ref{thm:class-bias}, Theorem~\ref{thm:unentangled-bias} and Theorem~\ref{thm:entangled-bias} respectively. 
\subsection{The commutative Grothendieck inequality}\label{sec:com-gt}

Grothendieck proved his famous inequality in~\cite{Gro53} motivated by the study of norms on the tensor product of two Banach spaces. Many equivalent formulations of the inequality exist, and we refer to Sections~2 and~3 of~\cite{PisierGT} for a comprehensive survey. 
Here we restrict our attention to the finite-dimensional case. 

In its most concrete form,\footnote{Grothendieck's inequality was first reformulated in this way by Lindenstrauss and Pe{\l}czy\'{n}ski~\cite{Linden68}.} the inequality says that there exists a universal constant $K_G^\R$ such that, for any integer $n$ and any real $R = (R_{s,t})_{s,t\in[n]}$,
\begin{equation}\label{eq:commgroth}
\sup_{\substack{d,\,x_s,y_t\in \R^d}}\, \Big|\sum_{s,t} R_{st} \,\langle x_s,y_t \rangle\Big|\,\leq\, K_G^\R \,\max_{x_s,y_t\in [-1,1]}  \Big|\sum_{s,t} R_{st}\, x_sy_t\Big|,
\end{equation}
where the supremum on the left-hand side is taken over all vectors satisfying
$$\max\Big\{ \max_s \|x_s\|^2,\, \max_t \|y_t\|^2 \Big\}\, \leq\, 1.$$
Recalling~\eqref{eq:class-bias} and the discussion following it, we see that the supremum in the right-hand side of~\eqref{eq:commgroth} is  $\bias(R)$.
Also, the left-hand side is $\biascom(R)$ as defined in~\eqref{eq:class-bias-sdp}. The fact that the supremum in~\eqref{eq:commgroth} is taken over vectors in $\R^d$ and not $\C^d$ does not change the supremum as already mentioned after~\eqref{eq:class-bias-sdp}.
Therefore, the inequality is equivalent to the bound $\biascom(R)\leq K_G^\R\bias(R)$, for any real $R$, claimed in Theorem~\ref{thm:class-bias}.

We now give an equivalent formulation of Grothendieck's inequality. In order to emphasize the connection to the non-commutative generalizations of Grothendieck's inequality, we follow the notation introduced in Section~\ref{sec:quantum-xor}, even if it is somewhat artificial in the present context. 
Recall the notation introduced above Definition~\ref{def:bias-nc}.
We associate with a classical XOR game with coefficients $R = (R_{s,t})_{s,t\in[n]}$ a diagonal matrix $M$ in $M_n(\C) \otimes M_n(\C)$. With these conventions, Grothendieck's inequality~\eqref{eq:commgroth} is easily seen to be equivalent to the following. 

\begin{theorem}[\cite{Gro53}]\label{thm:gt-commutative} There exists a universal constant $K_G^\R$ such that, for any integer $n$ and real diagonal $n^2\times n^2$ matrix $M$, 
\begin{align*}
\sup_{d,\,\vec{X},\vec{Y}\in \matv{d}{\C^n}} \big| \Tr\big( (\vec{X}\odot \vec{Y})\,M\big) \big| \,\leq\, 
   K_G^\R \sup_{\substack{X,Y\in \herm{\C^n}, \\ \|X\|_\infty,\|Y\|_\infty\leq 1}} \big| \Tr\big( ( X\otimes Y)\,M\big)\big|,
\end{align*}
where the supremum on the left-hand side is taken over all $d \ge 1$ and vector-valued matrices $\vec{X},\vec{Y}\in \matv{d}{\C^n}$ such that
\begin{align*}
\max\Big\{ \big\| \vec{X}\vec{X}^\dagger \big\|_{\infty},\, \big\|\vec{Y}\vec{Y}^\dagger \big\|_{\infty} \Big\} \,\leq\, 1.
\end{align*}
\end{theorem}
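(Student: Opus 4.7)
The plan is to show that this statement is a straightforward reformulation of Grothendieck's inequality~\eqref{eq:commgroth} once the matrix notation is unpacked. Since $M$ is real and diagonal on $\C^n\otimes\C^n$, write $M = \sum_{s,t\in[n]} R_{s,t}\, \ketbra{s}{s}\otimes \ketbra{t}{t}$ for some real coefficients $R = (R_{s,t})$. The key observation is that any bilinear form of the type appearing in the theorem depends only on the diagonal entries of the test operators.

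For the right-hand side, a Hermitian $X$ with $\|X\|_\infty \le 1$ has real diagonal entries in $[-1,1]$, and any such choice of entries is realized by the corresponding diagonal Hermitian matrix. Therefore the RHS equals $\max_{x_s, y_t \in [-1,1]} \bigl|\sum_{s,t} R_{s,t} x_s y_t\bigr| = \bias(R)$ in the notation of~\eqref{eq:class-bias}.

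For the left-hand side, given $\vec{X} = (X_1,\ldots,X_d)$ and $\vec{Y} = (Y_1,\ldots,Y_d)$, introduce the vectors $x_s := ((X_r)_{s,s})_{r=1}^d \in \C^d$ and analogously $y_t \in \C^d$. A short computation (expanding $\vec{X}\odot\vec{Y} = \sum_r X_r\otimes Y_r$ against the diagonal $M$) yields $\Tr((\vec{X}\odot\vec{Y})M) = \sum_{s,t} R_{s,t}\,\langle \overline{x_s}, y_t\rangle$. The constraint $\vec{X}\vec{X}^\dagger = \sum_r X_r X_r^\dagger \le \Id_n$ forces the diagonal entry $\|x_s\|^2 = \sum_r |(X_r)_{s,s}|^2$ to be at most $1$ for every $s$, and similarly for $\vec{Y}$. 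So the LHS is bounded from above by $\biascom(R)$ as defined in~\eqref{eq:class-bias-sdp}. Conversely, starting from unit vectors $\{x_s, y_t\}\subseteq \C^d$, the diagonal matrices $X_r := \sum_s x_{s,r}\ketbra{s}{s}$ and $Y_r := \sum_t y_{t,r}\ketbra{t}{t}$ satisfy $\sum_r X_r X_r^\dagger = \mathrm{diag}(\|x_s\|^2)_s \le \Id_n$ (and similarly for $\vec Y$) and reproduce the desired value, so the LHS in fact equals $\biascom(R)$.

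With these two identifications the theorem becomes $\biascom(R) \le K_G^\R \bias(R)$, which is exactly Grothendieck's inequality~\eqref{eq:commgroth} (using the footnote after~\eqref{eq:class-bias-sdp}: for real $R$ the sup over complex norm-one vectors matches the sup over real ones, by decomposing a complex vector into its real and imaginary parts at the cost of doubling the dimension, and then absorbing a global phase). There is no real obstacle beyond carefully tracking the definitions of $\odot$, $\vec{X}\vec{X}^\dagger$, and verifying that on a diagonal $M$ the operator-norm constraints reduce precisely to the Euclidean-norm constraints on the diagonal vectors $x_s$ and $y_t$; the only mild subtlety is the complex-versus-real issue, but it is purely notational.
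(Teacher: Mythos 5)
Your proposal is correct and follows essentially the same route as the paper, which likewise observes that for diagonal $M$ only the diagonal entries of $X,Y$ (and of the $X_r,Y_r$) contribute, identifies the two sides with $\biascom(R)$ and $\bias(R)$ respectively, and handles the complex-versus-real issue via the footnote after~\eqref{eq:class-bias-sdp}. Your write-up just fills in these identifications a bit more explicitly than the paper's one-paragraph justification.
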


To see that the theorem is equivalent to the formulation~\eqref{eq:commgroth}, the main thing to observe is that since $M$ is diagonal, the suprema above can be equivalently restricted to diagonal $\vec{X}$, $\vec{Y}$ (resp.\ $X$, $Y$). As argued before the theorem, the fact that $\vec{X},\vec{Y}$ may have complex vectors on their diagonal does not change the supremum on the left-hand side. Moreover, the constraint that $X,Y$ are Hermitian of norm at most $1$ implies that their diagonal entries are real numbers in $[-1,1]$. 

The above discussion relates to what is known as the \emph{real} Grothendieck inequality. 
If we relax the supremum on the right-hand side to allow all $X,Y\in M_n(\C)$ with norm at most $1$, and extend the inequality to all complex diagonal $M$, we obtain what is known as the \emph{complex} Grothendieck inequality, which holds with the better constant $K_G^\C < K_G^\R$. This inequality corresponds to the second statement in Theorem~\ref{thm:class-bias}. The non-commutative Grothendieck inequalities described below are in this complex setting. 

\subsection{The non-commutative Grothendieck inequality}\label{sec:nc-gt}

The non-commutative Grothendieck inequality generalizes Theorem~\ref{thm:gt-commutative} to the non-commutative setting by replacing the space $\ell_\infty$ by an arbitrary $C^*$-algebra. Originally conjectured by Grothendieck, it was first proved by Pisier~\cite{Pisier78NCGT} for $C^*$-algebras satisfying a certain ``approximability'' assumption, and in the general case by Haagerup~\cite{Haagerup85NCGT}. Here we restrict our attention to the finite-dimensional
case of $M_n(\C)$.

\begin{theorem}[\cite{Pisier78NCGT,Haagerup85NCGT}]\label{thm:nc-gt} Let $n$ be an integer and $M\in M_n(\C)\otimes M_n(\C)$. Then
\begin{align}\label{eq:nc-gt-p-2}
\sup_{d,\,\vec{X},\vec{Y}\in\matv{d}{\C^n} }  \big|\Tr\big( (\vec{X}\odot \vec{Y})\, M\big) \big|  \,\leq\, 2 \,\sup_{\substack{X, Y\in M_n(\C),\\ \|X\|_{\infty}\leq 1,\,\|Y\|_{\infty}\leq 1}} \big| \Tr\big( (X\otimes Y)\, M\big)\big|,
\end{align}
where the supremum on the left-hand side is taken over all $d \ge 1$ and vector-valued matrices $\vec{X},\vec{Y}\in\matv{d}{\C^n}$ such that
\begin{align}\label{eq:nc-gt-p-cons-2}
\max\Big\{\, \big\|\vec{X} \vec{X}^\dagger \big\|_{\infty}+ \big\|  \vec{X}^\dagger \vec{X} \big\|_{\infty},\,\big\|\vec{Y} \vec{Y}^\dagger \big\|_{\infty} + \big\|\vec{Y}^\dagger \vec{Y} \big\|_{\infty} \,\Big\}\,\leq\, 2.
\end{align}
\end{theorem}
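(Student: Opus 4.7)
The plan is to recast the theorem as the non-commutative Grothendieck inequality for bilinear forms on $M_n(\C)$. View $M$ as defining a bilinear form $\phi: M_n(\C) \times M_n(\C) \to \C$ via $\phi(X,Y) := \Tr((X\otimes Y) M)$; the right-hand side of~\eqref{eq:nc-gt-p-2} is then exactly $2\|\phi\|$, where $\|\phi\|$ denotes the norm of $\phi$ as a bilinear form with respect to the operator norm on each factor. On the left, unfolding the definitions gives $\vec{X}\vec{X}^\dagger = \sum_r X_r X_r^\dagger$ and $\vec{X}^\dagger\vec{X} = \sum_r X_r^\dagger X_r$ (and similarly for $\vec{Y}$), together with $\Tr((\vec{X}\odot\vec{Y})M) = \sum_r \phi(X_r,Y_r)$. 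By homogeneity, the theorem is therefore equivalent to the Haagerup-type bound
\begin{align*}
\Big|\sum_r \phi(X_r,Y_r)\Big| \,\leq\, \|\phi\| \cdot & \Big(\big\|{\textstyle\sum_r} X_r X_r^\dagger\big\|_\infty + \big\|{\textstyle\sum_r} X_r^\dagger X_r\big\|_\infty\Big)^{1/2} \\
&\cdot \Big(\big\|{\textstyle\sum_r} Y_r Y_r^\dagger\big\|_\infty + \big\|{\textstyle\sum_r} Y_r^\dagger Y_r\big\|_\infty\Big)^{1/2}
\end{align*}
for arbitrary matrices $X_r, Y_r \in M_n(\C)$.

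The main ingredient in establishing this is a factorization lemma due to Haagerup: for any bilinear form $\phi$ on $M_n(\C) \times M_n(\C)$ with $\|\phi\| \leq 1$, there exist states $f_1, f_2, g_1, g_2$ on $M_n(\C)$ such that for all $X, Y \in M_n(\C)$,
$$|\phi(X,Y)| \,\leq\, \big(f_1(XX^\dagger) + f_2(X^\dagger X)\big)^{1/2}\, \big(g_1(YY^\dagger) + g_2(Y^\dagger Y)\big)^{1/2}.$$
Once this lemma is in hand, the displayed bound follows by applying the triangle inequality, then a term-by-term Cauchy-Schwarz, and finally using the linearity of the states to pull the sums inside: for instance, $\sum_r f_1(X_r X_r^\dagger) = f_1\big({\textstyle\sum_r} X_r X_r^\dagger\big) \leq \|{\textstyle\sum_r} X_r X_r^\dagger\|_\infty$ since $f_1$ has norm one as a state. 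Assembling the four contributions under the two square roots yields the desired inequality.

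The heart of the proof, and its principal obstacle, is establishing the factorization lemma itself. The standard route proceeds in two steps. First, one shows that for each fixed pair $(X,Y)$ \emph{some} quadruple of states $(f_1, f_2, g_1, g_2)$ realizes the decomposition; this is done via a Stinespring-style factorization of $\phi$ through a Hilbert space, together with the two canonical \emph{row} and \emph{column} inclusions of $M_n(\C)$ into the representing algebra, from which the four states arise as vector states attached to appropriate cyclic vectors. Second, one invokes a Hahn-Banach / minimax argument (for example Ky Fan's theorem) over the compact convex set of quadruples of states on $M_n(\C)$ to extract a single quadruple that works simultaneously for every $(X,Y)$. Haagerup's original argument in~\cite{Haagerup85NCGT} implements this program via a delicate symmetrization that is needed in order to obtain the sharp constant~$2$; in our finite-dimensional setting some of the topological subtleties disappear, but the operator-algebraic core of the argument remains, and we refer the reader to~\cite{PisierGT} for a modern and self-contained treatment.
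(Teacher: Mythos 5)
Your proposal is correct and follows essentially the same route as the paper: both reduce the stated form to the bilinear-form version of the non-commutative Grothendieck inequality by identifying the right-hand side with $2\|\varphi\|$ and using homogeneity/rescaling to trade the ``max $\leq 2$'' constraint for the product bound, and both defer the hard operator-algebraic core to Pisier and Haagerup. Your extra step of deriving the summed inequality from Haagerup's state-factorization form via Cauchy--Schwarz is a correct unpacking of exactly the statement (Eq.~(7.2) of the survey) that the paper cites directly.
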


Let $M$ be the Hermitian matrix associated with a given quantum XOR game $G$. Since the constraint~\eqref{eq:nc-gt-p-cons-2} is less restrictive than~\eqref{eq:nc-gt-pcons}, the left-hand side of~\eqref{eq:nc-gt-p-2} is at least as large as the bias $\biasnc(M)$. Moreover, the right-hand side is exactly twice the complex bias $\bias^\C(M)$. Hence the inequality $\biasnc(G)\leq 2\bias^\C(G)$ claimed in Theorem~\ref{thm:unentangled-bias} is a direct corollary of Theorem~\ref{thm:nc-gt}.

We formulated Theorem~\ref{thm:nc-gt} in a slightly different way than it appears in the literature on the subject. For the benefit of the interested reader, we briefly explain why our statement is equivalent to (the finite dimensional version of) the one that appears in Theorem~7.1, Eq.~(7.2) of the survey~\cite{PisierGT}. 
The first thing to observe is that relaxing the constraint~\eqref{eq:nc-gt-p-cons-2} by taking the geometric average of the two terms instead of their maximum
results in an equivalent statement. The reason is that one can replace $\vec{X}$ and $\vec{Y}$ by $s\vec{X}$ and $s^{-1}\vec{Y}$ for some $s>0$ so that the two terms are equal and this does not affect the left-hand side of~\eqref{eq:nc-gt-p-2}. Once this modification is done, the only remaining thing to observe is that the matrix $M \in M_n(\C)\otimes M_n(\C)$ can be equivalently thought of as the bilinear form $\varphi: M_n(\C) \times M_n(\C) \to \C$
defined by $\varphi(E_{i,j},E_{k,l}) = M_{(j,l),(i,k)}$ where $E_{i,j}=\ket{i}\bra{j}$ is the canonical basis of $M_n(\C)$. With this identification, we have that for all $X,Y$, $\varphi(X,Y)=\Tr\big( (X\otimes Y)\, M\big)$. Hence, by definition, the right-hand side of~\eqref{eq:nc-gt-p-2} is $\|\varphi\|$, the norm of the bilinear form $\varphi$, and now the theorem is easily seen to be equivalent to the one in~\cite{PisierGT}.

\subsection{The operator space Grothendieck inequality}\label{sec:os-gt}

More recent work has lead to a different generalization of Grothendieck's inequality. This generalization originates in the study of operator spaces, and focuses on the so-called \emph{(jointly) completely bounded} norm of a bilinear form, as opposed to the \emph{norm} that appears in the non-commutative Grothendieck inequality above. Two variants of this ``operator space Grothendieck inequality" are known, one by Pisier and Shlyakhtenko~\cite{PS02OSGT} (which applies to ``exact" operator spaces) and another by Haagerup and Musat~\cite{HM08} (for not necessarily exact $C^*$-algebras). Here we state the special case of $M_n(\C)$, which is all that is needed for our purposes.

\begin{theorem}[\cite{PS02OSGT,HM08}]\label{thm:os-gt} Let $n$ be an integer, and $M\in M_n(\C)\otimes M_n(\C)$. Then 
\begin{align}\label{eq:os-gt-p-2}
\sup_{d,\,\vec{X}_R,\vec{X}_C,\vec{Y}_R,\vec{Y}_C\in\matv{d}{\C^n}}  \big|\Tr\big( (\vec{X}_R\odot \vec{Y}_C)\, M\big) \big|  \,\leq\, 2 \, \sup_{\substack{d,\,X,Y\in M_{nd}(\C),\\ \|X\|_\infty,\|Y\|_\infty \leq 1}} \big\|\Tr_{\C^n\otimes \C^n}\big(  (X\otimes Y)\,(M\otimes \Id_{\C^d\otimes \C^d}) \big)\big\|_\infty,
\end{align}
where the supremum on the left-hand side is taken over all $d\ge 1$ and vector-valued matrices $\vec{X}_R,\vec{X}_C,\vec{Y}_R,\vec{Y}_C\in \matv{d}{\C^n}$ such that $\vec{X}_R\odot \vec{Y}_C = \vec{X}_C\odot \vec{Y}_R$ and 
\begin{equation}\label{eq:os-gt-p-cons-2}
 \big\|\vec{X}_R \vec{X}_R^\dagger \big\|_{\infty}^{1/2}\, \big\|  \vec{Y}_C^\dagger \vec{Y}_C \big\|_{\infty}^{1/2}+\big\|\vec{X}_C^\dagger \vec{X}_C \big\|_{\infty}^{1/2} \,\big\|\vec{Y}_R \vec{Y}_R^\dagger \big\|_{\infty}^{1/2} \,\leq\, 2.\footnote{An elementary manipulation shows that replacing this constraint by $\max\big\{ \|\vec{X}_R \vec{X}_R^\dagger \|_{\infty} + \|\vec{X}_C^\dagger \vec{X}_C \|_{\infty}, \|\vec{Y}_R \vec{Y}_R^\dagger \|_{\infty} + \|\vec{Y}_C^\dagger \vec{Y}_C \|_{\infty}\big\}\leq 2$ does not change the value of the supremum on the left-hand side of~\eqref{eq:os-gt-p-2}.}
\end{equation}
\end{theorem}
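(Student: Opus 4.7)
My plan is to deduce Theorem~\ref{thm:os-gt} from the operator space Grothendieck inequality as established by Pisier-Shlyakhtenko~\cite{PS02OSGT} and Haagerup-Musat~\cite{HM08}, by translating each side of~\eqref{eq:os-gt-p-2} into the language of that classical statement. The approach will mirror the short algebraic translation already used to link Theorem~\ref{thm:nc-gt} to the non-commutative Grothendieck inequality. Throughout, the bridge is to identify the tensor $M \in M_n(\C) \otimes M_n(\C)$ with the bilinear form $\varphi: M_n(\C) \times M_n(\C) \to \C$ defined by $\varphi(X,Y) := \Tr((X \otimes Y) M)$, so that the content of the proof is identifying two standard operator-space norms of $\varphi$ with the two sides of~\eqref{eq:os-gt-p-2}.

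First, I would verify that the right-hand side of~\eqref{eq:os-gt-p-2} is exactly twice the \emph{jointly completely bounded norm} $\|\varphi\|_{jcb}$. By definition this norm is $\sup_d \sup_{\|X\|,\|Y\|\le 1} \|\varphi_d(X,Y)\|_\infty$ where the amplification $\varphi_d : M_{nd}(\C)\times M_{nd}(\C) \to M_{d^2}(\C)$ applies $\varphi$ to the $M_n(\C)$ tensor factors of each argument while leaving the $M_d(\C)$ factors untouched. A short calculation shows that $\varphi_d(X,Y) = \Tr_{\C^n \otimes \C^n}((X \otimes Y)(M \otimes \Id_{\C^d \otimes \C^d}))$, which is precisely the expression inside the supremum on the right-hand side.

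Second, I would identify the left-hand side with a factorization norm of $\varphi$. After rescaling $\vec{X}_R, \vec{Y}_C$ by a scalar $s > 0$ and $\vec{X}_C, \vec{Y}_R$ by $s^{-1}$ (which preserves both the objective and the consistency relation $\vec{X}_R \odot \vec{Y}_C = \vec{X}_C \odot \vec{Y}_R$), the geometric-mean constraint~\eqref{eq:os-gt-p-cons-2} becomes the symmetric bound $\max\big(\|\vec{X}_R \vec{X}_R^\dagger\|_\infty, \|\vec{X}_C^\dagger \vec{X}_C\|_\infty, \|\vec{Y}_R \vec{Y}_R^\dagger\|_\infty, \|\vec{Y}_C^\dagger \vec{Y}_C\|_\infty\big) \le 1$. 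In this form, $\vec{X}_R$ and $\vec{X}_C$ encode the row- and column-Hilbert-module representations of the first slot of $\varphi$, the quantities $\|\vec{X}_R\vec{X}_R^\dagger\|_\infty^{1/2}$ and $\|\vec{X}_C^\dagger\vec{X}_C\|_\infty^{1/2}$ are precisely the row and column norms of these factorizations, and the intertwining relation $\vec{X}_R \odot \vec{Y}_C = \vec{X}_C \odot \vec{Y}_R$ is the consistency requirement that both pairings yield the same bilinear form on $M_n(\C) \times M_n(\C)$. The resulting quantity matches the factorization norm whose bound by $2\|\varphi\|_{jcb}$ is the content of the Pisier-Shlyakhtenko/Haagerup-Musat inequality (see e.g.\ Section~18 of~\cite{PisierGT} for the finite-dimensional formulation).

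The main obstacle in this plan is the bookkeeping in step two: one must carefully unpack the definitions of ``row'' and ``column'' Hilbert-module-valued matrices in the operator space literature and match them with our compact vector-valued-matrix notation $\vec{X}_R \vec{X}_R^\dagger$, etc. In particular, a potentially confusing point is the asymmetry in the definition — the objective uses $\vec{X}_R \odot \vec{Y}_C$ rather than (say) $\vec{X}_R \odot \vec{Y}_R$ — but this asymmetry is exactly what allows the factorization to express the jointly CB rather than CB norm, and is enforced by having separate row-type and column-type constraints on each of $\vec{X}$ and $\vec{Y}$. Once the identification of both sides is verified, no new analytic argument is required: the inequality follows from~\cite{PS02OSGT,HM08} essentially by quoting the theorem, and the footnote's alternative constraint formulation is then an elementary scaling consequence.
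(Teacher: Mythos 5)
Your framing agrees with the paper's up to a point: the identification of the right-hand side of~\eqref{eq:os-gt-p-2} with twice the jointly completely bounded norm of the bilinear form $\varphi(X,Y)=\Tr((X\otimes Y)M)$ is exactly what the paper does (via Proposition~13.9 of~\cite{PisierGT}), and the rescaling $\vec{X}_R,\vec{Y}_C\mapsto s\vec{X}_R,s\vec{Y}_C$, $\vec{X}_C,\vec{Y}_R\mapsto s^{-1}\vec{X}_C,s^{-1}\vec{Y}_R$ to normalize~\eqref{eq:os-gt-p-cons-2} is the elementary manipulation alluded to in the footnote.

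There is, however, a genuine gap in your second step, and it is precisely where the paper has to do real work. The formulations of the operator space Grothendieck inequality in the literature (item~(ii) of Proposition~18.2 of~\cite{PisierGT}, equivalently Theorem~1.1 of~\cite{HM08}) bound a supremum over tuples $(\vec{X}_R,\vec{X}_C,\vec{Y}_R,\vec{Y}_C)$ in which, for every coordinate $i$, the pair $((\vec{X}_R)_i,(\vec{Y}_R)_i)$ is non-negatively proportional to $((\vec{X}_C)_i,(\vec{Y}_C)_i)$ --- that is, the four vector-valued matrices arise from a single pair by diagonal rescalings. That set is a subset of the feasible set of~\eqref{eq:os-gt-p-2}, which imposes only the aggregate identity $\vec{X}_R\odot\vec{Y}_C=\vec{X}_C\odot\vec{Y}_R$. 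Your assertion that the left-hand side ``matches the factorization norm'' therefore only yields the bound over the smaller set; a priori the supremum in~\eqref{eq:os-gt-p-2} could be strictly larger, and nothing in your argument rules this out. The paper closes exactly this gap with Fact~\ref{fact:gsvd} and Claim~\ref{clm:gsvd}: using the generalized singular value decomposition, any tuple satisfying the $\odot$-consistency constraint is transformed by isometries --- which change neither the objective nor~\eqref{eq:os-gt-p-cons-2} --- into one satisfying coordinatewise proportionality. Without this step (or an equivalent substitute, such as reproducing the argument behind the equivalence of items (i) and (ii) in Proposition~18.2 of~\cite{PisierGT}), the deduction from~\cite{PS02OSGT,HM08} is incomplete.
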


One can easily verify that the left-hand side of~\eqref{eq:os-gt-p-2} is always at least as large as $\biasos(G)$, as the constraint~\eqref{eq:os-gt-p-cons-2} is less restrictive than~\eqref{eq:os-gt-pcons}.\footnote{We note that the quantity $\biasos$ coincides with what is known in operator space theory as the symmetrized Haagerup norm on $M_n \otimes M_n$; see~\cite[Chapter 5]{pisierbook}.\pnote{We have the sequence of inequalities $cb \le \gamma_{R\oplus C} \le \omega^{os}=\mu = \ell_\infty sdp \le usual \ell_2 sdp \le 2 cb$. See Pisier's book 5.23 for the second (easy) inequality.}\pnote{Can our tighter NCGT relaxation be seen as some kind of factorization norm?}} Moreover, if $M$ is the Hermitian matrix associated with a given quantum XOR game $G$, then the supremum on the right-hand side is exactly the entangled bias $\bias^*(G)$. Therefore, the inequality $\biasos(G)\leq 2\bias^*(G)$ claimed in Theorem~\ref{thm:entangled-bias} is a direct corollary of Theorem~\ref{thm:os-gt}.

\pnote{Our way of doing things using GSVD seems to replace Pisier's ``trick of independent interest" in the proof of his Prop 18.2}
For the benefit of the interested reader, we conclude this section by briefly explaining how Theorem~\ref{thm:os-gt} can be deduced from results appearing in the literature. In the first and main step we modify the supremum in the left-hand side of~\eqref{eq:os-gt-p-2} which, to recall, is taken over the set of elements in 
\[
{\mathcal W} = \Big\{ (\vec{X}_R,\vec{X}_C,\vec{Y}_R,\vec{Y}_C) \in (\matv{d}{\C^n})^4 ~|~ d \ge 1, ~~ \vec{X}_R\odot \vec{Y}_C = \vec{X}_C\odot \vec{Y}_R \Big\}
\]
satisfying~\eqref{eq:os-gt-p-cons-2}. Consider now the set
\begin{align*}
{\mathcal W'} = \Big\{& (\vec{X}_R,\vec{X}_C,\vec{Y}_R,\vec{Y}_C) \in (\matv{d}{\C^n})^4 ~|~ d \ge 1, \\
& \quad \forall i \in \{1,\ldots,d\}, ((\vec{X}_R)_i,(\vec{Y}_R)_i) \text{~is ``non-negatively proportional" to~} ((\vec{X}_C)_i,(\vec{Y}_C)_i) \Big\},
\end{align*}
where non-negatively proportional means that one is obtained as the product of the other by a non-negative number (or equivalently, that either the latter is a product of the former by a positive number, or at least one of the two is zero).
It is easy to check that ${\mathcal W'} \subseteq {\mathcal W}$ and hence if we modify the supremum to be over the elements of ${\mathcal W'}$ (satisfying~\eqref{eq:os-gt-p-cons-2}), it is not greater than the original one. We claim that the modified supremum is in fact equal to the original one. To show this, we first describe the so-called generalized singular value decomposition (see also~\cite[Theorem 4.2.2]{Bjorck96book} for the proof) and then derive from it a claim.

\begin{fact}[\cite{vanLoan75,Paige80}]\label{fact:gsvd} Let $A_1$ and $A_2$ be two $n\times d$ matrices for some $n\leq d$. Let $k$ be the rank of $\begin{pmatrix} A_1 & A_2 \end{pmatrix}$. Then there exist $d\times d$ unitaries $U_1,U_2$, an $n\times k$ matrix $R$ of full column rank, and non-negative diagonal matrices $D_1, D_2$ of dimension $k\times k$ such that $D_1^2 + D_2^2 = \Id$ satisfying
\begin{align}\label{eq:gsvd}
 A_1 U_1 = R \begin{pmatrix} D_1 & 0_{k\times (d-k)} \end{pmatrix} \qquad\text{and}\qquad A_2 U_2 = R \begin{pmatrix} D_2 & 0_{k\times (d-k)} \end{pmatrix}.
\end{align}
\end{fact}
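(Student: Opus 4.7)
The plan is to reduce the whole claim to a single ordinary SVD together with a normalization step; this yields a three-step argument whose key insight is that after a suitable normalization the two matrices $A_1,A_2$ satisfy an identity which forces the diagonals to obey $D_1^2+D_2^2 = I_k$ automatically.

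First, I would use the rank hypothesis to peel off a common left factor: since $(A_1\mid A_2)$ has rank $k$, the column space $\mathrm{range}(A_1)+\mathrm{range}(A_2)\subseteq \C^n$ has dimension $k$, so I pick $Q\in\C^{n\times k}$ with orthonormal columns spanning it and write $A_i = Q B_i$ for some $B_i\in\C^{k\times d}$. By construction $(B_1\mid B_2)\in\C^{k\times 2d}$ has full row rank $k$, which makes the Gram matrix $B_1B_1^\dagger+B_2B_2^\dagger$ invertible. I would then set $P := (B_1B_1^\dagger+B_2B_2^\dagger)^{-1/2}$ and $\tilde B_i := PB_i$, so that $\tilde B_1\tilde B_1^\dagger + \tilde B_2\tilde B_2^\dagger = I_k$. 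This normalization is the heart of the argument.

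Next, I would apply the ordinary SVD to $\tilde B_1$ to write $\tilde B_1 = \tilde U_1(D_1\mid 0)U_1^\dagger$ with $\tilde U_1\in\C^{k\times k}$ and $U_1\in\C^{d\times d}$ unitary and $D_1\in\C^{k\times k}$ diagonal and non-negative. Setting $R := QP^{-1}\tilde U_1$ (which has full column rank because $Q$ is an isometry, $P^{-1}$ is invertible and $\tilde U_1$ is unitary) immediately gives $A_1 U_1 = R(D_1\mid 0)$. Conjugating the normalization identity by $\tilde U_1$ gives $(\tilde U_1^\dagger\tilde B_2)(\tilde U_1^\dagger\tilde B_2)^\dagger = I_k - D_1^2$; since the left-hand side is positive semidefinite, the diagonal matrix $I_k - D_1^2$ is non-negative, and I can let $D_2$ be its non-negative diagonal square root, which enforces $D_1^2+D_2^2 = I_k$.

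It remains to construct $U_2\in\C^{d\times d}$. The identity $(\tilde U_1^\dagger\tilde B_2)(\tilde U_1^\dagger\tilde B_2)^\dagger = D_2^2$ says that the rows of $C := \tilde U_1^\dagger\tilde B_2\in\C^{k\times d}$ are mutually orthogonal with the squared norm of the $j$-th row equal to $(D_2)_{jj}^2$. I would then define $U_2$ by placing $c_j^\dagger/(D_2)_{jj}$ as its $j$-th column for every $j$ with $(D_2)_{jj}>0$ (where $c_j$ denotes the $j$-th row of $C$) and completing the remaining columns to any orthonormal basis of $\C^d$. A direct computation then gives $CU_2 = (D_2\mid 0)$, whence $A_2U_2 = QP^{-1}\tilde U_1 CU_2 = R(D_2\mid 0)$, matching the second half of~\eqref{eq:gsvd}. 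The only real subtlety, and the one place where the argument could plausibly go wrong, is ensuring that the \emph{same} $R$ works for both $A_1$ and $A_2$; that is precisely what the normalization step buys us, and without it the diagonal identity $D_1^2+D_2^2=I_k$ would simply fail.
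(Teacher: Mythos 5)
Your proof is correct. Note that the paper does not actually prove Fact~\ref{fact:gsvd}; it only cites the literature (van Loan, Paige--Saunders, and \cite[Theorem 4.2.2]{Bjorck96book}) for the proof, so there is no in-paper argument to compare against. Your derivation is a clean, self-contained version of the standard one: the factorization $A_i = QB_i$ through the joint column space, the normalization $P=(B_1B_1^\dagger+B_2B_2^\dagger)^{-1/2}$ forcing $\tilde B_1\tilde B_1^\dagger+\tilde B_2\tilde B_2^\dagger=I_k$, a single SVD of $\tilde B_1$, and then the observation that $CC^\dagger=I_k-D_1^2$ being diagonal makes the rows of $C=\tilde U_1^\dagger\tilde B_2$ orthogonal, so $U_2$ can be assembled by normalizing those rows and completing to an orthonormal basis. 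All the small points check out: $k\le n\le d$ so the block shapes are consistent; $R=QP^{-1}\tilde U_1$ has full column rank as a product of an isometry, an invertible matrix, and a unitary; the columns of $U_2$ corresponding to indices with $(D_2)_{jj}=0$ (where $c_j=0$) and the columns beyond $k$ are handled correctly by the orthonormal completion, giving $CU_2=(D_2\mid 0)$ exactly. This route in effect inlines the relevant piece of the CS decomposition rather than invoking it, which makes the argument more elementary than the usual textbook treatment while proving exactly the statement needed.
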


\begin{claim}\label{clm:gsvd}
Let $A_1$, $A_2$, $B_1$, and $B_2$ be $n\times d$ matrices satisfying $A_1 B_1^\dagger = A_2 B_2^\dagger$. Then for some $d' \ge d$ there are $d \times d'$ isometries $V_1$, $V_2$ (i.e., matrices with orthonormal rows) such that if we denote by $(A)_i$ the $i$th column of a matrix $A$, then for all $i \in \{1,\ldots,d'\}$, $((A_1 V_1)_i,(B_2 V_2)_i)$ is non-negatively proportional to $((A_2 V_2)_i,(B_1 V_1)_i)$.
\end{claim}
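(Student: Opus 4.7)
The plan is to apply the generalized SVD of Fact~\ref{fact:gsvd} to the pair $(A_1, A_2)$, obtaining $d\times d$ unitaries $U_1, U_2$, an $n\times k$ matrix $R$ of full column rank, and $k\times k$ non-negative diagonal matrices $D_1, D_2$ with $D_1^2 + D_2^2 = \Id_k$, such that $A_j U_j = R\begin{pmatrix} D_j & 0_{k\times(d-k)} \end{pmatrix}$ for $j = 1,2$. The first attempt is to take $V_1 = U_1$, $V_2 = U_2$. Writing $\tilde{B}_j := B_j V_j$, the hypothesis $A_1 B_1^\dagger = A_2 B_2^\dagger$ is preserved under the unitaries, giving $R\begin{pmatrix} D_1 & 0\end{pmatrix}\tilde{B}_1^\dagger = R\begin{pmatrix} D_2 & 0\end{pmatrix}\tilde{B}_2^\dagger$; left-multiplying by the left-inverse of $R$ (which exists since $R$ has full column rank) yields the key identity
\[
(D_1)_{ii}\,(\tilde{B}_1)_i \,=\, (D_2)_{ii}\,(\tilde{B}_2)_i\qquad\text{for all }i \in \{1,\ldots,k\},
\]
where $(\cdot)_i$ denotes the $i$th column.

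With this choice of $V_j$ and for $i\le k$, the $i$th column of $A_j V_j$ equals $(D_j)_{ii} R_i$, where $R_i$ is the $i$th column of $R$. Non-negative proportionality of $((A_1 V_1)_i, (B_2 V_2)_i)$ and $((A_2 V_2)_i, (B_1 V_1)_i)$ then follows by a short case analysis: if both $(D_1)_{ii}, (D_2)_{ii} > 0$, the positive ratio $(D_1)_{ii}/(D_2)_{ii}$ works uniformly on both coordinates; if $(D_1)_{ii}=0$, the identity forces $(\tilde{B}_2)_i = 0$ so the first pair vanishes; and $D_1^2+D_2^2 = \Id_k$ rules out both $(D_j)_{ii}$ vanishing simultaneously.

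The main obstacle is what to do for the last $d-k$ columns of $V_1, V_2$, which lie in the respective null spaces of $A_1, A_2$: both $(A_1 V_1)_i$ and $(A_2 V_2)_i$ vanish there, but the $B$-entries $(\tilde{B}_1)_i, (\tilde{B}_2)_i$ are essentially arbitrary and need not be non-negatively proportional. To decouple these entries, I would enlarge to $d' = d + (d-k) = 2d-k$ and split the ``null'' columns into two groups. In the first group of $d-k$ new columns, place the last $d-k$ columns of $U_1$ into $V_1$ and zero columns into $V_2$; in the second group, reverse the roles, placing the last $d-k$ columns of $U_2$ into $V_2$ and zero columns into $V_1$. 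The row-isometry property $V_j V_j^\dagger = \Id_d$ is preserved because zero columns contribute nothing, and each $V_j$ still retains all $d$ columns of $U_j$ somewhere in its layout, so $V_j V_j^\dagger = U_j U_j^\dagger = \Id_d$. At each such new column one of the two pairs in the proportionality condition is exactly $(0,0)$, making non-negative proportionality trivial. Combined with the first $k$ columns, this completes the construction with $d' = 2d - k \ge d$.
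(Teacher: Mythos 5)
Your argument is correct and is essentially the paper's own proof: apply the generalized SVD of Fact~\ref{fact:gsvd} to $(A_1,A_2)$, cancel the full-column-rank factor $R$ to get $(D_1)_{ii}(\tilde B_1)_i=(D_2)_{ii}(\tilde B_2)_i$ on the first $k$ columns (where $D_1^2+D_2^2=\Id$ prevents both weights from vanishing), and then decouple the remaining $d-k$ ``null'' columns by splitting them into two disjoint blocks padded with zero columns, giving $d'=2d-k$ exactly as in the paper. The only detail you omit is that Fact~\ref{fact:gsvd} is stated for $n\le d$, so when $n>d$ one must first append zero columns to all four matrices (itself an isometry that preserves $A_1B_1^\dagger=A_2B_2^\dagger$), a one-line reduction the paper makes explicitly.
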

\begin{proof}
First, we can assume that $d \ge n$, as otherwise we can use the isometry to append zero coordinates. Now apply Fact~\ref{fact:gsvd} to $A_1$ and $A_2$, and let $U_1$, $U_2$, $R$, $D_1$, $D_2$ be the resulting matrices. Our assumption then implies that
$$ R \begin{pmatrix} D_1 & 0_{k\times (d-k)} \end{pmatrix} (B_1 U_1)^\dagger = R \begin{pmatrix} D_2 & 0_{k\times (d-k)} \end{pmatrix} (B_2 U_2)^\dagger,$$
which, since $R$ is of full column rank, implies 
$$ (B_1 U_1) \begin{pmatrix} D_1 \\ 0_{(d-k) \times k} \end{pmatrix} = (B_2 U_2) \begin{pmatrix} D_2 \\ 0_{(d-k) \times  k} \end{pmatrix} .$$
Together with~\eqref{eq:gsvd}, we obtain
$$ \begin{pmatrix} A_2 U_2 \\ B_1 U_1 \end{pmatrix} \begin{pmatrix} D_1 \\ 0_{(d-k) \times k} \end{pmatrix} = \begin{pmatrix} A_1 U_1 \\ B_2 U_2 \end{pmatrix}  \begin{pmatrix} D_2 \\ 0_{(d-k) \times  k} \end{pmatrix},$$
and therefore, since $D_1$ and $D_2$ are never both zero at the same location, $((A_1 U_1)_i,(B_2 U_2)_i)$ is non-negatively proportional to $((A_2 U_2)_i,(B_1 U_1)_i)$ for all $i \in \{1,\ldots,k\}$. For $i>k$, observe that both $(A_1 U_1)_i$ and $(A_2 U_2)_i$ are zero. Therefore, we can complete the proof by taking $V_1 = U_1 U'_1$, $V_2 = U_2 U'_2$, for the isometries $U'_1, U'_2$ defined as 
$$ U'_1 = \begin{pmatrix} \Id & 0 & 0 \\ 0 & \Id & 0 \end{pmatrix}, 
   \qquad U'_2 = \begin{pmatrix} \Id & 0 & 0 \\ 0 & 0 & \Id \end{pmatrix}  ,$$
where the dimensions of the three column blocks are $k,d-k,d-k$ respectively and those of the row blocks are $k,d-k$. 
\end{proof}

To show that the modified supremum is equal to the original one as claimed, take any tuple $(\vec{X}_R,\vec{X}_C,\vec{Y}_R,\vec{Y}_C) \in {\mathcal W}$ and apply Claim~\ref{clm:gsvd} with $A_1$, $A_2$, $B_1$, and $B_2$ taken to be the $n^2 \times d$ matrices whose rows contain the vector entries of $\vec{X}_R$, $\vec{X}_C$, $\vec{Y}_C$, and $\vec{Y}_R$ respectively. The condition $A_1 B_1^\dagger = A_2 B_2^\dagger$ holds because it is equivalent to $\vec{X}_R\odot \vec{Y}_C = \vec{X}_C\odot \vec{Y}_R$. The claim then shows that there exist two isometries, such that if we apply one to the vector entries of $\vec{X}_R$ and of $\vec{Y}_C$, and the other to the vector entries of $\vec{X}_C$ and of $\vec{Y}_R$, then the resulting tuple $(\vec{X}'_R,\vec{X}'_C,\vec{Y}'_R,\vec{Y}'_C)$ is in ${\mathcal W'}$. It remains to notice that since all we did was apply isometries, the new tuple achieves the same goal function and still satisfies the constraint~\eqref{eq:os-gt-p-cons-2}. 

In the second step, consider the set ${\mathcal W''}$ defined like ${\mathcal W'}$ except that we require \emph{positive} proportionality instead of a non-negative one (which means that one pair can be obtained from the other by multiplication by a positive number). By continuity of the 
goal function and the constraint~\eqref{eq:os-gt-p-cons-2}, it is clear that taking the supremum over ${\mathcal W''}$ is again equivalent to the original form. 

The resulting equivalent form of Theorem~\ref{thm:os-gt} (with the supremum over ${\mathcal W''}$ instead of over ${\mathcal W}$) is essentially the way the theorem appears in the literature, although using different terminology, as we now explain in more detail. As we saw in the previous section, $M$ can be equivalently thought of as a bilinear form $\varphi: M_n(\C) \times M_n(\C) \to \C$. One can also think equivalently of $M$ as a linear map $u$ defined by $\langle u x, y \rangle = \varphi(x,y)$ for all $x,y \in M_n(\C)$, and this is the terminology usually adopted in~\cite{PisierGT}. The supremum on the right-hand side of~\eqref{eq:os-gt-p-2} is known in operator space theory as the \emph{completely bounded} norm of $u$, or equivalently, the \emph{jointly completely bounded} norm of $\varphi$.\footnote{In case one replaces $\Id_{\C^d\otimes \C^d}$ in that expression by the rank-one projector $\ket{\Psi_d^{me}}\bra{\Psi_d^{me}}$ on the maximally entangled state, one obtains the notion of \emph{tracial boundedness} introduced in~\cite{Itoh87} (see also~\cite{Blecher89} for a slightly different formulation). The corresponding norm on bilinear forms $M_n\times M_n\to \C$ is called the ``tracially completely bounded norm'' and denoted $\|\cdot \|_{tcb}$. One can check that, for Hermitian $M$, $\|M\|_{tcb} = \biasme(M)$.} This is most easily seen from Proposition 13.9 of~\cite{PisierGT}. It remains to use the equivalence shown in Proposition~18.2 of~\cite{PisierGT} between its items (i) and (ii) to conclude that our theorem is equivalent to (the finite dimensional case of) the main theorem of~\cite{HM08}, Theorem~1.1. (We remark that Pisier states in Theorem 18.1 of~\cite{PisierGT} a corollary of Theorem~1.1 of~\cite{HM08} which, if used in combination with his Proposition 18.2 implies  Theorem~\ref{thm:os-gt} with the worse constant of $4$.)


\bibliographystyle{alphaabbrvprelim}
\bibliography{oss}

\notesendofpaper

\end{document}